\setlist{noitemsep,topsep=1pt,parsep=1pt,partopsep=0pt,leftmargin=15pt}
\newlist{lenum}{enumerate}{2}
\setlist[lenum]{leftmargin=*, labelsep=0.5ex, itemsep=.5em} 
\newcommand{\myparagraph}{\textbf}
\renewcommand{\paragraph}{\subparagraph}
\renewcommand\subparagraph{\@startsection{subparagraph}{5}{\z@}%
                                       {1.25ex \@plus1ex \@minus .2ex}%
                                       {-1em}%
                                      {\sffamily\normalsize\bfseries}}
\newcommand{\RationalsPos}{{\mathbb Q}^+}
\newcommand{\domQPos}{\dom_{\RationalsPos}}
\newcommand{\SigmaB}{\Sigma_{\B}}
\newcommand{\SigmaBIn}{\Sigma_{\B,\In}}
\def\thm@space@setup{%
  \thm@preskip=1.4ex 
  \thm@postskip=1.4\thm@preskip 
}
\newcommand{\adom}{\textit{adom}}       
\newcommand{\adomExt}{\textit{adom}^*}        
\newcommand{\Inz}{\In^0}
\newcommand{\BPID}{\B=\tup{\P,\I,\D}}
\newcommand{\PID}{\tup{\P,\I,\D}}
\newcounter{myexample}[section]
\newcounter{firstexample}[section]
\renewcommand{\paragraph}{\subparagraph}
\renewcommand\subparagraph{\@startsection{subparagraph}{5}{\z@}%
                                       {1.25ex \@plus1ex \@minus .2ex}%
                                       {-1em}%
                                      {\sffamily\normalsize\bfseries}}
\newcommand{\decd}{Dec.}
\newcommand{\tauB}{\tau_\B}
\def\old@comma{,}
    \old@comma\discretionary{}{}{}%
\newcommand{\reg}{\textit{reg}}
\newcommand{\Breg}{\B_{\reg}}
\newcommand{\Preg}{\P_{\reg}}
\newcommand{\Ireg}{\I_{\reg}}
\newcommand{\Creg}{\C_{\reg}}
\newcommand{\Dreg}{\D_{\reg}}
\newcommand{\supremum}{\infty}
\newcommand{\observations}[1]{\subsubsection*{Observations:}}
\newcommand{\vocabulary}[1]{\subsubsection*{Vocabulary and Symbols:}}
\newcommand{\rules}[1]{\subsubsection*{Rules:}}
\newcommand{\problem}[1]{\subsubsection*{Problem:}}
\newcommand{\relations}[1]{\subsubsection*{Relations:}}
\newcommand{\programname}[1]{\subsubsection*{Program Name:}}
\newsavebox\myboxA
\newsavebox\myboxB
\newlength\mylenA
\newcommand*\xoverline[2][0.75]{%
    \sbox{\myboxA}{$\m@th#2$}%
    \setbox\myboxB\null
    \ht\myboxB=\ht\myboxA%
    \dp\myboxB=\dp\myboxA%
    \wd\myboxB=#1\wd\myboxA
    \sbox\myboxB{$\m@th\overline{\copy\myboxB}$}
    \setlength\mylenA{\the\wd\myboxA}
    \addtolength\mylenA{-\the\wd\myboxB}%
    \ifdim\wd\myboxB<\wd\myboxA%
       \rlap{\hskip 0.5\mylenA\usebox\myboxB}{\usebox\myboxA}%
    \else
        \hskip -0.5\mylenA\rlap{\usebox\myboxA}{\hskip 0.5\mylenA\usebox\myboxB}%
    \fi}
\newcommand{\Dig}{\textit{Dig}}
\newcommand{\PiSucc}{\Pi^\textit{succ}}
\newcommand{\tepsilon}{\bar \epsilon}
\newcommand{\tT}{\xoverline{T}}
\newcommand{\tk}{\bar k}
\newcommand{\tK}{\xoverline{K}}
\newcommand{\tW}{\xoverline{W}}
\newcommand{\tU}{\xoverline{U}}
\newcommand{\tX}{\xoverline{X}}
\newcommand{\tY}{\xoverline{Y}}
\newcommand{\ttr}{\bar t}
\newcommand{\tomega}{\bar \omega}
\newcommand{\tb}{\bar b}
\newcommand{\In}{\textit{In}}
\newcommand{\tuple}[1]{\bar{#1}}
\newcommand{\tu}{\tuple{u}}
\newcommand{\Place}{\textit{Place}}
\newcommand{\State}{\textit{State}}
\newcommand{\Done}{\textit{Completed}}
\newcommand{\Succ}{\textit{Succ}}
\newcommand{\Path}{\textit{Path}}
\newcommand{\Reach}{\textit{Reach}}
\newcommand{\Key}{\textit{Step}}
\newcommand{\TGuess}{\textit{Trans}}
\newcommand{\NonTGuess}{\textit{NotTrans}}
\newcommand{\InGuess}{\textit{Moved}}
\newcommand{\NonInGuess}{\textit{NotMoved}}
\newcommand{\First}{\textit{First}}
\newcommand{\Last}{\textit{Last}}
\newcommand{\Const}{\textit{Const}}
\newcommand{\Digit}{\textit{Digit}}
\newcommand{\SCC}{\mathit{SCC}}
\newcommand{\Moved}{\InGuess}
\newcommand{\NotMoved}{\NonInGuess}
\newcommand{\Trans}{\TGuess}
\newcommand{\NotTrans}{\NonTGuess}
\newcommand{\Completed}{\Done}
\newcommand{\grn}{\textit{gnd}}
\newcommand{\mdl}{\models_{\textit{brave}}}
\newcommand{\Qtest}{Q_{\textit{test}}}
\newcommand{\done}{\textit{Done}}
\newcommand{\CurrentR}{\textit{Current}_R}
\newcommand{\fail}{\textit{fail}_{\textit{MM}}}
\newcommand{\faili}{\textit{fail}_i}
\newcommand{\studyplan}{\textit{StudyPlan}}
\newcommand{\conditional}{\textit{Conditional}}
\newcommand{\accepted}{\textit{Pre-enrolled}}
\newcommand{\registered}{\textit{Registered}}
\newcommand{\course}{\textit{course}}
\newcommand{\program}{\textit{program}}
\newcommand{\first}{1$^\text{st}$}
\newcommand{\nth}[1]{#1$^\text{th}$}
\newcommand{\rra}[1]
{\renewcommand{\arraystretch}{#1}}
\newcommand{\tx}{{\bar x}} 
\newcommand{\ts}{{\bar s}} 
\newcommand{\td}{\bar d}   
\newcommand{\tc}{\bar c}   
\newcommand{\instable}{\textit{Instable}}        
\newcommand{\PiPIpocl}{\Pi_{\P,\I}^{\textsl{po,cl}}} 
\newcommand{\PiQtest}{\Pi_Q^\textsl{test}}
\newcommand{\PiPQposfr}{\Pi_{\P,Q}^{\textsl{po,fr}}} 
\newcommand{\PiPposfr}{\Pi_{\P}^{\textsl{po,fr}}} 
\newcommand{\PiPIQacyccl}{\Pi_{\P,\I}^{\textsl{ac,cl}}} 
\newcommand{\Picl}{\Pi_{\P}^{\textsl{cl}}} 
\newcommand{\PiPIQtest}{\Pi_{\P,\I,Q}^{\textsl{test}}} 
\newcommand{\PiPIQtestrowo}{\Pi_{\P,Q}^{{\textsl{test,ro}}}}
\newcommand{\PiPQrowofr}{\Pi_{\P,Q}^{\textsl{ro,fr}}} 
\newcommand{\PiProwofr}{\Pi_{\P}^{\textsl{ro,fr}}} 
\newcommand{\PiPQrowocl}{\Pi_{\P,Q}^{\textsl{ro,cl}}} 
\newcommand{\PiProwocl}{\Pi_{\P}^{\textsl{ro,cl}}} 
\newcommand{\PiPQrowo}{\Pi_{\P,Q}^{\textsl{ro}}}
\newcommand{\PiPIQtestrowoOpen}{\Pi_{\P,Q}^{{\textsl{test,ro,op}}}}
\newcommand{\Exec}{\textit{Exec}}   
\newcommand{\blank}
     {{\rule{0.5em}{0.5pt}\hspace{0.05em}}\xspace} 
\newcommand{\blankk}
    {\raisebox{0.8ex}{-}\kern-0.4em b} 
\theoremstyle{plain}
\newtheorem{proposition}[theorem]{Proposition}
\newcommand{\LE}[1]{E_{\text{\emph{#1}}}}
\newcommand{\LW}[1]{W_{\text{\emph{#1}}}}
\newcommand{\tlbl}[1]{\text{\emph{#1}}}
\newcommand{\ei}{\emph{(i)~}}
\newcommand{\eii}{\emph{(ii)~}}
\newcommand{\eiii}{\emph{(iii)~}}
\newcommand{\eiv}{\emph{(iv)~}}
\newcommand{\ev}{\emph{(v)~}}
\newcommand{\msccs}{mscCS}
\newcommand{\msceco}{mscEco}
\newcommand{\emcl}{compLogic}
\newcommand{\econ}{econ}
\newcommand{\transf}{\text{\emph{intl}}}
\newcommand{\dateconditional}{\nth{30} Sep}
\newcommand{\ord}{\text{\emph{reg}}}
\newcommand{\tstamp}{\T}
\newcommand{\act}[1]{(`#1')}
\newcommand{\qact}[1]{`#1'}
\newcommand{\col}{\colon}
\newcommand{\quotes}[1]{\lq\lq#1\rq\rq}         
\newcommand{\dom}{\mathit{dom}}
\newcommand{\true}{\textit{true}}
\newcommand{\tpl}[1]{\bar{#1}}				
\newcommand{\eat}[1]{}
 \newcommand{\B}{\mathcal{B}}
\newcommand{\C}{\mathcal{C}} \newcommand{\D}{\mathcal{D}}
\newcommand{\I}{\mathcal{I}}
 \renewcommand{\P}{\mathcal{P}}
 \newcommand{\T}{\mathcal{T}}
\newcommand{\defterm}[1]{\mbox{\underline{\it\smash{#1}\vphantom{\lower.1ex\hbox
{x}}}}}
\newcommand{\la}{\leftarrow}
\newcommand{\set}[1]{\{#1\}}                      
\newcommand{\tup}[1]{\langle #1\rangle}            
\newcommand{\undec}{\textsc{Undec.}}
\newcommand{\ACz}{\textsc{AC}\ensuremath{^0}\xspace}
\newcommand{\PTIME}{\textsc{PTime}\xspace}
\newcommand{\PSPACE}{\textsc{PSpace}\xspace}
\newcommand{\EXPSPACE}{\textsc{ExpSpace}\xspace}
\newcommand{\EXPTIME}{\textsc{ExpTime}\xspace}
\newcommand{\NEXPTIME}{\textsc{NExpTime}\xspace}
\newcommand{\coNEXPTIME}{\textsc{co-NExpTime}\xspace}
\newcommand{\NP}{\textsc{NP}\xspace}
\newcommand{\coNP}{\textsc{co-NP}\xspace}
\newcommand{\PiPTWO}{\Pi^\textsc{P}_2\xspace}
\theoremstyle{definition} 
\newcommand{\terminate}{{\textit{end\/}}}
\newcommand{\start}{{\textit{start\/}}}
\newcommand{\DABP}
	{MDBP\xspace}    
\newcommand{\DATALOG}{\text{Datalog}\xspace}   
\newcommand{\DATALOGNEG}
	{\text{Datalog}^{\neg}\xspace}
\newcommand{\arity}{\xspace\textit{arity\/}\xspace}
\newcommand{\dummy}{\xspace\textit{dummy}\xspace}
\begin{document}
\doparttoc 
\faketableofcontents 




\EventShortName{}%
\title{Query Stability in Monotonic Data-Aware Business Processes
[Extended Version]%
\footnote{This report is the extended version of a paper accepted at the 19th International Conference on Database Theory (ICDT 2016),
March 15-18, 2016 - Bordeaux, France.
}}


\author{Ognjen Savkovi\'c}
\author{Elisa Marengo}
\author{Werner Nutt}
\affil{Free University of Bozen-Bolzano\\
       39100 Bolzano, Italy\\
\texttt{\{ognjen.savkovic,elisa.marengo,werner.nutt\}@unibz.it}}

\Copyright{Ognjen Savkovic and Elisa Marengo and Werner Nutt}

\subjclass{H.2.4 [Systems]: Relational databases}
\keywords{Business Processes, Query Stability}


\maketitle


\begin{abstract}
Organizations continuously accumulate data,
often according to some business processes.
If one poses a query over such data for decision support, 
it is important to know whether the query is \emph{stable,} that is,
whether the answers will stay the same or may change in the future
because business processes may add further data.
We investigate query stability for conjunctive queries.
To this end, we define a formalism that combines 
an explicit representation of the control flow of a process 
with a specification of how data is read and inserted into the database.
We consider different restrictions of the process model 
and the state of the system, such as
negation in conditions,
cyclic executions,
read access to written data,
presence of pending process instances, and
the possibility to start fresh process instances. 
We identify for which facet combinations
stability of conjunctive queries is decidable and provide encodings
into variants of Datalog
that are optimal with respect to the worst-case complexity of the problem.

\end{abstract}


\section{Introduction}

Data quality focuses on understanding how much data 
is fit for its intended use.
This problem has been investigated in database theory, 
considering aspects such as consistency, currency, and completeness%
~\cite{Wenfei-Consistency-VLDB-07, Fan:Et:Al-Currency-TODS,Razniewski-Completeness-VLDB-11}.
A question that these approaches 
consider only marginally is 
where data originates and how it evolves.

{Although} in general a database may evolve in arbitrary ways,
often data are generated according to some 
business process, implemented in an information system 
that accesses the DB.
We believe that analyzing how business processes generate 
data allows one to gather additional information on their fitness 
for use.
In this work, we focus on a particular aspect of data quality, that is
the problem whether a business process that reads from and writes into
a database can affect the answer of a query or whether the answer will
not change as a result of the process. We refer to this problem as
query stability.

For example, consider a student registration process at a university. 
The university maintains a relation $\textit{Active}\, (\textit{course})$ with
all active courses
and a table $\textit{Registered}\,(\textit{student},\textit{course})$
that records which students have been registered for which course. 
Suppose we have a process model that does not allow processes to write 
into $\textit{Active}$  and 
which states that before a student
is registered for a course,
there must be a check that the course is active.
Consider the query $Q_{\textit{agro}}$ that asks for all students registered
for the MSc in Agronomics (\textit{mscAgro}). 
If \textit{mscAgro} does not occur in $\textit{Active}$, then no student can be registered
and the query is stable.
Consider next the query $Q_{\textit{courses}}$ that asks for all courses
for which some student is registered.
If for each active course there is at least one student registered,
then again the query is stable,
otherwise, it is not stable because some student could register for 
a so far empty active course.

In general, query results can be affected by the activities of processes in several ways.
Processes may store data from outside in the database,
e.g., the application details submitted by students are stored in the database.
Processes may not proceed because data does not satisfy a required condition,
e.g., an applicant cannot register because his degree is not among the recognized degrees.
Processes may copy data from one part of a database to another one,
e.g., students who passed all exams are automatically registered for the next year.
Processes may interact with each other in that one process writes data that is
read by another one,
e.g., the grades of entry exams stored by the student office
are used by academic admission committees.
Finally, some activities depend on deadlines
so that data cannot change before or  after a deadline.


\paragraph*{Approach}
Assessing query stability by leveraging on processes
gives rise to several research questions.
(1) What is a good model to represent processes,
data and the interplay among the two? 
%
%
(2) How can one reason on query stability in such a model and how feasible is that?
(3) What characteristics of the model may complicate reasoning?
%


(1) \textbf{Monotonic Data-Aware Business Process Model.}
Business processes are often specified
in standardized languages, such as BPMN \cite{bpmn}, and 
organizations rely on engines that can run those processes 
(e.g., Bonita \cite{bonita}, Bizagi \cite{bizartifact}). 
However, in these systems how the data is manipulated by the 
process is implicit in the code.
Current theory
approaches either focus on \emph{process} modeling,
representing the data in a limited way
(like in Petri Nets~\cite{jensen-colouredPN-2009}),
or adopt a \emph{data} perspective, 
leaving the representation of the process flow 
implicit~\cite{artifact-formal-analysis,Calvanese-DCDS-PODS-2013,Deutsch-WebData-PODS-2004}.
We introduce a formalism called Monotonic Data-aware Business Processes (\DABP{s}).
In \DABP{s} the process is represented as a graph.
The interactions with an underlying database are expressed by annotating
the graph with information on which data is read from the database and which
is written into it.
%
In {\DABP{s} it is possible that several process instances execute the process.
New information {(fresh data)} can be brought into the process by starting 
a fresh process instance}
(Section~\ref{sec:dabp}).
%
\DABP{s} are 
monotonic in that data can only be inserted, but not deleted or updated.

(2) \textbf{Datalog Encodings.} 
Existing approaches 
aim at the verification of general (e.g.\ temporal) properties,
for which reasoning is typically
intractable~\cite{Calvanese-DCDS-PODS-2013,ltl:verification:hull:vianu-icdt2009-decidable,Deutsch-WebData-PODS-2004}.
In contrast, we study a specific property, namely
stability of conjunctive queries (Section~\ref{sec:query-stability-problem}),
over processes that only insert data.
This allows us to map the problem to the one of query answering in Datalog.
%
The encoding generates all maximal representative extensions of the database that can be produced 
in the process executions
and checks if any new query answer is produced.
We prove that our approach is optimal w.r.t.~worst case complexity
in the size of the data, query, process model and in the size of the entire input. 

%
%
%
%
%

(3) \textbf{\DABP Variants.}
When modeling processes and data, checking properties often
becomes highly complex or undecidable.
While other approaches in database theory aim at exploring
the frontiers of decidability by restricting the possibility to introduce fresh data,
we adopt a more bottom-up approach and focus on a simpler problem
that can be approached by established database techniques.
To understand the sources of complexity of our reasoning problem, 
we identify \emph{five restrictions} of \DABP{s}: 
\ei negation is (is not) allowed in process conditions;
\eii the process can (cannot) start with pending instances;
\eiii a process can (cannot) have cycles;
\eiv a process can (cannot) read from relations that it can write;
\ev new instances can (cannot) start at any moment.
Combinations of these restrictions define different variants
of \DABP{s}, for which we investigate the stability problem
(Sections~\ref{sec:query-stability-problem}--\ref{sec:rowo}).

Related work and conclusions end the paper 
(Sections~\ref{ses:related:work},~\ref{sec:discussion:conlcusion}).
%
A technical report, with complete encodings and proofs 
can be found in  \cite{DABPS-KRDB:Report}.

A preliminary version of this paper was presented at the AMW workshop~\cite{AMW}.
 
\section{Monotonic Data-Aware Business Processes}
           \label{sec:dabp}
\emph{Monotonic Data-aware Business Processes} (\DABP{s}) are the formalism by
which we represent business processes and the way they manipulate data.
We rely on this formalism to perform reasoning on query stability.

\paragraph*{Notation}
We adopt standard notation from databases.
In particular, we assume an infinite set of relation symbols,
an infinite set of constants $\dom$ as the \emph{domain of values},
and the positive rationals $\RationalsPos$  as the \emph{domain of timestamps}.
A schema is a finite set of relation symbols.
A \emph{database instance} is a finite set of ground atoms, called \emph{facts},
over  a schema and the \emph{domain} $\domQPos = \dom \cup \RationalsPos$.
We use upper-case letters for variables, 
lower-case for constants,
and overline for tuples, e.g., $\tc$.

An \DABP is a pair $\B=\tup{\P,\C}$, 
consisting of a \emph{process model} $\P$ and a \emph{configuration} $\C$.
The process model defines how and under which conditions actions change data 
stored in the configuration.
The configuration is dynamic,
consisting of \ei a database,
and \eii the process instances.

\paragraph*{Process Model}

The process model is a pair $\P=\tup{N,L}$,
comprising a directed multigraph~$N$, the \emph{process net},
and a \emph{labeling function} $L$, defined on the edges of $N$.

The net $N = \tup{P,T}$ consists of a set of vertices $P$, the \emph{places},
and a multiset of edges $T$, the \emph{transitions}.
A process instance traverses the net, starting from the distinguished place $\start$.
The transitions emanating from a place represent
alternative developments of an instance.

A process instance has input data associated with it,
which are represented by a fact $\In(\tc,\tau)$,
where $\In$ is distinguished relation symbol,
$\tc$ is a tuple of constants from $\domQPos$, 
and $\tau \in \RationalsPos$ is a time stamp
that records when the process instance was started.
We denote with $\SigmaBIn$ and $\SigmaB$ the schemas of $\B$
with and without $\In$, respectively.

The {labeling function\/} $L$ assigns to every transition $t\in T$ 
a pair $L(t)=(E_t,W_t)$.
Here, $E_t$, the \emph{execution condition,} 
is a Boolean query over $\SigmaBIn$ and 
$W_t$, the \emph{writing rule,} is a rule
$R(\tu) \la B_t(\tu)$
whose head is a relation of $\SigmaB$
and whose body is a $\SigmaBIn$-query that has the same arity as the head relation.
Evaluating $W_t$ over a $\SigmaBIn$-instance $\D$ results in 
the set of facts 
$W_t(\D) = \set{R(\tpl c) \mid \tpl c \in B_t(\D)}$. 
Intuitively, 
$E_t$ specifies in which state of the database 
which process instance can perform the transition $t$, and 
$W_t$ specifies which new information is (or can be)
written into the database when performing $t$.
In this paper we assume that $E_t$ and $B_t$ are conjunctive queries, 
possibly with negated atoms and inequality atoms with \quotes{$<$}
and \quotes{$\leq$} involving timestamps.
We assume inequalities to consist of one constant and one variable,
like $X < \mbox{\first\ Sep}$.
We introduce these restricted inequalities so that we can model deadlines,
without introducing an additional source of complexity for reasoning.

\paragraph*{Configuration}
This component models the dynamics of an \DABP.
Formally, a configuration is a triple $\tup{\I, \D,\tau}$, consisting of
a part $\I$ that captures the process instances, 
a database instance $\D$ over $\SigmaB$, and
a timestamp $\tau$, the current time.
The instance part, again, is a triple $\I = \tup{ O, M_{\In}, M_P }$,
where $O =\set{o_1, ... , o_k}$ is a set of objects, called \emph{process instances},
and $M_{\In}$, $M_P$ are mappings,
associating each $o\in O$ with
a fact $M_\In(o) = \In(\tc,\tau)$, its input record, and
a place $M_P(o) \in P$, its current, respectively.

The input record is created when the instance starts and cannot be changed later on. 
While the data of the input record may be different from the constants in the database,
they can be copied into the database by writing rules.
A process instance can see the entire database, but only its own input record.

For convenience, we also use the notation
$\B = \tup{ \P, \I, \D, \tau}$,
$\B = \tup{ \P, \I, \D }$ (when $\tau$ is not relevant), or
$\B = \tup{ \P, \D }$ (for a process that is initially without running instances).



\paragraph*{Execution of an \DABP}
Let $\B=\tup{\P,\C}$ be an \DABP,
with current configuration $\C=\tup{\I,\D,\tau}$.
There are two kinds of \emph{atomic execution} steps of an \DABP: 
\ei the \emph{traversal} of a transition 
by an instance
and 
\eii the \emph{start} of a new instance.

\smallskip

\noindent
\textbf{\ei Traversal} of a transition.
Consider an instance $o\in O$ 
with record $M_\In(o)=\In(\bar c,\tau')$, 
currently at place $M_P(o)=q$.
Let $t$ be a transition from $q$ to $p$, with execution condition $E_t$.
Then 
$t$ is \emph{enabled for $o$,} i.e., $o$ can \emph{traverse} $t$, 
if $E_t$ evaluates to $\true$ over the database 
$\D \cup \set{\In(\tc, \tau')}$.
Let $W_t \colon R(\tpl u) \la B_t(\tpl u) $ 
be the writing rule of $t$.
Then the effect of $o$ traversing $t$ is the transition from 
$\C=\tup{\I,\D, \tau}$ to a new configuration $\C'=\tup{\I',\D',\tau}$,
such that
\ei the set of instances $O$ and the current time $\tau$ are the same;
\eii the new database instance is {$\D'= \D \cup W_t(\D \cup \set{\In(\bar c,\tau')})$}, and
\eiii $\I = \langle O, M_\In, M_P \rangle$ is updated to
			$\I' = \langle O, M_\In, M'_P \rangle$ reflecting the
			change of place for the instance $o$, that is 
      $M'_P(o)=p$ and $M'_P(o') = M_P(o')$
      for all other instances $o'$.
\smallskip

\noindent
\textbf{\eii Start} of a new instance. 
Let $o'$ be a fresh instance and let $\In(\bar c',\tau')$ be an $\In$-fact
with $\tau'\geq\tau$,
the current time of $\C$.
%
The result of starting $o'$ with info $\tpl c'$ at time~$\tau'$ is
the configuration 
$\C'=\tup{\I',\D, \tau'}$ where $\I'=\langle O', M'_\In, M'_P \rangle$
such that
\ei the database instance is the same as in $\C$,
\eii the set of instances $O'=O\cup\set{o'}$ is augmented by $o'$, and
\eiii the mappings $M'_\In$ and $M'_P$ are 
extensions of $M_\In$ and $M_P$, resp.,
obtained by defining 
      $M'_\In(o')=\In(\bar c',\tau')$
      and $M'_P(o')=\start$.

\smallskip

An \emph{execution} $\Upsilon$ of $\B=\tup{\P,\C}$
is a finite sequence of configurations $\C_1,\ldots,\C_n$
\ei starting with $\C$ ($= \C_1$), where
\eii each 
$\C_{i+1}$ is obtained from $\C_{i}$ by an atomic execution step.
We denote $\Upsilon$ also with
$\C_1 \rightsquigarrow \dots \rightsquigarrow \C_n$.     
We say that the execution 
$\Upsilon$
\emph{produces} the facts
$A_1,\ldots,A_n$
if the database of the last configuration $\C_n$ in 
$\Upsilon$ contains $A_1,\ldots,A_n$.
Since at each step a new instance can start, or 
an instance can write new data,
\ei there are infinitely many possible executions, and 
\eii the database may grow in an unbounded way over time.


\begin{table*}[!t]
\rra{1.3}
	
	\begin{center}
	\begin{tabular}{@{}llcccccc@{}}
		\toprule


		\begin{minipage}[c]{6.0em}
			{\footnotesize \textbf{Defining}  } \\[-0.15em]
			{\footnotesize \textbf{Restrictions} }
		\end{minipage}		&
		\begin{minipage}[c]{6.0em}
			{\footnotesize \textbf{Optional}  } \\[-0.15em]
			{\footnotesize \textbf{Restrictions} }
		\end{minipage}	
		&  \footnotesize{\textbf{Data}}
		&  \footnotesize{\textbf{Instance}} 
		&  \footnotesize{\textbf{Process}}  	
		&  \footnotesize{\textbf{Query}}
		&  \footnotesize{\textbf{Combined}}	
		&  \footnotesize{\textbf{\hspace*{-1ex}Sect.}}			
		\\
		\midrule

		---
		& \footnotesize{$\textit{fresh}^\dagger$,\textit{acyclic}}
		& {\scriptsize $\undec$}	
		& {\scriptsize $\undec$}
		& {\scriptsize $\undec$}
		& {\scriptsize $\undec$}
		& {\scriptsize $\undec$}   	
		& {\scriptsize \ref{sec:undecidable}} 
		\\
		
				

		\footnotesize{\textit{closed}}
		& ---
		& {\scriptsize  $\coNP$   }	
		& {\scriptsize  $\coNP$   }	
		& {\scriptsize $\coNEXPTIME$}
		& {\scriptsize $\PiPTWO$}
		& {\scriptsize $\coNEXPTIME$}   		
		& {\scriptsize \ref{subsubsec:normal:cyclic:arbitrary}}
		\\

		\footnotesize{\emph{positive}}
		& \footnotesize{\emph{closed}}
		& {\scriptsize  $\PTIME$   }	
		& {\scriptsize  $\coNP$   }	
		& {\scriptsize $\EXPTIME$}
		& {\scriptsize $\PiPTWO$}
		& {\scriptsize $\EXPTIME$}   	
		& {\scriptsize \ref{sec-positive:closed}, \ref{sec:DABP-Datalog-power-short}}   		
		\\   

		\footnotesize{\emph{positive}}
		& \footnotesize{$\textit{fresh}^\dagger$, \textit{acyclic}}
			& {\scriptsize  $\PTIME$ }	
		& {\scriptsize  $\coNP$  }	
		& {\scriptsize $\EXPTIME$}
		& {\scriptsize $\PiPTWO$}
		& {\scriptsize $\EXPTIME$}   	
		& {\scriptsize \ref{sec:DABP-Datalog-power-short}}   		
		\\		

			\footnotesize{\textit{closed, acyclic}}
		& \footnotesize{\textit{positive}}
		& {\scriptsize  in $\ACz$      }	
		& {\scriptsize  $\coNP$  }	
		& {\scriptsize $\PSPACE$}
		& {\scriptsize $\PiPTWO$}
		& {\scriptsize $\PSPACE$}   			
		& {\scriptsize \ref{sec:acyclic:arbitrary}} 
		\\
		\footnotesize{\emph{rowo}}
		&	\footnotesize{${\ast}^\dagger$}
		& {\scriptsize  in $\ACz$      }	
		& {\scriptsize  in $\ACz$   }	
		& {\scriptsize $\coNP$}
		& {\scriptsize $\PiPTWO$}
		& {\scriptsize $\PiPTWO$}   	
		& {\scriptsize \ref{sec:rowo}} 
		\\
		\bottomrule

	\end{tabular}
	\end{center}
	\vspace*{-2ex}
	\caption{Computational complexity of query stability in \DABP{s}.
                 The results in a row hold for the class of \DABP{s} satisfying 
                 the defining restrictions 
                 and for the subclasses satisfying one
                 or more of the optional restrictions. 
 		 The results for all decidable variants indicate matching lower and upper bounds
                 (except for $\ACz$).
		 The $\ast$ indicates that the results for rowo hold for all non-trivial combinations of restrictions. 
		 All results for data, process, query and combined complexity of the decidable variants
		 hold already for singleton \DABP{s}.
		${^\dagger}$Note that, in all fresh variants instance complexity can be trivially decided in constant time (omitted in the table).
		}
	\label{table:complexityDABP-overall:new2}
	 \vspace*{-3ex}
\end{table*}%


\section{The Query Stability Problem}
                \label{sec:query-stability-problem}

In this section, we define the problem of query stability in \DABP{s}   
with its variants.

\begin{definition}[Query Stability]
  \label{def:query-stability}
Given $\B=\tup{\P,\C}$ with database instance $\D$,
a query $Q$, and a timestamp $\tau$,
we say that \emph{$Q$ is stable in $\B$ until $\tau$},
if for every execution 
$\C \rightsquigarrow \cdots \rightsquigarrow \C'$ in $\B$,
where $\C'$ has database $\D'$ and timestamp  $\tau'$ 	
such that $\tau'<\tau$,
it holds that 
\[ 
		Q(\D)=Q(\D').
\] 
If the query $Q$  is stable until time point $\supremum$,
we say it is \emph{globally stable}, 
or simply, \emph{stable}.
\end{definition}

The interesting question from an application view is:
\emph{Given an \DABP $\B$, a query $Q$, and a timestamp $\tau$, 
      is $Q$ stable in $\B$ until $\tau$?} 
Stability until a time-point $\tau$ can be reduced to global stability.
One can modify a given \DABP by adding a new \start\ place and 
connecting it to the old \start\ place via a transition 
that is enabled only for instances with timestamp smaller than $\tau$.
Then a query $Q$ is globally stable in the resulting \DABP iff 
in the original \DABP it is stable until $\tau$.

To investigate sources of complexity and provide suitable encodings into
Datalog, we identify five restrictions on \DABP{s}.


\begin{definition}[Restriction on \DABP{s} and \DABP Executions]
Let $\B$ be an \DABP.


\begin{description}
\item[Positive:]
$\B$ is \emph{positive} if execution conditions and writing rules
contain only positive atoms;
\item[Fresh:]
$\B$ is \emph{fresh} if its configuration does not 
contain any running instances;
\item[Acyclic:]
$\B$ is \emph{acyclic} if the process net is cycle-free;
\item[Rowo:]
$\B$ is \emph{rowo} (= read-only-write-only)
if the schema $\Sigma$ of $\B$ can be split into two disjoint schemas:
the reading schema $\Sigma_r$ and the writing schema $\Sigma_w$,
such that execution conditions and 
queries in the writing rules range over $\Sigma_r$ 
while the heads range over $\Sigma_w$;
\item[Closed:] 
an execution of $\B$ is \emph{closed} 
if it contains only transition traversals and no new instances are started.
\end{description}
\end{definition}

\vspace{-1.5ex}

We will develop methods for stability checking in \DABP{s} 
for all combinations of those five restrictions.
For convenience, we will say that an \DABP $\B$ is \emph{closed\/} if we consider only 
closed executions of $\B$.
A \emph{singleton \DABP} is a closed \DABP
with a single instance in the initial configuration.

\paragraph*{Complexity Measures}
The input for our decision problem are 
an \DABP $\B=\tup{\P,\I,\D}$, consisting of
a process model $\P$, 
an instance part $\I$,
a database $\D$ and
a timestamp $\tau$, and
a query $Q$.
The question is:
	\textit{Is $Q$ globally stable in  $\tup{\P,\I, \D, \tau}$?} 
We refer to process, instance, data, and query complexity if all parameters are fixed,
except the process model, the instance part, the database, or the query, respectively.

\paragraph*{Roadmap}
As a summary of our results,
Table~\ref{table:complexityDABP-overall:new2} 
presents the complexity of the possible variants of query stability.
Each section of the sequel will cover one row.

\paragraph*{Datalog Notation}
We assume familiarity with Datalog
concepts such as 
\emph{least fix point} and
\emph{stable model} semantics,
and \emph{query answering}
over Datalog programs under both
semantics.
We consider Datalog programs that are
\emph{recursive}, 
\emph{non-recursive}, 
\emph{positive},
\emph{semipositive},
\emph{with negation}, or
\emph{with stratified negation}~\cite{Gottlob-Eiter-LP-Journal-2001}.
%
%
We write $\Pi \cup \D$ to denote a program 
where 
$\Pi$ is a set of rules and
$\D$ is a set of facts.

\paragraph*{Summary of notation} 
For convenience, we summarize the notation of our model used in the
following sections in Table~\ref{tb:notation-table}.



\begin{table*}[!t]
	
	\begin{center}
	\begin{tabular}{@{}ll@{}}

\toprule


\textbf{Notation} & \textbf{Meaning}\\
\midrule

$\B=$
$\langle \P, \C \rangle$,
$\langle \P, \I, \D  \rangle$,
$\langle \P, \I, \D, \tau  \rangle$
	& \DABP \\

$\P=\tup{N,L}$
&
Process part $\P$ with process net $N$ and labeling function $L$
\\

$\C=\tup{\I,\D,\tau}$
&
Configuration  $\C$ with instance part $\I$, 
\\
\quad & database instance $\D$ and timestamp $\tau$
\\

$N = \tup{P,T}$
& 
Process net $N$ with places $P$ and transitions $T$
\\

$T = \set{t_1,\dots,t_m}$
& 
Multi set of process transitions
\\

$p,q$
& 
Process places
\\

$L(t)=(E_t,W_t)$
&
Labeling $L(t)$ of transition $t$ 
\\

$\In(\ts,\tau), \In(\ts)$ 
& 
Input records with and without timestamps
\\

$E_t \colon \In(\ts), S_1,\dots,S_n,G_t$
&
Execution condition with possibly negated atoms  $S_i$ and
\\
\quad & 
conjunction of inequalities $G_t$ 
\\

$W_t \colon R(\tu) \la B_t(\tu)$ 
& 
Writing rule with query $B_t(\tu) \la \In(\ts), S_1, \dots, S_l, M_t$ with
\\
\quad & 
possibly negated atoms $S_i$  and
conjunction of inequalities $M_t$
\\

$\I = \tup{O, M_\In, M_P}$
& 
Instance part with process instances $O = \set{o_1,\dots,o_k}$
\\


$\Upsilon\colon \C_1 \rightsquigarrow \dots \rightsquigarrow \C_n$ 
& 
\DABP execution with configurations $\C_1, \dots, \C_n$
\\

$r, a, k, m, c$ 
& \DABP $\B$ has $r$ relations with $a$ as the maximal arity, 
\\
\quad
&
$k$ running instances, $m$ transitions,
	$c$ number of constants\\

\bottomrule

\end{tabular}
\end{center}
\caption{Notation table of symbols that represent \DABP{s}}
\label{tb:notation-table}

\end{table*}

\section{Undecidable \DABP{s}}
\label{sec:undecidable}

With negation in execution conditions and writing rules,
we can create \DABP{s} that simulate Turing machines (TMs).
Consequently, in the general variant query stability is undecidable.

Due to lack of space we only provide an intuition.
To show undecidability in data complexity, 
we define a database schema that allows us to store a TM
and
we construct a process model that simulates the executions of the stored TM.
\DABP{s} cannot update facts in the database.
However, we can augment relations with an additional version argument
and simulate updates by adding new versions of facts.
Exploiting negation in conditions and rules 
we can then refer to the last version of a fact.
To simulate the TM execution, the process model
uses fresh constants to model
\emph{(i)}~an unbounded number of updates of the TM configurations (= number of execution steps in the TM), and 
\emph{(ii)}~a potentially infinite tape.
The TM halts iff the process produces the predicate $\dummy$. 
Undecidability in process complexity follows
from undecidability in data complexity, 
since a process can first write the encoding of the TM into 
an initially empty database.
Similarly, we obtain undecidability in instance complexity using
instances that write the encoding of the TM at the beginning.
%
To obtain undecidability in query complexity we 
extend the encoding for data complexity such that 
the database encodes a universal TM and
an input of the TM is encoded in the query.
%


\begin{theorem}[Undecidability]
  Query stability in \DABP{s} is undecidable 
  in
  data,
  process 
  and query complexity.
  It is also undecidable in instance complexity 
  except for fresh variants
  for which it is constant.
  Undecidability already holds for  acyclic
  \DABP{s}.%
\end{theorem}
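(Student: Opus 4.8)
The plan is to reduce the halting problem for Turing machines to the complement of query stability, so that undecidability transfers. Given a deterministic TM $T$, I would construct (effectively) an \DABP $\B$ together with a query $Q$ such that the process can \emph{produce} the distinguished predicate $\dummy$ --- i.e.\ some execution inserts a $\dummy$-fact not present in the initial database --- if and only if $T$ halts. Taking $Q$ to be the Boolean query asking whether $\dummy$ holds, the initial answer is empty, and $Q$ is \emph{not} stable exactly when $T$ halts. Since halting is undecidable, query stability is undecidable.

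Two features of the model obstruct a naive simulation, and resolving them is the heart of the construction. First, \DABP{s} are \emph{monotonic}: facts can only be inserted, never updated or deleted, whereas a TM repeatedly overwrites tape cells, the head position, and the state. I would handle this by \emph{versioning}: every relation carries an extra argument indexing the simulation step, so a TM configuration at step $v$ is stored as facts such as $\Cell(v,\cdot,\cdot)$, $\Head(v,\cdot)$ and $\State(v,\cdot)$, and an ``update'' simply inserts the step-$(v{+}1)$ facts while the old ones persist. Negation in execution conditions and writing rules then lets a transition refer to the \emph{latest} version, so the next configuration is computed from the current one according to $T$'s transition table, itself stored as an ordinary relation in $\D$. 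Second --- and this is the subtle point behind the ``already for acyclic \DABP{s}'' claim --- an acyclic net lets each instance fire only finitely many transitions, yet a TM run is unbounded in both time and space. I would recover unboundedness from \emph{fresh} instances: each newly started instance carries fresh constants in its $\In$-record, and these supply both a new version identifier (one more simulation step) and, where needed, a fresh tape-cell index. A single instance performs one simulation step while traversing the acyclic net once; unboundedly many instances realise the whole computation.

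With the core encoding in hand for \emph{data} complexity (fix the process model and encode $T$ in $\D$), the remaining measures follow by repackaging, exactly as sketched in the text. For \emph{process} complexity I would start from an empty database and prepend transitions that first write the encoding of $T$ and then run the simulation, reducing to the data case. For \emph{instance} complexity I would instead let the initial running instances write $T$'s encoding; the exception for \emph{fresh} variants is immediate, since a fresh configuration has no running instances and, with process and database fixed, the set of producible facts is determined, so stability is decided in constant time. For \emph{query} complexity I would fix $\D$ to encode a \emph{universal} TM $U$ and encode the TM input inside $Q$; stability of that query then decides whether $U$ halts on the encoded input.

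The main obstacle is establishing \emph{correctness} of the simulation, i.e.\ that the process produces $\dummy$ iff $T$ halts. The forward direction --- a halting run is realisable by some execution --- is a scheduling argument: fresh instances are started in the right order to build versions $0,1,2,\dots$. The harder direction is soundness: because instances execute asynchronously and may interleave, I must design the execution conditions to gate transitions tightly enough that \emph{every} execution producing $\dummy$ corresponds to a genuine, properly ordered halting computation, with no $\dummy$ arising from partial or malformed versions. The delicate bookkeeping lies in the conditions that link consecutive versions and enforce a single well-defined ``current'' configuration at each step despite the monotone, append-only, multi-instance setting.
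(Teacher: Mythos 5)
Your proposal follows essentially the same route as the paper's proof: a reduction from the halting problem in which the query asks for $\dummy$, monotonicity is overcome by versioning relations with an extra step argument and using negation to select the latest version, unboundedness of time and tape is obtained from fresh instances carrying fresh constants (which is exactly why acyclicity does not help), and the process/instance/query complexity cases are obtained by the same repackagings (process writes the TM encoding into an empty database, running instances write it, or a universal TM is stored in $\D$ with the input encoded in the query). Your closing remarks on soundness under interleaving concern precisely the bookkeeping the paper defers to its technical report, so there is no substantive divergence.
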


In our reduction it is the unbounded number of fresh instances 
that are causing writing rules to be executed an unbounded number of times,
so that neither cycles nor existing instances are contributing to undecidability.
In the sequel we study \DABP{s} that are 
positive, closed, or rowo, and show that in all three variants stability is decidable.



\section{Positive Closed \DABP{s}}
  \label{sec-positive:closed}

%
In cyclic positive \DABP{s}, executions can be arbitrarily long.
Still, in the absence of fresh instances, 
it is enough to consider executions of bounded length to check stability.
%
Consider a positive \DABP $\B = \tup{\P,\C}$, 
possibly with cycles and disallowing fresh instances
to start,
with
$c$ different constants,
$r$ relations, 
$k$ running instances,
$m$ transitions
and $a$ as the maximal arity of a relation in $\P$.
We observe:
\begin{inparaenum}[\itshape (i)]
\item  
For each relation $R$ in $\P$
there are up to $c^{{}\arity(R)}$ new $R$-facts that $\B$ can produce. 
Thus, $\B$ can produce up to $rc^a$ new facts in total.
\item
It is sufficient to consider executions that produce at least one new fact each $mk$ steps.
An execution that produces no new facts in $mk$ steps has at least one instance 
that in those $mk$ steps visits the same place twice 
without producing a new fact; 
those steps can be canceled 
without affecting the facts that are produced.
\item 
Hence,
it is sufficient to consider
  executions of maximal length 
$m k r c^a.$
%
\end{inparaenum}

Among these finitely many executions, it is enough to consider 
those 
that produce a maximal set of new facts. 
Since a process instance may have the choice among several transitions,
there may be several such maximal sets.
We identify a class of executions
in positive closed \DABP{s}, 
called \emph{greedy executions},
that produce all maximal sets.

\paragraph*{Greedy Executions}
Intuitively, in a greedy execution instances 
traverse all cycles in the net in all possible ways
and produce all that can be produced before leaving the cycle.
To formalize this idea we identify two kinds of execution steps:
\emph{safe steps} and 
\emph{critical steps}.
A safe step is an execution step of an instance after which,
given the current state of the database,
the instance can return to its original place.
A critical step is an execution step that is not safe.
Based on this, we define \emph{greedy sequences} and \emph{greedy executions}.
A greedy sequence is a sequence of safe steps  
  that produces the largest number of new facts possible.
A greedy execution is an execution where 
greedy sequences and critical steps alternate.

Let $\Upsilon$ be a greedy execution with $i$
alternations of greedy sequences and critical steps. 
In the following, 
we characterize which are the transitions that instances 
traverse in the  $i+1$-th greedy sequence and then in the $i+1$-th
critical step.
For a process instance $o$ and the database $D_{\Upsilon}$ produced after 
$\Upsilon$ we define the \emph{enabled graph} $N_{\Upsilon,o}$ as the multigraph 
whose vertices are the places of $N$ (i.e., the process net of $\B$) 
and edges those transitions of $N$ that are enabled for $o$ given database $D_{\Upsilon}$.
Let $\SCC(N_{\Upsilon,o})$ denote the set of 
strongly connected components (SCCs) of 
$N_{\Upsilon,o}$.
Note that two different instances may have different enabled graphs and thus different SCCs.
For a place $p$,
let $N^p_{\Upsilon,o}$ be the SCC in $\SCC(N_{\Upsilon,o})$
that contains $p$.
Suppose that $o$ is at place $p$ after $\Upsilon$. 
Then in the next greedy sequence, 
each instance $o$ traverses the component $N^p_{\Upsilon,o}$ in all possible ways until no new facts can be produced, 
meaning that all instances traverse in an arbitrary order.
%
Conversely, the next critical step is 
an execution step where an instance $o$ traverses a transition 
that is not part of $N^p_{\Upsilon,o}$, and thus it 
leaves the current SCC.
We observe that when performing safe transitions new facts may be written and new transitions may become executable. This can make SCCs of $N_{\Upsilon,o}$ to grow and merge, enabling new safe steps. 
With slight abuse of notation we denote such maximally expanded SCCs with $N_{\Upsilon,o}$, and 
with $N^p_{\Upsilon,o}$ the maximal component that contains $p$.

\paragraph*{Properties of Greedy Executions}
We identify three main properties of greedy executions.
\begin{itemize}
\item  
A greedy execution is characterized by its critical steps,
because an instance may have to choose one among several possible critical steps.
In contrast, 
how safe steps compose a greedy sequence 
is not important for stability
because all greedy sequences 
produce the same (maximal) set of facts.
\item
A greedy execution in an \DABP
with $m$ transitions and $k$ instances
can have at most $mk$ critical steps.
The reason is that an execution step can be critical only 
the first time it is executed, and 
any time after that it will be a safe step.
\item
Each execution can be transformed into  a greedy execution
such that 
if  a query is instable in the original version 
then it is instable also in the greedy version.
In fact,
an arbitrary execution 
has at most $mk$ critical steps.
One can construct a greedy version starting from those critical steps,
such that the other steps are part of the greedy sequences.
\end{itemize}

\begin{lemma}\label{lemma:reduction:greedy:execution:extended}
  For each closed execution $\Upsilon$ in a positive \DABP $\B$ 
  that produces the set of ground atoms $W$,
  there exists a greedy execution $\Upsilon'$ in $\B$
  that also produces  $W$.
\end{lemma}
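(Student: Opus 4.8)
The plan is to rewrite an arbitrary closed execution $\Upsilon$ into a greedy one by replacing, around each of its critical steps, the intervening safe steps by maximal greedy sequences, and to argue that positivity guarantees this never loses a produced fact. The one structural property I would rely on throughout is \emph{monotonicity}: since $\B$ is positive, whenever $\D \subseteq \D'$ every transition enabled on $\D \cup \{\In(\bar c,\tau')\}$ is still enabled on $\D' \cup \{\In(\bar c,\tau')\}$, and every fact written by a rule on $\D$ is also written on $\D'$. Hence, as the database grows, enabled graphs only gain edges, so the SCCs $N^p_{\Upsilon,o}$ can only merge and grow; in particular, any place reachable from $p$ by safe steps under some database stays reachable under every larger one.

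First I would extract the skeleton of $\Upsilon$. Because a fixed instance traversing a fixed transition can induce a critical step only the first time it does so — to traverse $t_i$ from $p_i$ a second time the instance must have returned to $p_i$ along an enabled path, which by monotonicity stays enabled, placing $p_i$ and $q_i$ in one SCC and making the later traversal safe — the execution $\Upsilon$ has at most $mk$ critical steps $c_1,\dots,c_\ell$, where $c_i$ is instance $o_i$ traversing $t_i$ from $p_i$ to $q_i$. These split $\Upsilon$ into blocks $B_0\,c_1\,B_1\cdots c_\ell\,B_\ell$, each $B_i$ a run of safe steps. Let $\bar E_i$ be the database reached at the end of $B_i$ (so $\bar E_\ell \supseteq W$) and $\pi_i(o)$ the place of $o$ there; note $\pi_i(o_{i+1})=p_{i+1}$.

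Then I would build $\Upsilon' = G_0\,c_1'\,G_1\cdots c_\ell'\,G_\ell$ block by block, maintaining the invariant that after the greedy sequence $G_i$ the current database $\D_i'$ contains $\bar E_i$ and every instance $o$ sits at $\pi_i(o)$, i.e.\ $\Upsilon'$ and $\Upsilon$ agree on all instance positions at the block boundaries. For $G_0$ this holds because a greedy sequence from the initial configuration produces the maximal set of facts reachable by safe steps, hence at least those of $B_0$, giving $\D_0' \supseteq \bar E_0$; and each target place $\pi_0(o)$, being reached in $\Upsilon$ by safe steps, lies in $o$'s (now larger) SCC, so $o$ can be driven there by extra safe steps, which produce no new fact since the database is already saturated and therefore leave it and the other instances unchanged. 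For the inductive step, after $G_i$ the instance $o_{i+1}$ is at $p_{i+1}$, and $t_{i+1}$, enabled before $c_{i+1}$ in $\Upsilon$ on a database contained in $\D_i'$, is enabled on $\D_i'$; performing $c_{i+1}'$ reproduces at least the facts of $c_{i+1}$ and moves exactly $o_{i+1}$ to $q_{i+1}$, so the positions now match those at the start of $B_{i+1}$ in $\Upsilon$. The maximal greedy sequence $G_{i+1}$ then produces all facts obtainable by safe steps, in particular those of $B_{i+1}$, yielding $\D_{i+1}' \supseteq \bar E_{i+1}$, and finally repositions each $o$ at $\pi_{i+1}(o)$ as above. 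At $i=\ell$ we obtain $\D_\ell' \supseteq \bar E_\ell \supseteq W$, so $\Upsilon'$ produces $W$.

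The point I expect to need the most care — and which is the real content behind the earlier claim that every execution can be turned into a greedy one — is that an originally critical step $c_{i+1}$ may become \emph{safe} in $\Upsilon'$, because the enlarged database may have merged $p_{i+1}$ and $q_{i+1}$ into a single SCC. Then $c_{i+1}'$ is not a legitimate critical step and cannot simply be inserted between two greedy sequences. The resolution is that such a step is absorbed into the preceding greedy sequence: since a greedy sequence keeps traversing its maximally expanded SCC until no new fact can be produced, it already outputs the facts of $W_{t_{i+1}}$ and of the part of block $B_{i+1}$ lying in that SCC, so $\D_i' \supseteq \bar E_{i+1}$ holds without a separate critical step and we simply drop $c_{i+1}'$. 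Arguing this absorption uniformly — so that every step of $\Upsilon$ is accounted for, whether it ends up inside a greedy sequence or as one of the at most $mk$ genuine critical steps — is the crux; once it is in place, $\Upsilon'$ is by construction a bona fide greedy execution that produces $W$.
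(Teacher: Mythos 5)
Your proposal is correct and takes essentially the same route as the paper: the paper likewise bounds the number of critical steps by $mk$ (a step can be critical only the first time an instance traverses a given transition, by monotonicity of positive conditions), keeps those critical steps as the skeleton, and absorbs all remaining steps into maximal greedy sequences, using that greedy sequences produce the maximal set of facts obtainable by safe steps. Your explicit block-by-block invariant and your treatment of critical steps that become safe (and are then swallowed by the preceding greedy sequence) are just a more detailed elaboration of the paper's sketch, not a different argument.
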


Therefore,
to check stability it is enough to check stability over greedy executions.
In the following we define Datalog rules that compute facts produced by greedy executions.

\paragraph*{Encoding into Datalog}
Let $\BPID$ be a positive \DABP 
with $m$ transitions and $k$ instances.
Since critical steps uniquely characterize a greedy execution, 
we use a tuple of size up to $mk$ to encode them.
For example, if in a greedy execution $\Upsilon$ at the first critical step
instance $o_{l_1}$ traverses transition $t_{h_1}$, in the second $o_{l_2}$ traverses $t_{h_2}$, and so on up to step $i$, we encode this with the 
tuple 
\[\tomega = \tup{o_{l_1},t_{h_1},\dots,o_{l_i},t_{h_i}}.\]

Next, we define the relations used in the encoding.
\begin{inparaenum}[\itshape (i)]
\item 
For each relation $R$ in $\P$
we introduce relations $R^i$  
(for $i$ up to $mk$) to store all
$R$-facts produced by an execution 
with $i$ critical steps.
Let $\Upsilon$ be the execution from above 
and let $\tup{o_{l_1},t_{h_1},\dots,o_{l_i},t_{h_i}}$ be the tuple representing it.
Then, a fact of relation $R^i$ has the form 
$
R^i(o_{l_1},t_{h_1},\dots,o_{l_i},t_{h_i};\ts),
$
and it holds iff 
$\Upsilon$ produces the fact $R(\ts)$.
Later on we use $\tomega$ to represent the tuple $\langle o_{l_1},t_{h_1},\dots \rangle$. 
Facts of $R^i$ are then represented as $R^i(\tomega;\ts)$.
For convenience,  we use a semicolon ($;$) instead of a comma ($,$) to separate 
encodings of different types in the arguments. 
\item
To record the positions of instances after each critical step 
we introduce relations $\State^i$ 
such that  $\State^i(\tomega;p_1,\dots,p_k)$ encodes that 
after $\Upsilon$ is executed, 
instance $o_1$ is at $p_1$,
$o_2$ is at $p_2$,
and so on. 
\item
To store the SCCs of the enabled graph we introduce relations $\SCC^i$
such that for a process instance $o$ and a place $p$, 
the transition $t$ belongs to $N^{p}_{\Upsilon,o}$ 
iff $\SCC^i(\tomega;o,p,t)$ is true.
\item
To compute the relations $SCC^i$, we first need to compute which places are reachable by an instance $o$ from place $p$.
For that we introduce auxiliary relations 
$
\Reach^i
$ 
such that in the enabled graph  $N_{\Upsilon,o}$ instance $o$ can reach place $p'$ from $p$ iff $\Reach^i(\tomega;o,p,p')$ is true.
\item 
Additionally, we introduce the auxiliary relation 
$\Inz$ that associates instances with their $\In$-records, 
that is $\Inz(o;\ts)$ is true iff the instance $o$ has input record $\In(\ts)$.
With slight abuse of notation, we use $\tomega$ to denote also the corresponding greedy closed execution $\Upsilon$.%
\end{inparaenum}

In the following we define a Datalog program that computes the predicates introduced above for all possible greedy executions. 
The program uses stratified negation.

\paragraph*{Initialization}
For each relation $R$ in $\P$ we introduce the
\emph{initialization rule} $R^0(\tX) \la R(\tX)$ to store what holds before any critical step is made.
Then we add the fact rule $\State^0(p_1,\dots,p_k)\la \true$ 
if in the initial configuration
$o_1$ is at place $p_1$, $o_2$ at $p_2$, and so on.

\paragraph*{Greedy Sequence: Traversal Rules}
Next, we introduce rules that compute enabled graphs.
%
%
The relation $\Reach^i$ contains the transitive closure of 
the enabled graph $N_{\tomega,o}$ 
for each $o$ and~$\tomega$, encoding a greedy execution of length $i$.
First, a transition $t$ from $q$ to be $p$ gives rise to an edge
in the enabled graph $N_{\tomega,o}$ if instance $o$ can traverse that 
$t$:
\[
  \Reach^i(\tW;O,q,p) \la E^i_t(\tW;O).
\]
Here, 
$E^i_t(\tW;O)$ is a shorthand for the condition obtained from $E_t$ by replacing 
$\In(\ts)$ with $\Inz(O;\ts)$ and by replacing each atom 
$R(\tpl v)$ with $R^i(\tW;\tpl v)$.
The tuple $\tW$ consists of $2i$ many distinct variables
to match every critical execution with $i$ steps. It ensures that only
facts produced by $\tW$ are considered.
The transitive closure is computed with the following rule:
\begin{align*}
  \Reach^i(\tW;O,P_1,P_3) \la 
    \Reach^i(\tW;O,P_1,P_2), \Reach^i(\tW;O,P_2,P_3).  
\end{align*}
Based on $\Reach^i$, $\SCC^i$ is computed by including every transition $t$
from $q$ to $p$ that an instance can reach, traverse, and from where it can return to the current place:
\begin{align*}
    \SCC^i(\tW;O,P,t) \la \ \Reach^i(\tW;O,P,q), E^i_t(\tW;O), \Reach^i(\tW;O,p,P).
\end{align*}

\paragraph*{Critical Steps: Traversal Rules}
We now want to record how an instance makes a critical step.
An instance $o_j$ can traverse transition $t$ from $q$ to $p$
at the critical step $i+1$
if  
\ei $o_j$ is at some place in  $N^{q}_{\tomega,o_j}$ at step $i$,
\eii it satisfies the execution condition $E_t$,
\eiii and by traversing $t$ it leaves  the current SCC.
The following \emph{traversal rule} captures this:
\begin{align}
 &  \State^{i+1}(\tW, o_j, t; P_1,\dots,P_{j-1},p,P_{j+1},\dots,P_k) \la \nonumber \\ 
    & \quad\State^i(\tW;P_1,\dots,P_{j-1},P,P_{j+1},\dots,P_k), 
    \Reach^i(\tW;o_j,P,q), \Reach^i(\tW;o_j,q,P), \label{eq:criticalstepone} \\ 
    & \quad E^i_t(\tW;o_j), \neg \SCC^i(\tW;o_j,P,t). \label{eq:criticalsteptwo}
\end{align}
Here, the condition \ei is encoded in
line \eqref{eq:criticalstepone},
and \eii and \eiii are encoded in line \eqref{eq:criticalsteptwo}.

\paragraph*{Generation Rules}
A fact in $R^{i+1}$ may hold because 
\ei it has been produced by the current greedy sequence or by the last critical step, or 
\eii by some of the previous sequences or steps.
Facts produced by previous sequences or steps 
are propagated with the \emph{copy rule}:
$R^{i+1}(\tW,O,T;\tX) \la \State^{i+1}(\tW,O,T;\blank),R^{i}(\tW;\tX)$,
copying facts $R(\tX)$ holding after $\tW$ to all extensions of $\tW$.

Then we compute the facts produced by the next greedy sequence.
For each instance $o_j$,
being at some place $p_j$ after the last critical step in $\tomega$, 
and for each transition $t$ that is in $N^{p_j}_{\tomega,o_j}$,
with writing rule  
$R(\tu) \la B_t(\tu)$, 
we introduce the following \emph{greedy generation rule}:%
\begin{align*}
    R^{i}(\tW;\tu) \la \ 
          \State^{i}(\tW;\blank,\dots,\blank,P_j,\blank,\dots,\blank),
          \SCC^i(\tW; o_j,P_j,t), B^i_t(\tW;o_j;\tu),
\end{align*}
where condition $B^i_t(\tW;O;\tu)$ is obtained 
similarly as
$E^i_t(\tW;O)$.
In other words, all transitions $t$ that are in $N^{p_j}_{\tomega,o_j}$
are fired simultaneously, and this is done for all instances.

The facts  produced at the next critical step 
by traversing $t$, 
which has the writing rule $R(\tu) \la B_t(\tu)$, 
are generated with the \emph{critical generation rule}:
 $   R^{i+1}(\tW,O,t;\tu) \la  \State^{i+1}(\tW,O,t;\blank),B^i_t(\tW;O;\tu).$

Let $\PiPIpocl$ be the program encoding the positive closed
$\BPID$ as described above.

\begin{lemma} \label{lem:encoding:positive:closed:cyclic:extended}
Let $\tomega$ be a greedy execution in the positive closed 
$\BPID$ of length $i$ and  
$R(\ts)$ be a fact.
Then 
$R(\ts)$ is produced by $\tomega$
\ iff \ 
$\PiPIpocl \cup \D \models R^i(\tomega;\ts)$.
\end{lemma}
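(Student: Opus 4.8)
The plan is to prove a strengthened statement by induction on the length $i$ of the greedy execution, simultaneously establishing the correctness of \emph{all} auxiliary predicates, not only $R^i$. Concretely, for every greedy execution $\tomega$ of length $i$ I would show: (a) $\State^i(\tomega;p_1,\dots,p_k)$ holds iff after executing $\tomega$ each instance $o_j$ sits at $p_j$; (b) $\Reach^i(\tomega;o,p,p')$ holds iff $p'$ is reachable from $p$ in the maximally expanded enabled graph $N_{\tomega,o}$; (c) $\SCC^i(\tomega;o,p,t)$ holds iff $t$ belongs to $N^p_{\tomega,o}$; and (d) $R^i(\tomega;\ts)$ holds iff $\tomega$ produces $R(\ts)$, which is the claim. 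Since $\PiPIpocl$ is stratified by the number of critical steps, its unique model is computed stratum by stratum, and the induction on $i$ mirrors this stratification exactly: the negated literal $\neg\SCC^i$ in the critical traversal rule (line~\eqref{eq:criticalsteptwo}) refers only to a lower stratum, so the induction hypothesis fully determines its truth value.

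For the base case $i=0$, the facts $\Inz(o;\ts)$ together with the initialization rules fix $R^0$ and $\State^0$ to the initial database and the initial instance placement. It then remains to show that the positive, mutually recursive rules for $\Reach^0$ and $\SCC^0$, together with the greedy generation and copy rules for $R^0$, saturate to exactly the set of facts producible by the first greedy sequence. I would establish this by a fixpoint-coincidence argument (described in the last paragraph), and the same argument is reused verbatim at every level.

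For the inductive step $i\to i+1$, assume (a)--(d) hold for $\tomega$ and consider a critical step appending $(o_j,t)$, where $t$ runs from $q$ to $p$. By (b) and (c) the critical traversal rule fires for $\tomega'=\tup{\tomega,o_j,t}$ exactly when $o_j$'s current place $P$ and $q$ lie in a common component, $E_t$ is enabled, and $t\notin\SCC^i$, i.e.\ precisely when $(o_j,t)$ is a legal critical step that leaves the current SCC; by (a) this yields $\State^{i+1}$ correctly. The copy rule then transfers all facts holding after $\tomega$ into the relation indexed by $\tomega'$, the critical generation rule adds the facts written by $t$, and the level-$(i+1)$ reachability and greedy rules saturate again from this new state. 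Invoking the inner coincidence argument once more establishes (a)--(d) at level $i+1$ for $\tomega'$. Both directions of the iff then follow together: each derived fact is witnessed by a concrete production step of $\tomega'$ (soundness), and every fact produced by the greedy sequence or the critical step is captured by some rule application (completeness).

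The main obstacle is the inner fixpoint-coincidence argument for a fixed stratum $i$: I must show that the least fixpoint of the positively recursive rules for $\Reach^i$, $\SCC^i$ and $R^i$ equals the semantic maximally expanded greedy saturation described in Section~\ref{sec-positive:closed}. The difficulty is the feedback loop: firing safe transitions in $N^p_{\tomega,o}$ writes new facts, which may satisfy further execution conditions $E^i_t$, thereby adding edges to $\Reach^i$, enlarging and merging SCCs, and in turn enabling further greedy generation. Both the Datalog immediate-consequence iteration and the semantic expansion are monotone and chase the same enabling conditions, so they reach the same fixpoint; to justify that the greedy generation rule, which fires \emph{all} transitions of a component simultaneously, produces the same fact set as an actual greedy sequence firing safe steps one at a time, I would appeal to the observation that all greedy sequences produce the same maximal fact set together with Lemma~\ref{lemma:reduction:greedy:execution:extended}. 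Once this coincidence is in place, the outer induction on critical steps is routine.
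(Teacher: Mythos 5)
Your proposal is correct and follows exactly the proof structure that the paper's construction implies: the paper states Lemma~\ref{lem:encoding:positive:closed:cyclic:extended} without an explicit proof in this version (deferring to its technical report), but its narrative --- stratification of $\PiPIpocl$ by the number of critical steps, the mutual positive recursion of $\Reach^i$, $\SCC^i$ and $R^i$ within a stratum, and the property that all greedy sequences produce the same maximal set of facts --- is precisely what you formalize via the strengthened induction hypothesis (a)--(d) and the inner fixpoint-coincidence argument. The one point you flag as the main obstacle (the greedy generation rule firing all SCC transitions simultaneously versus an actual instance taking safe steps one at a time) is resolved correctly by monotonicity of the positive rules together with the paper's observation that all greedy sequences are confluent, so nothing further is missing.
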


%

\paragraph*{Test Program}
Now we want to test the stability of 
$Q(\tX) \la R_1(\tu_1),\dots,R_n(\tu_n)$.
We collect all potential $Q$-answers using the relation $Q'$.
A new query answer may be produced by an execution of any size $i$ 
up to $mk$. 
Thus, for each execution of a size $i$ from $0$ to $mk$ we introduce the $Q'$-rule
\begin{equation}
   \label{eqn-Q:prime:test:rule} 
Q'(\tX) \la R_1^i(\tW;\tu_1),\dots,R_n^i(\tW;\tu_n).
\end{equation}
Then, if there is a new query answer, the \emph{test rule} 
``$
\instable \la Q'(\tX), \neg Q(\tX)
$''
fires the fact $\instable$.
%
Let $\PiPIQtest$ be the test program that contains
$Q$, the $Q'$-rules, 
and the test rule.

\begin{theorem} \label{col:encoding:closed:acyclic:checking}
     $Q$ is instable in the positive closed $\B$
     \ iff\ 
    $\PiPIpocl \cup \D \cup \PiPIQtest \models \instable$.
\end{theorem}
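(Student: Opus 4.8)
The plan is to combine the two lemmas already in place: Lemma~\ref{lemma:reduction:greedy:execution:extended}, which lets us restrict attention to greedy executions, and Lemma~\ref{lem:encoding:positive:closed:cyclic:extended}, which certifies that the predicates $R^i$ faithfully record the facts produced by a greedy execution of length~$i$. The overall idea is to reduce instability to the existence of a greedy execution that produces a query answer not already present in $\D$, and then to read such an execution off from (resp.\ build it into) a derivation of $\instable$.

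First I would reformulate instability in a form convenient for the encoding. Since the bodies $E_t$, $B_t$ and the query $Q$ are conjunctive, hence monotone, and since a closed execution only inserts facts, we have $Q(\D)\subseteq Q(\D')$ for the database $\D'$ of any reachable configuration. Thus $Q$ is instable in $\B$ exactly when some closed execution produces an answer $\ta\in Q(\D')\setminus Q(\D)$. Any such answer is witnessed by a tuple of facts $R_1(\ta,\tb_1),\ldots,R_n(\ta,\tb_n)$ in $\D'$ matching the body of $Q$. By Lemma~\ref{lemma:reduction:greedy:execution:extended}, the set of these witnessing facts is also produced by some greedy execution, which by the $mk$ bound on critical steps has a length $i\le mk$. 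Hence instability is equivalent to the statement: there is a greedy execution $\tomega$ of some length $i\le mk$ and a new answer $\ta$ such that $\tomega$ produces all witnessing facts $R_j(\ta,\tb_j)$.

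For the direction ``$\Rightarrow$'', I would take the greedy execution $\tomega$ and answer $\ta$ just obtained. Lemma~\ref{lem:encoding:positive:closed:cyclic:extended} yields $\PiPIpocl\cup\D\models R_j^i(\tomega;\ta,\tb_j)$ for every~$j$, so the instance of the $Q'$-rule~\eqref{eqn-Q:prime:test:rule} for this $i$, with $\tW$ bound to $\tomega$ and $\tX$ bound to $\ta$, derives $Q'(\ta)$. As the subprogram $Q$ evaluates $Q$ over $\D$ and $\ta\notin Q(\D)$, the test rule fires $\instable$. For ``$\Leftarrow$'', any derivation of $\instable$ must use the test rule, so there is some $\ta$ with $Q'(\ta)$ and $\ta\notin Q(\D)$. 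A derivation of $Q'(\ta)$ fixes one index $i$ and one binding of $\tW$ to a single tuple $\tomega$ with $\PiPIpocl\cup\D\models R_j^i(\tomega;\ta,\tb_j)$ for all $j$; by Lemma~\ref{lem:encoding:positive:closed:cyclic:extended} the greedy execution $\tomega$ produces every $R_j(\ta,\tb_j)$, so $\ta\in Q(\D_{\tomega})$ while $\ta\notin Q(\D)$, witnessing instability.

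The step I expect to demand the most care is the joint use of the single execution variable $\tW$ in the $Q'$-rule. Its role is to force all $n$ witnessing facts to come from one and the same greedy execution, so that they genuinely coexist in a single final database; matching the existentially quantified body variables of $Q$ against the extra arguments of the atoms $R_j^i$ must respect this shared binding, and conversely one must argue that a spurious answer cannot arise by combining facts from \emph{different} executions. The remaining points---that the stratified program has a well-defined model so that $\models$ is meaningful, and that letting $i$ range from $0$ to $mk$ covers every greedy execution by the critical-step bound---are routine given the earlier development.
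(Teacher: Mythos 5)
Your proof is correct and takes essentially the same route as the paper, which states this theorem without a separate proof precisely because it follows by combining Lemma~\ref{lemma:reduction:greedy:execution:extended} (reduction of arbitrary closed executions to greedy ones, with the $mk$ bound on critical steps) with Lemma~\ref{lem:encoding:positive:closed:cyclic:extended} (correctness of the $R^i$ predicates) and the definition of $\PiPIQtest$ --- exactly your two directions. Your closing remark about the shared binding of $\tW$ in the $Q'$-rule is the right subtlety to flag, since that is what guarantees the $n$ witnessing facts coexist in the final database of a single execution rather than being spuriously assembled from different executions.
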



\paragraph*{Data and Process Complexity}%
Since $\PiPIpocl \cup \D \cup \PiPIQtest $ is a Datalog program with stratified negation, 
for which reasoning is as complex as for positive Datalog,
we obtain as upper bounds $\EXPTIME$ for process and combined complexity, and $\PTIME$ for data 
complexity~\cite{Gottlob-Eiter-LP-Journal-2001}.
We show that these are also lower bounds, even for singleton \DABP{s}.
This reduction can also be adapted for acyclic fresh \DABP{s},
which we study in Section~\ref{sec:DABP-Datalog-power-short}. 

\begin{lemma}
 \label{lem:positive:cyclic:complexity}
Stability is $\EXPTIME$-hard in process and 
$\PTIME$-hard in data complexity for
\begin{enumerate}[label=\alph*)]
  \item positive singleton \DABP{s} under closed executions, and
  \item positive acyclic fresh \DABP{s}.
\end{enumerate}
\end{lemma}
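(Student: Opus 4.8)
The plan is to establish both lower bounds through a single reduction from fact entailment in positive Datalog, exploiting the observation that a positive writing rule $R(\tu) \la B_t(\tu)$ is literally a positive Datalog rule with a conjunctive body. I will use the classical dichotomy~\cite{Gottlob-Eiter-LP-Journal-2001} that deciding $\Pi \cup \D \models G$ for a positive Datalog program $\Pi$, a database $\D$ and a ground goal $G$ is \EXPTIME-complete in program complexity (with $\D$ fixed) and \PTIME-complete in data complexity (with $\Pi$ fixed). The reduction will map the Datalog program to the process model and the database to the \DABP database, so that fixing the data parameter yields the \EXPTIME bound in process complexity while fixing the program yields the \PTIME bound in data complexity.

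For part (a), given $\Pi$, $\D$ and $G$ (with $G \notin \D$, which we may assume), I construct a positive singleton \DABP $\B=\tup{\P,\C}$ whose net consists of a single place \start\ carrying one self-loop transition $t_\rho$ for every rule $\rho \in \Pi$; the writing rule of $t_\rho$ is $\rho$ itself (all its predicates lie in $\SigmaB$, so no $\In$-atom is needed), and all execution conditions are trivially satisfied. The single instance may then traverse these self-loops in any order, arbitrarily often, and one traversal of $t_\rho$ applies the immediate-consequence operator of the single rule $\rho$. Hence repeated traversal computes the least fixpoint, and by Lemma~\ref{lemma:reduction:greedy:execution:extended} it suffices to consider greedy (maximal) executions, whose produced set is exactly the minimal model $\Pi^\infty(\D)$. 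Taking the query $Q$ that returns $G$, we obtain that $Q$ is instable in $\B$ iff $G \in \Pi^\infty(\D)$, i.e.\ iff $\Pi \cup \D \models G$.

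For part (b), cycles are forbidden and the initial configuration has no running instances, so I replace the self-loops by an acyclic path $\start = p_0 \to p_1 \to \cdots \to p_{|\Pi|}$ in which the transition from $p_{i-1}$ to $p_i$ carries the $i$-th rule of $\Pi$ as its writing rule. A single fresh instance traversing the whole path performs one full sweep of the rules, and since fresh instances may be started without bound, starting sufficiently many of them in sequence realizes arbitrarily many sweeps and therefore again reaches $\Pi^\infty(\D)$. Soundness, that every produced fact lies in $\Pi^\infty(\D)$, is immediate from monotonicity of positive rules, and completeness follows because the least fixpoint stabilizes after at most polynomially many sweeps in the number of derivable facts. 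As before, $Q$ asking for $G$ is instable iff $\Pi \cup \D \models G$.

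In both constructions the size of $\P$ is linear in the size of $\Pi$ and the \DABP database equals $\D$, so fixing $\D$ and varying $\Pi$ transfers the \EXPTIME-hardness of Datalog program complexity to process complexity, while fixing $\Pi$ (hence $\P$) and varying $\D$ transfers the \PTIME-hardness of Datalog data complexity to data complexity; moreover both \DABP{s} are positive, the one in (a) being a singleton under closed executions and the one in (b) being acyclic and fresh, exactly as the statement requires. The main obstacle I expect is the correctness of the reduction, namely proving that the facts producible by the \DABP coincide exactly with the minimal model of $\Pi$ on $\D$: soundness is a one-line monotonicity argument, but completeness must show that some admissible execution actually reaches the fixpoint---via greedy executions and Lemma~\ref{lemma:reduction:greedy:execution:extended} in case (a), and via a sufficiently long sequence of fresh instances in case (b).
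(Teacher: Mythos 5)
Your proposal is correct and follows essentially the same route as the paper: both parts reduce entailment for positive Datalog to (in)stability, with one transition per rule, a cycling singleton instance computing the least fixpoint in (a), and unboundedly many fresh instances playing the role of the cycling instance in (b). The only notable difference is that the paper has the process write a fact $\dummy$ once the goal is derived and uses the fixed test query $\Qtest \la \dummy$, which keeps the query constant across all reduction instances; to make your process-complexity transfer airtight you should likewise fix the goal (e.g., a $0$-ary atom, or add a rule deriving it) so that your query $Q \la G$ does not vary with the input program.
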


\begin{proof}[Proof Sketch]
\emph{a)}
We encode query answering over a Datalog program $\Pi \cup \D$ into stability checking. 
Let $A$ be a fact.
We construct a positive singleton \DABP $\tup{\P_{\Pi,A}^{\textit{po,cl}},\I_0,\D}$,
where there is a transition for each rule and 
the single process cycles to produce the least fixed point (LFP) of the program. 
In addition, the \DABP inserts the fact $\dummy$ if $A$ is in the LFP. 
Then test query $\Qtest \la \dummy$  is stable in $\tup{\P_{\Pi,A}^{\textit{po,cl}},\I_0,\D}$
iff $\Pi \cup \D \not\models A$.

\emph{b)} Analogous, letting fresh instances play the role of the cycling singleton instance.
\end{proof}

\paragraph*{Instance Complexity}
Instance complexity turns out to be higher than data complexity. 
Already for acyclic positive closed \DABP{s} it is $\coNP$-hard because
\begin{inparaenum}[\itshape (i)]
\item process instances may non-deterministically choose a transition, 
      which creates exponentially many combinations, even in the acyclic variant; and
\item instances may interact by reading data written by other instances. 
\end{inparaenum}

\begin{lemma}\label{lem:instance:complexity:closed:acyclic}
There exist a positive acyclic process model $\P_0$, a database $\D_0$,
and a test query $\Qtest$ with the following property:
for every graph $G$ one can construct an instance part $\I_G$ such that
$G$ is not 3-colorable
iff
$\Qtest$  is stable in $ \tup{\P_0,\I_G,\D_0}$ under closed executions.
\end{lemma}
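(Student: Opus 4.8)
The plan is to reduce graph 3-colorability to \emph{instability}. Since 3-colorability is \NP-complete, non-3-colorability is \coNP-complete, and the lemma asks exactly for the equivalence ``$G$ not 3-colorable iff $\Qtest$ stable'', equivalently ``$G$ 3-colorable iff $\Qtest$ instable''. As every parameter except the instance part is fixed, the whole of $G$ must be encoded into $\I_G$ through the set of instances, their input records, and their starting places, while $\P_0$, $\D_0$ and $\Qtest$ are built once and for all.

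First I would fix the static parts. Let $\D_0$ contain only the six facts $\textit{Diff}(i,j)$ for distinct $i,j\in\{1,2,3\}$ (a fixed stand-in for color-inequality, which we are not allowed to express by negation) together with the fact $\textit{Token}(0)$; since $\Qtest \la \dummy$ and $\dummy\notin\D_0$, the query has no answer on $\D_0$, so stability reduces to asking whether some closed execution can produce $\dummy$. The process model $\P_0$ is a single acyclic net: all transitions leave \start, and each is guarded by a \emph{role marker} carried in the input record, so that one net serves several kinds of instances. To encode $G$ I would fix an arbitrary linear order $e_1,\dots,e_M$ on the edges and put into $\I_G$ one \emph{vertex instance} per vertex $v$ (input record carrying $v$) and one \emph{edge instance} per edge $e_i$ (input record carrying its two endpoints, its predecessor position $i-1$ and its own position $i$, with $e_M$ additionally flagged as last).

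The dynamics realize a guess-and-check. A vertex instance nondeterministically takes one of three transitions, writing $\textit{Color}(v,1)$, $\textit{Color}(v,2)$ or $\textit{Color}(v,3)$; this is the exponential branching of point (i), and a closed execution that runs all vertex instances corresponds to a coloring of $G$. An edge instance reads the colors of its two endpoints from the database---the cross-instance interaction of point (ii)---and its advancing transition is enabled only when $\textit{Color}(u,C_u)$, $\textit{Color}(v,C_v)$, $\textit{Diff}(C_u,C_v)$ and $\textit{Token}(i-1)$ all hold, in which case it writes $\textit{Token}(i)$; the flagged last edge instead writes $\dummy$ under the same guard. A routine induction along the edge order shows that $\textit{Token}(i)$ is producible iff $\textit{Token}(i-1)$ is producible and $e_i$ is bichromatic under the chosen coloring, so $\dummy$ is producible iff \emph{every} edge is bichromatic, i.e.\ iff the coloring is proper. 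Hence $\dummy$ is producible by some closed execution iff $G$ has a proper 3-coloring, giving instability iff 3-colorability. The map $G\mapsto\I_G$ is clearly polynomial, and $\P_0$ is positive and acyclic.

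The main obstacle is the aggregation step: a fixed conjunctive query together with a fixed process cannot directly express a ``for all edges'' condition over graphs of unbounded size, and negation is unavailable. The token-threading along the fixed linear order of edges is the device that overcomes this, turning the global universal condition into a chain of purely positive, local checks in which a single monochromatic edge permanently blocks the token and thus suppresses $\dummy$. Everything else---the color guessing, the $\textit{Diff}$ relation replacing color-inequality, and the role markers that let one acyclic net host both vertex and edge instances---is routine encoding.
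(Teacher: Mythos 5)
Your reduction is correct and follows essentially the approach the paper intends for this lemma: vertex instances nondeterministically choose among three color-writing transitions (the paper's source of hardness \emph{(i)}), edge instances read those colors and check bichromaticity against a fixed $\textit{Diff}$ relation (source \emph{(ii)}), and the unbounded ``all edges pass'' condition is made expressible with a fixed positive acyclic process by threading a $\textit{Token}$ chain through position constants stored in the edge instances' input records. The only loose end is the degenerate case of an edgeless graph, which is 3-colorable yet stable in your encoding because no instance can ever write $\dummy$; patch it by adding to $\P_0$ one transition guarded by a special role marker that writes $\dummy$ unconditionally, and mapping edgeless graphs to an instance part containing a single instance with that marker.
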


Clearly, Lemma~\ref{lem:instance:complexity:closed:acyclic} implies that 
checking stability for closed \DABP{s} is $\coNP$-hard in instance complexity.
According to Theorem~\ref{theorem:sms} (Section~\ref{subsubsec:normal:cyclic:arbitrary}),
instance complexity is $\coNP$ for all closed \DABP{s},
which implies $\coNP$-completeness even for the acyclic variant.

\paragraph*{Query Complexity}
To analyze query complexity we first show how difficult it is to check 
whether a query returns the same answer over a database and an extension
of that database.

\begin{lemma}[Answer Difference]
\label{lem-cq:piptwo}
For every two fixed databases $\D \subseteq \D'$,
checking whether a given conjunctive query $Q$ satisfies $Q(\D)=Q(\D')$ is in $\PiPTWO$ in the query size.
Conversely, there exist databases $\D_0 \subseteq \D'_0$
such that checking for a conjunctive query $Q$ whether $Q(\D_0)=Q(\D'_0)$ 
is $\PiPTWO$-hard in the query size.
\end{lemma}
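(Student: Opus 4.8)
The plan is to prove the two bounds separately, in both cases exploiting the monotonicity of conjunctive queries (since $\D \incl \D'$, we always have $Q(\D) \incl Q(\D')$, so $Q(\D)=Q(\D')$ fails exactly when some answer tuple lies in $Q(\D') \setminus Q(\D)$).

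For membership in $\PiPTWO$, I would show that the complementary \emph{inequality} test is in $\SigmaPTWO$ and then complement. Because the databases are fixed, their active domain has constant size, so a variable assignment for $Q$ is an object of size linear in $|Q|$. A certificate for inequality is one such assignment $\mu$ mapping the variables of $Q$ into $\adom(\D')$: guessing $\mu$ (an $\NP$ step), one checks in polynomial time that $\mu$ satisfies the body of $Q$ in $\D'$, and then, writing $\ta = \mu(\textit{head})$, one verifies with a $\coNP$ oracle call that no assignment $\nu$ with $\nu(\textit{head})=\ta$ satisfies the body in $\D$, i.e.\ that $\ta \notin Q(\D)$. This places inequality in $\SigmaPTWO$, hence equality in $\PiPTWO$.

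For hardness I would reduce from the $\SigmaPTWO$-complete problem of deciding truth of a quantified Boolean formula $\Phi = \exists \tx\, \forall \ty\, \phi(\tx,\ty)$ with $\phi$ in 3-DNF, targeting the inequality problem; since inequality is the complement of equality, a reduction establishing $\Phi$ true $\iff Q(\D_0) \neq Q(\D'_0)$ simultaneously reduces the $\PiPTWO$-complete set of \emph{false} such formulas to equality, giving $\PiPTWO$-hardness. The databases are fixed over the domain $\set{0,1}$ and independent of $\Phi$: both contain the complement relation $\textit{Neg} = \set{(0,1),(1,0)}$, while the ternary relation $R$ is the full cube $\set{0,1}^3$ in $\D'_0$ and the disjunction relation $\set{0,1}^3 \setminus \set{(0,0,0)}$ in $\D_0$, so that $\D_0 \incl \D'_0$. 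The query encodes the 3-CNF $\neg\phi$ by the standard gadget: for each propositional variable $v$ a pair of query variables $X_v, Y_v$ tied by an atom $\textit{Neg}(X_v,Y_v)$ (forcing them Boolean and complementary), and for each clause of $\neg\phi$ an atom of $R$ whose arguments are the $X_v$ or $Y_v$ selected by that clause's literals; the distinguished (head) variables are exactly the $X_v$ with $v \in \tx$.

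The construction is arranged so that an answer tuple can only be a Boolean assignment $\ta$ to $\tx$. Over $\D'_0$ the relation $R$ is full, so every such $\ta$ extends to a satisfying assignment and hence $\ta \in Q(\D'_0)$ for all Boolean $\ta$; over $\D_0$ the atom for a clause holds iff that clause is true, so $\ta \in Q(\D_0)$ iff $\exists \ty\, \neg\phi(\ta,\ty)$, that is, $\ta \notin Q(\D_0)$ iff $\forall \ty\, \phi(\ta,\ty)$. Consequently $Q(\D_0) \neq Q(\D'_0)$ holds iff some Boolean $\ta$ satisfies $\forall \ty\, \phi(\ta,\ty)$, which is exactly $\Phi$. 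I expect the main obstacle to be this lower-bound bookkeeping: lining up the quantifier alternation so that the full relation in $\D'_0$ makes the $\NP$ conjunct ``$\ta \in Q(\D'_0)$'' vacuously true, leaving only the inner $\coNP$ condition over $\D_0$ to carry the $\forall \ty$, all while keeping $\D_0 \incl \D'_0$ and both databases fixed and independent of the query.
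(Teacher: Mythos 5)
Your proposal is correct. The membership half coincides with the paper's argument: the paper also observes that, by monotonicity of CQs, it suffices to check $Q(\D) \subsetneqq Q(\D')$, which can be done in \NP with an \NP oracle (your guess-$\mu$-then-$\coNP$-oracle certificate is exactly this), placing equality in \PiPTWO. Where you diverge is the lower bound: the paper reduces from the \emph{3-coloring extension} problem for graphs, citing its known \PiPTWO-completeness, whereas you reduce directly from $\Sigma_2$-QBF with a 3-DNF matrix (equivalently, from its complement for the equality side). Both reductions share the same structural trick --- two fixed databases over a constant domain in which $\D'_0$ carries a ``full'' gadget relation and $\D_0$ the properly restricted one (your OR-relation $\set{0,1}^3\setminus\set{(0,0,0)}$ versus the full cube), with the hard instance encoded entirely in the query and the free/head variables playing the role of the outer quantifier block. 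Your QBF route has the advantage of being self-contained modulo the textbook \SigmaPTWO-completeness of $\exists\tx\forall\ty\,\phi$ (Stockmeyer/Wrathall), while the paper's route outsources the quantifier-alternation work to the cited coloring-extension result and keeps the reduction graph-theoretic; neither buys additional generality, and your bookkeeping (monotonicity giving $Q(\D_0)\subseteq Q(\D'_0)$, fullness of $R$ in $\D'_0$ making membership in $Q(\D'_0)$ vacuous for every Boolean $\ta$, repeated literals handled by repeated variables in an atom) is sound.
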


\begin{proof}[Proof Idea]
The first claim holds since one can check 
$Q(\D) \subsetneqq Q(\D')$ 
in $\NP$ using an $\NP$~oracle.
We show the second by reducing the \emph{3-coloring extension} 
problem for graphs~\cite{Fagin-3ColorExtension}.
\end{proof}

Building upon Lemma~\ref{lem-cq:piptwo}, 
we can define an \DABP that starting from $\D_0$ produces $\D'_0$.
In fact, for such an \DABP it is enough to consider the simplest  variants of rowo.

%

\begin{proposition}
 \label{prop-hardness:query:complexity}
 Checking stability is $\PiPTWO$-hard for 
\begin{enumerate}[label=\alph*)]
  \item  positive fresh acyclic rowo \DABP{s}, and
  \item  positive closed acyclic rowo singleton \DABP{s}.
\end{enumerate} 
\end{proposition}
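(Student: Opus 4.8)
The plan is to reduce the answer-difference problem of Lemma~\ref{lem-cq:piptwo} to stability checking. That lemma fixes two databases $\D_0 \subseteq \D'_0$ for which deciding $Q(\D_0)=Q(\D'_0)$ is $\PiPTWO$-hard in the size of the conjunctive query $Q$. Since the databases are fixed and only $Q$ varies, I would build, once and for all, a \emph{fixed} positive acyclic \emph{rowo} \DABP $\B$ whose initial database contains $\D_0$ and whose executions can only add the finitely many facts of $\Delta := \D'_0 \setminus \D_0$, with at least one execution adding all of them. Then stability of the (variable) query $Q$ in $\B$ is equivalent to $Q(\D_0)=Q(\D'_0)$, which transfers $\PiPTWO$-hardness to query complexity.

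For the construction I would split the schema so as to respect rowo. Put every relation occurring in $\D_0$ or $\D'_0$ into the writing schema $\Sigma_w$, and introduce, for each fact $A=R(\tc)\in\Delta$, a fresh reading-schema relation $\mathit{Arg}_A$ of the arity of $R$ holding exactly the single tuple $\tc$; these $\mathit{Arg}_A$ facts belong to the fixed initial database and constitute the reading schema $\Sigma_r$, which is disjoint from $\Sigma_w$ as required. For each $A\in\Delta$ I add a transition $t_A$ with execution condition $\mathit{Arg}_A(\tu)$ (always satisfied) and writing rule $R(\tu)\la \mathit{Arg}_A(\tu)$, whose body ranges over $\Sigma_r$ and whose head lies in $\Sigma_w$. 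Arranging the places $\start=p_0,p_1,\dots,p_{|\Delta|}$ in a chain, with $t_A$ going from $p_{i-1}$ to $p_i$, yields an acyclic net. For part (b) a single initial instance placed at $\start$ traverses the whole chain and thereby produces all of $\Delta$; for part (a) the same effect is obtained by letting one fresh instance start at $\start$ and run to the end. Both variants are positive, and since $Q$ never mentions the auxiliary relations $\mathit{Arg}_A$, its answer on the initial database coincides with $Q(\D_0)$.

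Correctness would follow from monotonicity of conjunctive queries. Every reachable database has the form $\D_0\cup S$ with $S\subseteq\Delta$, hence is sandwiched as $\D_0\subseteq \D_0\cup S\subseteq\D'_0$; thus $Q(\D_0)\subseteq Q(\D_0\cup S)\subseteq Q(\D'_0)$. If $Q(\D_0)=Q(\D'_0)$ the two bounds coincide, so $Q$ returns the same answer on every reachable database and is stable. Conversely, the full chain traversal reaches exactly $\D'_0$, so if $Q(\D_0)\subsetneq Q(\D'_0)$ this execution witnesses a new answer and $Q$ is instable. Hence $Q$ is stable in $\B$ iff $Q(\D_0)=Q(\D'_0)$.

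The main obstacle is not the reduction logic but respecting rowo simultaneously with acyclicity and positivity: the process may not re-read the relations it is about to populate, so the constants of the target facts must be supplied from somewhere in the reading schema. The $\mathit{Arg}_A$ relations resolve this by hard-coding each target tuple as read-only data, which is legitimate precisely because $\Delta$ is a fixed finite set in this query-complexity reduction. The remaining point to verify is that no execution can produce anything outside $\Delta$ (so that all reachable databases stay below $\D'_0$) and that at least one execution reaches $\D'_0$ in full; the linear chain net guarantees both.
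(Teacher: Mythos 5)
Your proposal is correct and follows essentially the same route as the paper: the paper also reduces the answer-difference problem of Lemma~\ref{lem-cq:piptwo} to stability by constructing a fixed \DABP that, starting from $\D_0$, can produce exactly $\D'_0$, noting that the simplest (positive, acyclic, rowo, fresh or singleton) variants suffice for this. Your chain-of-transitions construction with read-only relations hard-coding the facts of $\D'_0 \setminus \D_0$, together with the monotonicity argument, is precisely the kind of instantiation the paper's sketch leaves implicit.
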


Given $\B = \tup{\P,\I,\D}$, there are finitely many maximal extensions $\D'$ of $\D$
that can be produced by $\B$.
We can check stability of a query $Q$ by finitely many checks whether $Q(\D) = Q(\D')$.
Since each such check is in $\PiPTWO$, according to Lemma~\ref{lem-cq:piptwo},
the entire check is in $\PiPTWO$.
Thus, stability is $\PiPTWO$-complete in query complexity.


\section{Closed \DABP{s}}
\label{subsubsec:normal:cyclic:arbitrary}
%

In the presence of negation, inserting new facts may disable transitions.
During an execution, a transition may switch many times between being enabled and disabled,
and greedy executions could have exponentially many critical steps.
An encoding along the ideas of the preceding section would lead to a program of exponential size.
This would give us an upper bound of double exponential time for combined complexity.
%
%
Instead, we establish a correspondence between stability and brave query answering for Datalog with (unstratified) negation under
\emph{stable model semantics} (SMS) 
\cite{Gottlob-Eiter-LP-Journal-2001}.
Due to lack of space we only state the results. 

\begin{theorem}  \label{theorem:sms}
For every closed \DABP $\B=\tup{\P,\I,\D}$ 
and every query $Q$ 
one can construct 
a Datalog program with negation $\Picl$, based on $\P$,
a database $\D_\I$, based on $\D$ and $\I$, 
and a test program $\PiQtest$, based on $Q$,
such that the following holds: \\[.7ex]
\centerline{%
$Q$ is instable in $\BPID$ 
\quad iff\quad
$\Picl \cup \D_\I \cup \PiQtest \mdl \instable$.}
\end{theorem}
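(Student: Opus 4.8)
The plan is to set up a tight correspondence between the maximal closed executions of $\B$ and the stable models of $\Picl \cup \D_\I$, so that instability — the existence of one execution producing a new $Q$-answer — matches brave entailment of $\instable$. The reason stable model semantics, rather than the stratified encoding of Section~\ref{sec-positive:closed}, is the right target is that with negation the set of enabled transitions evolves \emph{non-monotonically}: although data only grows, adding a fact can disable a transition whose condition contains a negated atom. Hence the enabledness of a transition at the moment it is traversed depends on exactly which facts have been produced by strictly earlier steps, and this information must be guessed rather than accumulated by a monotone fixpoint, which is precisely why the greedy-execution machinery and its stratified program do not transfer.

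Concretely, I would let $\Picl$ work in the guess-and-check style natural to SMS. The guess part uses unstratified negation — pairs of rules of the shape ``fire unless we decide otherwise, and vice versa'' — to non-deterministically select which $(o_j,t)$ traversals occur and, together with a guessed ordering on the produced facts, which facts end up in the final database. The check part enforces legality: for each selected traversal, at its position in the order the source place must be reachable by the instance $o_j$ through the net restricted to transitions enabled by the facts present at earlier positions, and the execution condition $E_t$ — now with genuine negated atoms evaluated against exactly those earlier facts — must hold. The foundedness inherent in stable models is what guarantees that no fact appears without a supporting legal traversal and that the guessed order is well-founded, so that an actual execution sequence can be read off from any stable model. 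Crucially, since the number of distinct producible facts is polynomial in the data (at most $rc^a$, as observed earlier), the guessed order needs only polynomially many positions measured along the chain $\D = D_0 \subseteq D_1 \subseteq \cdots$ of database states, which is what keeps the encoding within the size bound that later yields $\coNP$ data complexity.

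I would then take $\PiQtest$ to recompute $Q$ over the guessed final database via a primed relation $Q'$, analogously to the $Q'$-rule~\eqref{eqn-Q:prime:test:rule}, together with the test rule ``$\instable \la Q'(\tX), \neg Q(\tX)$'', where $Q$ is evaluated over $\D$. A stable model then makes $\instable$ true exactly when the execution it encodes produces a $Q$-answer that $\D$ did not already contain. Proving the theorem reduces to the two directions of the correspondence: from any instable execution one builds a stable model, checking its minimality and the foundedness of the guessed database against the actual production order; and from any stable model witnessing $\instable$ one extracts a legal execution producing the new answer, reading off a valid traversal order from the guessed ordering.

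The main obstacle is the faithful encoding of the temporal dimension — that a traversal is admissible only relative to the facts produced by \emph{strictly earlier} traversals, while the instance's position and the set of enabled transitions both shift with every step — inside a single flat logic program, and then showing that stable models correspond exactly to such well-ordered executions. I expect the forward direction to be largely bookkeeping, whereas the backward direction is delicate: the crux will be arguing that the foundedness of a stable model forces the guessed order to be acyclic and the guessed database to coincide with (not merely contain) what the encoded execution produces, ruling out ``phantom'' facts justified circularly through the negated conditions. Designing the guess and check rules so that they admit no such spurious stable models is where the real work lies.
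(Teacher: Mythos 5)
Your overall strategy is the paper's: a guess-and-check program under stable model semantics whose stable models correspond to length-bounded closed executions, combined with the test rule $\instable \la Q'(\tX), \neg Q(\tX)$, so that brave entailment of $\instable$ captures instability. The paper's concrete encoding, however, organizes the guess in a way that sidesteps exactly the points you flag as delicate. It first \emph{deterministically} generates (by ordering rules, from an exponentially smaller order on digits) a fixed linear order of size $mkrc^a$ whose elements index \emph{every} execution step, producing or not; it then guesses, for each position, one instance and one transition via complementary unstratified rules ($\InGuess/\NonInGuess$, $\TGuess/\NonTGuess$) plus constraints; a relation $\Done$ propagates along the order, checking at each step that the selected instance sits at the source place and satisfies $E_t$ over the facts derived for earlier positions; generation rules then \emph{derive} (rather than guess) the produced facts, indexed by step. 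Because the order is fixed and all steps are enumerated, there is no guessed ordering whose acyclicity must be forced by foundedness, no phantom-fact concern, and no reasoning at all about how instances move between fact productions. Note also that this order is already polynomial in the data (and in the instances) for a fixed process, so your finer compression is not needed for the $\coNP$ bounds.

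Your compression --- indexing only fact-production events and letting instances move ``for free'' in between via reachability in the enabled graph --- has a genuine gap as stated: the connecting path must consist of transitions that are not only \emph{enabled} for the instance but whose writing rules \emph{produce no new facts} over the current database (and the place each instance occupies at the start of each epoch must itself be tracked, since reachability is from that place). If a transition on the path would produce a new fact, then replaying the path in the actual process changes the database mid-epoch; with negation, this can disable a later guessed traversal or falsify its condition, so a stable model entailing $\instable$ need not correspond to any real execution --- the soundness direction of your correspondence fails, and the program could declare a stable query instable. The fix is exactly this restriction (which also preserves completeness, since in a real execution the steps between producing traversals are by definition non-producing), echoing how the enabled-graph machinery of Section~\ref{sec-positive:closed} is used in the positive case; but it is a necessary ingredient your plan omits, and it is the kind of spurious-model issue the paper's all-steps encoding is designed to avoid having to handle at all.
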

\begin{proof}[Proof Idea]
For the same reason as in the positive variant,
it is sufficient to consider executions of maximal length 
$mkrc^a$.
Program $\Picl$ contains two parts: 
\ei a program 
that generates a linear order of size $mkrc^a$ 
(with parameters $m, k, r, c, a$ defined as in Section~\ref{sec-positive:closed}),
starting from an exponentially smaller order,
that is used to enumerate execution steps, and
\eii a program that \quotes{guesses} an execution of size up to $mkrc^a$
by selecting for each execution step  one instance and one transition, 
and that produces the facts that would be produced by the guessed execution.
Then each execution corresponds to one stable model.
The test  program $ \PiQtest$ checks if any of the guessed executions yields a new query answer.
\end{proof}
In  Theorem~\ref{theorem:sms}, the process is encoded in the program rules
while data and instances are encoded as facts.
Since brave reasoning under SMS is $\NEXPTIME$ in program size and $\NP$ in data size~\cite{Gottlob-Eiter-LP-Journal-2001},
we have that process and combined complexity are in $\coNEXPTIME$,
and data and instance complexity are in $\coNP$.
%
From this and Lemma~\ref{lem:instance:complexity:closed:acyclic}
it follows that instance complexity is $\coNP$-complete.
To show that stability is 
$\coNEXPTIME$-complete in process and $\coNP$-complete in data
complexity we encode brave reasoning into stability.
Query complexity is $\PiPTWO$-complete for the same reasons as in the positive variant.

\begin{theorem}
\label{proposition:cy-cl-process-hd:extended}
For every Datalog program $\Pi \cup \D$, possibly with negation, and fact~$A$, 
one can construct a singleton \DABP $\tup{\P_{\Pi,A},\I_0,\D}$
such that for the query $\Qtest\la\dummy$ we have:
$\Pi \cup D \mdl A$ \ iff \ $\Qtest$ is stable in $\tup{\P_{\Pi,A},\I_0,\D}$
under closed executions.
\end{theorem}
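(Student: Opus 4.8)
The plan is to read Theorem~\ref{theorem:sms} backwards. There, a closed \DABP was encoded as a program so that \emph{instability} of the query matched brave entailment of $\instable$; here I instead start from an arbitrary program $\Pi\cup\D$ with negation and a fact $A$, and build a \emph{singleton} closed \DABP $\tup{\P_{\Pi,A},\I_0,\D}$ whose single instance, over one closed execution, guesses a candidate set of ground atoms and then verifies, using negation in the execution conditions, that the candidate is a stable model of $\Pi\cup\D$. Each maximal sequence of transition choices of the instance yields one candidate, and the verification lets only the stable models survive, so the completing closed executions are placed in bijection with the stable models of $\Pi\cup\D$. The test query $\Qtest\la\dummy$, with $\dummy\notin\D$, then reads off whether $A$ belongs to the verified model, turning brave entailment of $A$ into a statement about which executions can produce $\dummy$, and hence into the stability of $\Qtest$.

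Concretely I would proceed in three steps. First, fix a schema holding a version-indexed copy of each relation of $\Pi$ together with bookkeeping relations, reusing the \emph{updates-by-versioning} device of Section~\ref{sec:undecidable}: because the \DABP is monotonic, the stepwise computation of the Gelfond--Lifschitz reduct is simulated by writing successively higher versions and reading the last version through negation. Second, design the net of $\P_{\Pi,A}$ as a guessing gadget followed by a checking gadget. The guessing gadget offers, for each relevant ground atom, two alternative transitions that record the atom as chosen-in or chosen-out; the checking gadget has one transition per rule of $\Pi$ whose execution condition (with negated atoms for the negative body and for support) fires exactly when the candidate is both closed under and founded on its reduct, i.e.\ is a stable model. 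Third, let the final transition emit $\dummy$ in accordance with the membership of $A$ in the verified model, and invoke Lemma~\ref{lem:positive:cyclic:complexity}-style reasoning to see that, as in Theorem~\ref{theorem:sms}, it suffices to consider executions up to length $mkrc^a$; the guess-and-check net respects this bound since it makes one choice per atom and one pass per rule. The singleton/closed restriction is exactly what lets the whole guess-and-check take place inside a single run, with no fresh instances needed.

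The hard part is the foundedness test: a candidate must be accepted iff it equals the least model of its own reduct, not merely \emph{a} model of it, so that the surviving executions are the stable models and nothing else. This is where the version argument and the ``last-version'' reading by negation do the real work, mirroring the fixpoint simulation of the Turing-machine encoding; getting the support check right, so that self-supporting guesses are rejected, is the delicate point. A secondary concern is that the guessing gadget must range only over the polynomially many relevant ground atoms (those over the active domain with the bounded arity), so that the construction stays within the intended size and the claimed data bound. Once the bijection between completing executions and stable models is established, the membership test carried by $\dummy$ yields the equivalence $\Pi\cup\D\mdl A$ iff $\Qtest$ is stable in $\tup{\P_{\Pi,A},\I_0,\D}$ under closed executions; since brave reasoning under $\SM$ is $\NEXPTIME$-hard in program size and $\NP$-hard in data size, this reduction transfers those bounds to the complementary stability problem, delivering the $\coNEXPTIME$ and $\coNP$ lower bounds that match the upper bounds of Theorem~\ref{theorem:sms}.
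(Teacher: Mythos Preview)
Your high-level plan---have the single instance guess a candidate interpretation $M$, verify that $M$ is a stable model, and emit $\dummy$ iff $A\in M$---is exactly the paper's strategy. But two of your technical choices would break the reduction.

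First, the guessing gadget. You propose ``for each relevant ground atom, two alternative transitions'' and then assert there are ``polynomially many relevant ground atoms.'' For process complexity the program $\Pi$ is the input, so the maximal arity $a$ varies and the number of ground atoms over the constants is $\Theta(c^a)$, not polynomial. A net with one choice place per ground atom would be exponentially large and you would lose the $\coNEXPTIME$ lower bound. The paper avoids this by first computing, inside the process, a successor relation $\Succ^j$ on tuples (Subprocess~1), and then using a \emph{single} loop that walks this order and at each step nondeterministically includes or excludes the current $R$-atom (Subprocess~2). The net stays polynomial; the exponentially many choices are realized as exponentially many closed executions of a small cyclic net.

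Second, the foundedness test. You identify it as ``the delicate point'' but your device---version-indexed relations with ``last version via negation,'' borrowed from Section~\ref{sec:undecidable}---does not transfer: that encoding relies on \emph{fresh constants} to generate unboundedly many versions, and a closed singleton \DABP has none. The paper instead uses a second layer of guess-and-check: Subprocess~3 nondeterministically fires the reduct rules into primed relations, producing some $M'$ that is a subset of the least model of $\Pi^M$; Subprocess~4 then checks (one transition per rule) that no further reduct rule is applicable, so $M'$ is in fact the minimal model; Subprocess~5 checks $M'=M$ relation by relation. Only then does Subprocess~6 emit $\dummy$ if $A\in M$. This double nondeterminism---guess $M$, then guess the fixpoint of $\Pi^M$ and verify it---is what makes the minimality check implementable with a polynomial-size cyclic net and no fresh values.
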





\section{Acyclic Closed \DABP{s}} \label{sec:acyclic:arbitrary}

If a process net is cycle-free, all closed executions have finite length.
More specifically, in an acyclic \DABP with $m$ transitions and $k$ running instances,
the maximal length of an execution is $mk$.
%
%
Based on this observation, we modify the encoding for the positive closed variant in Section~\ref{sec-positive:closed}
so that it can cope with negation and exploit the absence of cycles.

For an acyclic \DABP, there cannot exist any greedy steps, which would stay in a strongly connected component of the net.
Therefore, we drop the encodings of greedy traversals and the greedy generation rules.
We keep the rules for critical steps, but drop the atoms of relations $\Reach^i$ and $\SCC^i$.
Now, in contrast to the positive closed variant, we may have negation in 
the conditions $E_t$ and $B_t$.
However, the modified Datalog program is non-recursive, since each relation $R^i$ and $\State^i$ 
is defined in terms of $R^j$'s and $\State^j$'s where $j<i$.

Let $\PiPIQacyccl$ be the program encoding an acyclic $\BPID$ as described above
and let  $\PiPIQtest$ be the test program as in the cyclic variant.

\begin{theorem} \label{theorem:encoding:closed:acyclic:checking}
		$Q$ is instable in the closed acyclic $\B$ 
		\ iff \quad
		$\PiPIQacyccl \cup \D \cup \PiPIQtest \models \instable$.
\end{theorem}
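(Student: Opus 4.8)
The plan is to follow the blueprint of the positive closed case and reduce the theorem to a per-execution correctness lemma analogous to Lemma~\ref{lem:encoding:positive:closed:cyclic:extended}: for every closed execution $\tomega$ of length $i$ in the acyclic $\B$ and every fact $R(\ts)$, the fact $R(\ts)$ is produced by $\tomega$ iff $\PiPIQacyccl \cup \D \models R^i(\tomega;\ts)$. The crucial structural simplification exploited by the encoding is that acyclicity forces \emph{every} step to be critical: no instance can return to a place it has left, so there are no nontrivial strongly connected components, greedy sequences are empty, and every closed execution is exactly a sequence of at most $mk$ critical steps $\langle o_{l_1},t_{h_1},\dots\rangle=\tomega$. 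This is precisely why it is sound and complete to drop the $\Reach^i$ and $\SCC^i$ atoms together with the greedy generation rules, and to require in the (simplified) traversal rule that the moving instance $o_j$ sit exactly at the source place $q$ of the transition it traverses, rather than merely be able to reach $q$ within an SCC.

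I would prove the lemma by induction on $i$. First I would record that $\PiPIQacyccl$ is non-recursive: the dependency graph among the indexed predicates is acyclic, since each $\State^{i+1}$ depends only on predicates of index $\le i$ and each $R^{i+1}$ only on $\State^{i+1}$ and predicates of index $\le i$ (never conversely), while every negated atom occurs inside some $E^i_t$ or $B^i_t$ and hence refers to an $R^i$ of strictly smaller index. Thus negation is stratified by the index, the program has a unique perfect model, and $\models$ is unambiguous and agrees with layered bottom-up evaluation. For the base case $i=0$, the initialization rules $R^0(\tX)\la R(\tX)$ and the $\State^0$-fact record exactly $\D$ and the initial placement, which is what the empty execution produces. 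For the inductive step, any execution $\tomega'$ of length $i+1$ is some $\tomega$ of length $i$ followed by one instance $o_j$ traversing a transition $t$ from $q$ to $p$; by the induction hypothesis $R^i(\tomega;\cdot)$ equals the database $D_{\tomega}$ produced by $\tomega$ and $\State^i(\tomega;\cdot)$ records the positions after $\tomega$. Because $R^i(\tomega;\cdot)$ is \emph{complete} at this point, the condition $E^i_t(\tomega;o_j)$ --- including its negated atoms, evaluated under closed world at level $i$, and the input record supplied via $\Inz$ --- holds iff $t$ is enabled for $o_j$ over $D_{\tomega}\cup\{\In(\tc,\tau')\}$; the traversal rule then fires $\State^{i+1}(\tomega';\dots,p,\dots)$ exactly when the move is legal, and the critical generation rule together with the copy rule set $R^{i+1}(\tomega';\cdot)=D_{\tomega}\cup W_t(D_{\tomega}\cup\{\In(\tc,\tau')\})=D_{\tomega'}$, as required.

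With the lemma in hand, the theorem follows exactly as Theorem~\ref{col:encoding:closed:acyclic:checking}. The query $Q$ is instable in the closed acyclic $\B$ iff some closed execution produces a new answer, that is, some $\tomega$ of length $i\le mk$ and some tuple $\tX$ with $\tX\in Q(D_{\tomega})\setminus Q(\D)$. The $Q'$-rules of $\PiPIQtest$ evaluate $Q$ over the relations $R^i(\tomega;\cdot)$, which by the lemma equal $D_{\tomega}$, so $Q'$ collects exactly the answers attainable in some execution; the test rule $\instable \la Q'(\tX),\neg Q(\tX)$ then derives $\instable$ iff one of those answers is new. I expect the main obstacle to be the careful treatment of negation: one must argue that the index-stratification makes the model unique and that at each level the relations $R^i(\tomega;\cdot)$ form the \emph{complete} database of $\tomega$, so that the negated conditions governing enabledness and writing are evaluated against the correct, fully determined database --- something monotonicity handed us for free in Lemma~\ref{lem:encoding:positive:closed:cyclic:extended}, but which must now be secured through acyclicity and the non-recursive structure.
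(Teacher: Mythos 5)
Your proposal is correct and follows essentially the same route as the paper: you modify the positive closed encoding exactly as the paper does (dropping the greedy/$\Reach^i$/$\SCC^i$ machinery while keeping the critical-step traversal, copy, and critical generation rules), justify this by the observation that acyclicity makes every step critical, and handle negation via the index-stratification that renders $\PiPIQacyccl$ non-recursive. The per-execution correctness lemma you prove by induction on $i$ is precisely the acyclic analogue of Lemma~\ref{lem:encoding:positive:closed:cyclic:extended} on which the paper's argument implicitly rests, and your derivation of the theorem from it mirrors Theorem~\ref{col:encoding:closed:acyclic:checking}.
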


\paragraph*{Complexity}
As upper bounds for combined and data complexity,
the encoding gives us the analogous bounds for non-recursive $\DATALOGNEG$ programs,
that is, $\PSPACE$ in combined and $\ACz$ in data complexity
\cite{Gottlob-Eiter-LP-Journal-2001}. 
Already in the positive variant, 
we inherit $\PSPACE$-hardness of process complexity (and therefore also of combined complexity)
from the program complexity of non-recursive \DATALOG.
We obtain matching lower bounds by a reverse encoding.

\begin{lemma} \label{lem:acyclic:complexity}
For every non-recursive Datalog program $\Pi$ and every fact $A$,
one can construct a singleton acyclic positive \DABP $\tup{\P_{\Pi,A},\C_0}$ such that 
for the query $\Qtest\la\dummy$ we have:
$\Pi \not \models A$ iff  $\Qtest$ is stable in $\tup{\P_{\Pi,A}, \C_0}$ under closed executions.
\end{lemma}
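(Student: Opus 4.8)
The plan is to reduce minimal-model membership of a non-recursive positive Datalog program to stability checking, by building a process that simulates a single bottom-up pass over $\Pi$ along an acyclic net. Since $\Pi$ is non-recursive, its predicate dependency graph is a DAG, so I would first fix a topological ordering of the rules $r_1,\dots,r_L$ of $\Pi$ such that whenever the head predicate of a rule also occurs in the body of a rule $r_i$, that defining rule appears strictly before $r_i$. Such an ordering exists precisely because $\Pi$ is non-recursive, and it guarantees that by the time a rule is to be applied, every predicate in its body has already been completely derived.

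Then I would take $\P_{\Pi,A}$ to be a simple path (hence acyclic) $\start = p_0 \to p_1 \to \cdots \to p_L \to p_{L+1}$, associating with the transition $t_i$ from $p_{i-1}$ to $p_i$ a trivial (always-enabled) execution condition and the writing rule that \emph{is} the Datalog rule $r_i$. Because $\Pi$ is positive, all these writing rules are positive, so the resulting \DABP is positive. The initial configuration $\C_0$ places a single instance at $\start$ (making the \DABP a singleton) with a fixed database $\D_0$ holding the EDB facts of $\Pi$. The final transition $t_{L+1}$ from $p_L$ to $p_{L+1}$ carries the execution condition that tests for the presence of the target fact $A$ and the writing rule that inserts $\dummy$.

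The correctness core is an invariant proved by induction on $i$: after the instance traverses $t_1,\dots,t_i$, the database equals the restriction of the minimal model $M_\Pi$ to those predicates all of whose defining rules lie among $r_1,\dots,r_i$. The inductive step uses the ordering: when $t_i$ fires, every body predicate of $r_i$ is already fully computed, so one application of $r_i$ contributes all of its derivable head facts. Combined with monotonicity — every written fact lies in $M_\Pi$, since $\D_0 \subseteq M_\Pi$ and $M_\Pi$ is closed under the rules of $\Pi$ — this yields that after $t_L$ the database is exactly $M_\Pi$. Consequently $\dummy$ is producible in some closed execution iff $t_{L+1}$ can be enabled iff $A \in M_\Pi$ iff $\Pi \models A$. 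As $\D_0$ contains no $\dummy$, the query $\Qtest\la\dummy$ is stable iff $\dummy$ is never produced, i.e.\ iff $\Pi \not\models A$, which is the claim.

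The main obstacle I anticipate is establishing this invariant cleanly when a predicate is defined by several rules and when EDB atoms appear in bodies: I must argue that the chosen ordering forces \emph{all} rules defining a predicate $P$ to be traversed before any transition whose writing rule reads $P$, so that the ``one pass suffices'' property of non-recursive programs is faithfully mirrored by the single acyclic traversal. Everything else — positivity, acyclicity, the singleton initial configuration, and the fact that no partial execution can reach $t_{L+1}$ before completing the pass — is then routine, and the construction also yields the intended $\PSPACE$-hardness in process complexity since $\Pi$ is encoded entirely in $\P_{\Pi,A}$ with $\D_0$ fixed.
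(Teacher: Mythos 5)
Your construction is correct and is essentially the paper's own ``reverse encoding'': the paper's (only sketched) proof, by analogy with Lemma~\ref{lem:positive:cyclic:complexity}, likewise introduces one transition per rule of $\Pi$ with a final transition writing $\dummy$ once $A$ has been derived, the sole difference being that non-recursiveness lets the net be a single acyclic pass instead of a cycle iterating to the least fixed point. Your topological ordering of the rules and the one-pass invariant (every body predicate is fully computed before any rule reading it fires, and every produced fact lies in the minimal model) are precisely the details needed to make that sketch rigorous, and they are sound as you state them.
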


We observe that for closed executions, the cycles increase the complexity, 
and moreover, cause a split between variants with and without negation.
Lemma~\ref{lem:instance:complexity:closed:acyclic} and 
Theorem~\ref{theorem:sms} together imply
that instance complexity is $\coNP$-complete.
Query complexity is $\PiPTWO$-complete for the same reasons as in other closed variants.



\section{Positive Fresh \DABP{s}} 
  \label{sec:DABP-Datalog-power-short}

All decidable variants of \DABP{s} that we investigated until now 
were so because we allowed only closed executions.
In this and the next section we show that decidability 
can also be guaranteed if conditions and rules are positive,
or if relations are divided into read and write relations (rowo).
We look first at the case where initially there are no running instances.

When fresh instances start, their input can bring an arbitrary 
number of new constants into the database.
Thus, processes can produce arbitrarily many new facts. 
First we show how 
infinitely many executions of a positive or rowo \DABP 
can be faithfully abstracted to finitely many over 
a simplified process such that 
a query is stable over the original process iff 
it is stable over the simplified one.
%
%
For such simplified positive \DABP{s}, we show how to encode stability checking into query answering in Datalog.

\paragraph*{Abstraction Principle}
  
Let $\B=\tup{\P,\I,\D,\tauB}$ be a positive or rowo \DABP
and let $Q$ be a query that we want to check for stability.
Based on $\B$ and $Q$ 
we construct an \DABP $\B'=\tup{\P',\I,\D,\tauB}$ 
that has the same impact on the stability of $Q$
but uses at most linearly many fresh values from the domain.
%

%
Let $\adom$ be the active domain of $\B$ and $Q$, 
that is the set of all constants appearing in $\B$ and $Q$.
Let $\tau_1,\dots,\tau_n$ be all timestamps including $\tauB$
that appear in comparisons in $\B$ such that $\tau_i<\tau_{i+1}$. 
We introduce $n+1$ 
many fresh timestamps $\tau'_0,\ldots,\tau'_n \not \in \adom$ such that
$
\tau_0'<\tau_1<\tau_1'< \dots < \tau_n < \tau_n'.
$
If there are no comparisons in $\B$ we introduce one fresh timestamp $\tau'_0$.
Further, let $a$ be a fresh value such that $a \not \in \adom$.
Let $\adomExt = \adom \cup \set{\tau'_0,\ldots,\tau'_n} \cup \set{a}$ 
be the \emph{extended active domain}.

%
Then, 
we introduce the \emph{discretization} function
$\delta_{\B} \col \domQPos \to \domQPos$
that based on $\adomExt$
\quotes{discretizes} 
$\domQPos$ 
as follows:
for each $\tau \in \RationalsPos$
\ei 
$\delta_{\B}(\tau)= \tau$ if $\tau =\tau_i$ for some $i$;
\eii
$\delta_{\B}(\tau)= \tau_i'$ if $\tau_i<\tau< \tau_{i+1}$ for some $i$;
\eiii
$\delta_{\B}(\tau)= \tau_0'$ if $\tau<\tau_1$;
\eiv and
$\delta_{\B}(\tau)= \tau_n'$ if $\tau_n<\tau$;
\ev
and for $c\in \dom$ if $c \in \adomExt$ then $\delta_{\B}(c)=c$;
otherwise $\delta_{\B}(c)=a$.
If $\B$ has no comparisons then $\delta_{\B}(\tau)= \tau_0'$
for each $\tau$.
%
We extend $\delta_{\B}$ to all syntactic objects containing constants,
including executions.
Now, we define $\P'$ to be as $\P$, except that 
we add conditions on each outing transition from $\start$ 
such that only instances with values from $\adomExt$ 
can traverse, and instances with the timestamps greater or equal than $\tauB$.

\begin{proposition}[Abstraction]   
      \label{prop-abstractions:of:timestamps}
Let 
$\Upsilon= \C \rightsquigarrow 
	\C_1 \rightsquigarrow \dots \rightsquigarrow\C_m$
be an execution in $\B$ that produces a set of facts $W$, and let 
$\Upsilon'=\delta_{\B}\Upsilon =\delta_{\B} \C \rightsquigarrow 
	\delta_{\B} \C_1 \rightsquigarrow \dots \rightsquigarrow \delta_{\B}\C_m$.
Further, 
let $\Upsilon''$ be an execution in $\B'$. 
Then the following holds:
\begin{enumerate}[label=\alph*)]
\item
$\Upsilon'$ is an execution in $\B'$ that produces $\delta_{\B} W$;
\item 
$Q(\D)\neq Q( \D \cup W)$ iff $Q(\D)\neq Q( \D \cup \delta_{\B} W)$;
\item
$\Upsilon''$ is an execution in $\B$. 
\end{enumerate}
\end{proposition}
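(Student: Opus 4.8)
The three parts all rest on a single \emph{discretization lemma}: viewed as a substitution on values, $\delta_\B$ fixes every element of the active domain $\adom$ pointwise, maps $\domQPos$ into the extended active domain $\adomExt$, and respects the only semantic tests that occur in $\B$. I would first prove that for every positive atom, join, and admissible timestamp inequality appearing in an execution condition $E_t$ or a writing-rule body $B_t$, satisfaction is preserved when $\delta_\B$ is applied simultaneously to the database, the input record, and the matching substitution. For relational atoms this is just the statement that $\delta_\B$ is a homomorphism on facts; for the inequalities it follows from the construction of the interleaving points $\tau_0'<\tau_1<\tau_1'<\dots<\tau_n<\tau_n'$, which guarantees that $x<\tau_i$ (resp.\ $x\le\tau_i$) holds iff $\delta_\B(x)<\tau_i$ (resp.\ $\le\tau_i$), since each open interval between consecutive comparison timestamps is collapsed to a representative $\tau_i'$ strictly inside it. In the \emph{positive} case no negation occurs, so forward preservation of conjunctive conditions suffices. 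In the \emph{rowo} case conditions may be negated, but then every tested relation lies in the reading schema $\Sigma_r$, which is never written; its extension is frozen at its value in $\D$, and since $\delta_\B$ fixes $\D$ a negated read-atom holds before discretization iff it holds afterwards. This is where the rowo restriction does its work.

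Granting this lemma, part (a) is an induction on the length of $\Upsilon$. At a traversal step the discretization lemma shows that the (possibly negated) condition $E_t$ still evaluates to $\true$ over $\delta_\B$ of the current database together with the input record, so the step stays enabled, and the writing rule emits exactly $\delta_\B$ of the facts it emitted originally. At a start step I must additionally verify the extra guards that $\P'$ places on the transitions leaving $\start$: the new instance must carry only values from $\adomExt$ and a timestamp $\ge\tauB$. The first holds because $\delta_\B$ maps every constant into $\adomExt$ by definition; the second holds because $\tauB$ is among the comparison timestamps, hence a fixed point of $\delta_\B$, and $\delta_\B$ is order-preserving enough that any timestamp $\ge\tauB$ is sent to a value $\ge\tauB$. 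Accumulating over all steps, $\delta_\B\Upsilon$ is a legal execution of $\B'$ producing $\delta_\B W$.

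Part (c) is the easiest: $\P'$ is obtained from $\P$ only by \emph{adding} conjuncts (the $\adomExt$ and timestamp guards) to the start-transitions, leaving the net, the remaining conditions, and all writing rules untouched. Hence every atomic step legal in $\B'$ is a fortiori legal in $\B$ and produces the same facts, so any $\B'$-execution $\Upsilon''$ is verbatim a $\B$-execution. For part (b) the substantive direction is ``$\Rightarrow$''. If $Q(\D)\neq Q(\D\cup W)$ then, by monotonicity of conjunctive queries, some tuple $\ttr\in Q(\D\cup W)\setminus Q(\D)$ is witnessed by a matching $h$ of the body of $Q$ into $\D\cup W$. Composing with $\delta_\B$ and using that $\delta_\B(\D\cup W)=\D\cup\delta_\B W$ (because $\delta_\B$ fixes $\D$), the map $\delta_\B\circ h$ witnesses $\delta_\B\ttr\in Q(\D\cup\delta_\B W)$. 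It remains to see this answer is still new: every element of $Q(\D)$ is a tuple over $\adom$, on which $\delta_\B$ is the identity, so if $\delta_\B\ttr$ lay in $Q(\D)$ its components would all be in $\adom$, forcing $\ttr=\delta_\B\ttr\in Q(\D)$, a contradiction. Thus $Q(\D)\neq Q(\D\cup\delta_\B W)$.

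The main obstacle is the reverse inclusion in part (b): a new answer over the discretized database must be traced back to a new answer over the original one. The difficulty is intrinsic to collapsing all foreign constants to a single representative $a$: a matching $g$ realising a new answer over $\D\cup\delta_\B W$ may send two body variables to $a$ and satisfy two atoms using $\delta_\B$-images of facts whose $W$-preimages carry \emph{different} fresh constants, so a naive atom-by-atom lifting of $g$ to $\D\cup W$ need not be consistent on shared variables. Controlling this identification is exactly the delicate point, and it is why the interleaving timestamps and the fresh data representative must be arranged so that distinct witnesses demanded by the join structure of $Q$ cannot be forced to merge; I would therefore prove this direction by a lifting argument that replaces each collapsed value by a consistently chosen preimage, tracking which coincidences are genuinely required by $Q$. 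I note, finally, that the global stability equivalence this proposition serves uses only the ``$\Rightarrow$'' direction of (b) together with (c): instability transported from $\B'$ back to $\B$ is delivered directly by (c), so the downstream argument is insulated from this obstacle.
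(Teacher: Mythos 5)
Your treatment of parts (a), (c), and the forward half of (b) is correct, and it follows the natural route: a discretization lemma saying that $\delta_{\B}$ fixes $\adom$ pointwise, has range inside $\adomExt$, acts as a homomorphism on relational facts, and preserves the one-sided timestamp comparisons because every open interval between consecutive comparison constants is collapsed to a representative strictly inside it; then induction over atomic steps for (a) (including the extra $\start$-guards of $\P'$, which your monotonicity remark handles), the observation that $\P'$ only strengthens conditions for (c), and composition of the query homomorphism with $\delta_{\B}$, plus the fact that the representatives $a,\tau_0',\dots,\tau_n'$ lie outside $\adom$, for (b)$\Rightarrow$. Your handling of negated read-only atoms in the rowo case is also right. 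One caveat: for $\delta_{\B}\C=\C$ and $\delta_{\B}(\D\cup W)=\D\cup\delta_{\B}W$ you need $\delta_{\B}$ to fix every constant of $\D$, $\I$ and $Q$, including timestamps that occur there but not in any comparison; clauses (i)--(iv) of the paper's definition do not literally say this, so you are silently --- but correctly, since otherwise the statement cannot hold --- reading $\delta_{\B}$ as the identity on all of $\adom$.

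The genuine gap is part (b): it is stated as an ``iff'', you prove only one direction, and the ``lifting argument'' you sketch for the other direction cannot be completed, because that direction is false as a per-execution claim. Concretely: let $\D=\emptyset$, let the net have two transitions $t_1,t_2$ from $\start$ to a place $p$, both with condition $\true$ and with writing rules $R(X)\la\In(X,T)$ and $S(X)\la\In(X,T)$, and let $Q\la R(X),S(X)$. The execution $\Upsilon$ that starts two fresh instances with records $\In(c_1,1)$ and $\In(c_2,2)$ and lets them traverse $t_1$ and $t_2$ respectively produces $W=\set{R(c_1),S(c_2)}$, so $Q(\D)=Q(\D\cup W)=\false$; but $c_1,c_2\notin\adomExt$ both collapse to $a$, so $\delta_{\B}W=\set{R(a),S(a)}$ and $Q(\D\cup\delta_{\B}W)=\true$. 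This is exactly the join-collapse phenomenon you called the ``delicate'' point; it is not delicate but fatal to the claimed equivalence, so no consistent choice of preimages exists. The correct resolution is the one in your closing remark: the per-execution statement should assert only the implication (b)$\Rightarrow$, and instability is transported from $\B'$ back to $\B$ not by inverting $\delta_{\B}$ but by part (c), since the abstracted execution is itself an execution of $\B$ (in the example, $\delta_{\B}\Upsilon$ witnesses that $Q$ is in fact instable in $\B$, so the stability equivalence between $\B$ and $\B'$ is unharmed). With (b) weakened to its forward direction, your proof is complete and still supports Theorem~\ref{th-instability} unchanged.
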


In other words, 
each execution in $\B$ can be $\delta_{\B}$-abstracted and it will be an execution in~$\B'$, 
and more importantly, 
an execution in $\B$ produces a new query answer if and only if 
the $\delta_{\B}$-abstracted version produces a new query answer in $\B'$.
%

\paragraph*{Encoding into Datalog}

Since $\B'$ allows only finitely many new values in fresh instances, 
there is a bound on the maximal extensions of $\D$ that can be produced.
Moreover, since there is no bound on the number of fresh instances that can start, 
there is only a single maximal extension of $\D$, say $\D'$, that can result from $\B'$.
We now define the program $\PiPQposfr \cup \D$ whose least fixpoint is exactly this $\D'$.





First, we introduce the relations that we use in the encoding.
To record which fresh instances can reach a place $p$ in $\P$, 
we introduce for each $p$ a relation $\In_p$ with the same arity as $\In$.
That is, $\In_p(\ts)$ evaluates to true in the program iff
an instance with the input record $\In(\ts)$ can reach $p$.
As in the closed variant, we use a primed version $R'$ for each relation $R$ 
to store $R$-facts produced by the process.

Now we define the rules.
Initially, all relevant fresh instances 
(those with constants from $\adomExt$) sit at the $\start$ place.
We encode this by the \emph{introduction rule}:
$
\In_\start(X_1,\dots,X_n) \la \adomExt(X_1),\dots,\adomExt(X_n).
$
Here, with slight abuse of notation, $\adomExt$ represents a unary relation that we initially instantiate with the constants from  $\adomExt$.
Also initially, we make a primed copy of each database fact, 
that is, for each relation $R$ in $\P$ we define the \emph{copy rule}:
$R'(\tX) \la R(\tX)$. 

Then we encode instance traversals.
For every transition~$t$ that goes from a place~$q$ to a place $p$,
we introduce a \emph{traversal rule} that mimics how instances having reached $q$ 
move on to $p$, 
provided their input record satisfies the execution condition for $t$. 
Let 
$E_t = \In(\ts), R_1(\ts_1), \dots, R_l(\ts_l), G_t$
be the execution condition for $t$,
where $G_t$ comprises the comparisons.
We define the condition 
$E_t'(\ts)$ as $\In_q(\ts), R'_1(\ts_1), \dots, R'_l(\ts_l), G_t$,
obtained from $E_t$ by renaming the $\In$-atom and priming all database relations.
Then, the \emph{traversal rule} for $t$ is:
$
    \In_p(\ts) \la E'_t(\ts).
$
Here,  $E'_t(\ts)$ is defined over the primed signature
since a disabled transition may become enabled as new facts are produced.
%

Which facts are produced by traversing $t$ is captured by 
a \emph{generation rule}.
Let $W_t \colon$ $R(\tu) \la B_t(\tu)$ 
be the writing rule for~$t$, with the query
$B_t(\tu) \la \In(\ts'), R_1(\ts'_1), \dots, R_n(\ts'_n), M_t$,
where $M_t$ comprises the comparisons.
Define 
$B'_t(\ts',\tu) \la \In_q(\ts'), R'_1(\ts'_1), \dots, R'_n(\ts'_n), M_t$.
The corresponding generation rule is
$
     R'(\tu) \la E_t'(\ts) ,B_t'(\ts',\tu), \ts=\ts',
$
which combines the constraints on the instance record from $E_t$ and $W_t$.

Let $\PiPQposfr$ be the program defined above, encoding the positive fresh $\B'$ obtained from $\B$.
The program is constructed in such a way that it computes exactly the atoms
that are in the maximal extension $\D'$ of $\D$ produced by $\B'$.
%
Let $R'(\tpl v)$  be a fact. 

\begin{lemma}
\label{lemma:encoding:positive}
There is an execution in the positive fresh $\B$ producing $R(\tpl v)$  \ iff\\ 
 $\PiPQposfr \cup \D \models R'(\tpl v)$.
\end{lemma}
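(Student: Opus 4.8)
The plan is to prove the equivalence by routing it through the abstracted process $\B'$: first I would establish an exact correspondence between the least fixpoint of the positive program $\PiPQposfr \cup \D$ and the facts producible by executions of $\B'$, and then invoke Proposition~\ref{prop-abstractions:of:timestamps} to transfer this to $\B$. Since $\PiPQposfr$ is positive, $\PiPQposfr \cup \D$ has a least model $M$, obtained as the limit of the immediate-consequence operator applied to $\D$ together with the unary facts $\adomExt(c)$. I would prove two claims about $M$: \textbf{(completeness)} every execution of $\B'$ that places an instance with record $\In(\ts)$ at a place $p$ has $\In_p(\ts)\in M$, and every execution that produces $R(\bar w)$ has $R'(\bar w)\in M$; and \textbf{(soundness)} the converse, that each $\In_p(\ts)\in M$ and each $R'(\bar w)\in M$ is witnessed by some execution of $\B'$.

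For completeness I would induct on the length of the execution. Since $\B'$ admits first moves only for instances whose records range over $\adomExt$ and whose timestamps are $\geq \tauB$, the introduction rule places every relevant record into $\In_\start$, and the copy rule injects $\D$ into the primed relations, giving the base case. For the inductive step, a traversal of a transition $t\colon q\to p$ by an instance with record $\ts$ requires $E_t$ to hold in the current database; by the induction hypothesis $\In_q(\ts)\in M$ and every database atom of $E_t$ holds in primed form in $M$, so the traversal rule fires $\In_p(\ts)$, and analogously the generation rule captures any written fact $R(\bar w)$.

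The hard direction, and the main obstacle, is soundness, because the fixpoint evaluates all rule bodies against the single maximal database $\D'$ (conditions use the primed relations $R'$), whereas a genuine execution enables each step only relative to the database available at that moment. The reconciling fact is positivity: enabling conditions and writing rules are monotone, so a step justified in $\D'$ can always be deferred until the facts it reads have actually been produced. Concretely, I would first prove a \emph{combination lemma}: from two executions of $\B'$ one obtains a single execution realizing the union of what they produce, by replaying the second after the first on disjoint fresh instances --- legitimate because $\B'$ is fresh (all instances start afresh) and positivity guarantees every step of the second execution stays enabled once the database has only grown. With this in hand, soundness follows by induction on the stage at which a fact enters $M$: for an $\In_p(\ts)$ derived by the traversal rule from $t\colon q\to p$, or an $R'(\bar w)$ derived by the generation rule for $t$, I combine the witnessing executions of the lower-rank body atoms into one execution that simultaneously places an instance with record $\ts$ at $q$ and produces all the read facts, and then append the single traversal of $t$, which is now enabled and writes the desired fact. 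The base cases are the copy rule (the empty execution, $R(\bar w)\in\D$) and the introduction rule (start one fresh instance with the given record).

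Finally I would transfer the $\B'$-level correspondence to $\B$. By Proposition~\ref{prop-abstractions:of:timestamps}(c) every execution of $\B'$ is an execution of $\B$, so soundness already yields an execution of $\B$ producing $R(\bar v)$ whenever $R'(\bar v)\in M$. Conversely, given an execution of $\B$ producing $R(\bar v)$, Proposition~\ref{prop-abstractions:of:timestamps}(a) turns it into an execution of $\B'$ producing $\delta_{\B}R(\bar v)$; since $\delta_{\B}$ is the identity on the extended active domain $\adomExt$ --- the only values that $\B'$ and the program ever manipulate --- this is exactly $R(\bar v)$, and completeness gives $R'(\bar v)\in M$. Combining the two directions yields the stated equivalence.
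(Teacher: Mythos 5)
Your proposal takes the same route as the paper: pass to the abstracted process $\B'$ via Proposition~\ref{prop-abstractions:of:timestamps}, show that the least fixpoint of $\PiPQposfr \cup \D$ coincides with the (unique) maximal extension $\D'$ of $\D$ producible in $\B'$, and transfer back to $\B$ using parts (a) and (c) of the abstraction proposition. Your combination lemma is precisely the formal content behind the paper's observation that, because arbitrarily many fresh instances can start and all conditions are positive, $\B'$ yields a \emph{single} maximal extension, which is what the program computes; the two directions of your fixpoint correspondence (induction on execution length, induction on derivation rank) are the intended decomposition.

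One step needs repair. You justify the combination lemma by ``replaying the second execution after the first on disjoint fresh instances,'' but this literally violates the semantics of instance starts: a new instance may only start with a timestamp $\tau' \geq \tau$, where $\tau$ is the current time of the configuration, and each start advances the clock to $\tau'$. Hence, if the second execution starts an instance whose timestamp is smaller than the last start time occurring in the first execution, the concatenated sequence is not an execution of $\B'$. The fix uses exactly the monotonicity you already invoke: merge the instance-start steps of both executions into a single sequence of non-decreasing timestamps, and defer all traversal steps until after all starts, in any order compatible with each original execution. This is sound because traversal enabledness never consults the configuration clock---execution conditions compare only the instance's own record timestamp (and database values) with constants---and because conditions and writing rules are positive, so every traversal that was enabled in its original execution remains enabled over the larger database. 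With this adjustment your soundness induction, and hence the whole argument, goes through as the paper intends.
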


Let $\PiQtest$ be 
defined like  $\PiPIQtest$ in Section~\ref{sec-positive:closed},
except that there is only one rule for $Q'$,
obtained from \eqref{eqn-Q:prime:test:rule}
by replacing $R^i_j$ with $R'_j$.
Then Proposition~\ref{prop-abstractions:of:timestamps}
and Lemma~\ref{lemma:encoding:positive} imply:

\begin{theorem} 
\label{th-instability}
$Q$ is instable the positive fresh $\B$
\ iff \ 
$\PiPQposfr \cup \D \cup \PiQtest \models \instable$.
\end{theorem}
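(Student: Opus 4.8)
The plan is to chain three ingredients: the Abstraction Proposition~\ref{prop-abstractions:of:timestamps}, the encoding Lemma~\ref{lemma:encoding:positive}, and the semantics of the test program $\PiQtest$. Together they reduce instability of $Q$ in $\B$ first to the existence of a \emph{new} query answer over the unique maximal extension $\D'$ of $\B'$, and then to the derivability of $\instable$ from $\PiPQposfr \cup \D \cup \PiQtest$.

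First I would establish that $Q$ is instable in $\B$ iff $Q(\D) \subsetneq Q(\D')$. For the forward direction, instability yields an execution $\Upsilon$ in $\B$ producing a set $W$ with $Q(\D) \neq Q(\D \cup W)$; Proposition~\ref{prop-abstractions:of:timestamps}(b) turns this into $Q(\D) \neq Q(\D \cup \delta_{\B} W)$, and part (a) makes $\delta_{\B}\Upsilon$ an execution of $\B'$ producing $\delta_{\B}W$. As every fact of $\delta_{\B}W$ ranges over $\adomExt$ and $\D'$ is the single maximal extension producible by $\B'$, we get $\D \cup \delta_{\B}W \subseteq \D'$; monotonicity of the conjunctive query $Q$ then upgrades $Q(\D) \neq Q(\D \cup \delta_{\B}W)$ to $Q(\D) \subsetneq Q(\D')$. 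For the converse, $\D'$ is itself the database of some execution $\Upsilon'$ of $\B'$, and by Proposition~\ref{prop-abstractions:of:timestamps}(c) this $\Upsilon'$ is also an execution of $\B$; it produces $\D' \setminus \D$, so $Q(\D) \subsetneq Q(\D')$ makes $Q$ instable in $\B$.

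Next I would read off the right-hand side from the program. By Lemma~\ref{lemma:encoding:positive}, $\PiPQposfr \cup \D \models R'(\tpl v)$ exactly when $R(\tpl v) \in \D'$, so in the least fixpoint the primed relations hold precisely the facts of $\D'$. The test program adds the unprimed query $Q$ (evaluated over $\D$), the single rule $Q'(\tX) \la R_1'(\tu_1),\dots,R_n'(\tu_n)$ (so that $Q'$ collects $Q(\D')$), and the rule $\instable \la Q'(\tX), \neg Q(\tX)$. Since $Q$ depends only on the unprimed database relations, and neither on any primed relation nor on $\instable$, the combined program is stratified; hence $\instable$ is derived iff some tuple lies in $Q(\D') \setminus Q(\D)$, i.e.\ iff $Q(\D) \subsetneq Q(\D')$. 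Composing the two equivalences gives the theorem. The only delicate point is the appeal to monotonicity: because $Q$ is a conjunctive query and $\D \subseteq \D'$, instability is equivalent to the \emph{strict} containment $Q(\D) \subsetneq Q(\D')$, which is exactly what lets the single negated atom $\neg Q(\tX)$ detect a genuinely new answer; one also uses that $\delta_{\B}$ fixes every constant of $\adom$ (hence of $\D$ and $Q$), so abstraction neither creates nor removes answers already present over $\D$. Everything else is routine gluing of the two preceding results through the fixed extension $\D'$.
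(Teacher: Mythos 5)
Your proof is correct and takes essentially the same approach as the paper: the paper obtains Theorem~\ref{th-instability} exactly by combining Proposition~\ref{prop-abstractions:of:timestamps} with Lemma~\ref{lemma:encoding:positive} and the semantics of the test program $\PiQtest$, which is precisely the chain you spell out (via the single maximal extension $\D'$, monotonicity of conjunctive queries, and stratified evaluation of the rule $\instable \la Q'(\tX), \neg Q(\tX)$). You merely make explicit the gluing that the paper leaves implicit.
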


\paragraph*{Complexity}
Since $\PiPQposfr \cup \D \cup \PiQtest$ is a program with stratified negation, 
stability checking over positive fresh \DABP{s}  is in $\EXPTIME$ for process and combined complexity,
and in $\PTIME$ for data complexity \cite{Gottlob-Eiter-LP-Journal-2001}.
From Lemma~\ref{lem:positive:cyclic:complexity} we know 
that these are also lower bounds for the respective complexity measures.
Query complexity is $\PiPTWO$-complete as usual, and instance complexity is trivial 
for fresh processes.

\paragraph*{Positive \DABP{s}}
To reason about arbitrary positive \DABP{s}, we can combine the encoding
for the fresh variant ($\PiPQposfr$) from this section and the one for
the closed variant from Section~\ref{sec-positive:closed} ($\PiPIpocl$).
The main idea is that to obtain maximal extensions, 
each greedy execution sequence is augmented by also flooding the process with fresh instances.
The complexities for the full positive variant are inherited from the closed variant.



\section{Read-Only-Write-Only {\DABP}s}
\label{sec:rowo}

In general \DABP{s}, processes can perform recursive inferences by writing into relations from which they have read.
It turns out that if relations are divided into read-only and write-only, the complexity of stability reasoning drops significantly.

The main simplifications in this case are that 
\begin{inparaenum}[\itshape (i)]
\item one traversal per instance and transition suffices, since no additional fact can be produced by a second traversal;
\item instead of analyzing entire executions, 
      it is enough to record which paths an individual process instance can take
      and which facts it produces, since instances cannot influence each other.
\end{inparaenum}
As a consequence, the encoding program can be non-recursive and it is
independent of the instances in the process configuration.
A complication arises, however, since the maximal extensions 
of the original database $\D$ by the \DABP $\B$ 
are not explicitly represented by this approach.
They consist of unions of maximal extensions by each instance
and are encoded into the test query, which is part of the program.

\begin{theorem}  \label{theorem:rowo}
For every rowo \DABP $\B=\tup{\P,\I,\D}$ and query $Q$ 
one can construct a nonrecursive Datalog program $\PiPQrowo$, based on $\P$ and $Q$, 
and a database instance $\D_\I$, based on $\D$ and $\I$,
such that:
$Q$ is instable in $\B$ \ iff\ \ $\PiPQrowo \cup \D_\I \models \instable$.
\end{theorem}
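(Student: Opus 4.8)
The plan is to reduce instability to the derivability of $\instable$ by exploiting the two features of the rowo restriction highlighted just before the statement. First I would establish the governing invariants. Since the schema splits into a reading part $\Sigma_r$ and a writing part $\Sigma_w$ with $\Sigma_r\cap\Sigma_w=\emptyset$, and every execution condition $E_t$ and body $B_t$ queries only $\Sigma_r$ while every head writes only into $\Sigma_w$, the truth value of $E_t$ and the output of $W_t$ for a given instance depend solely on the read-only part of the database together with that instance's input record $\In(\bar c)$. As neither of these is ever modified during an execution, it follows that (a) whether a transition is enabled for an instance and which facts it produces are invariant along any execution, (b) a second traversal of a transition yields no new facts, and (c) distinct instances cannot influence one another. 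Consequently the facts produced by a whole execution are the union of the facts produced by the individual instances, which may therefore be analysed independently.

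Next I would dispose of fresh instances using Proposition~\ref{prop-abstractions:of:timestamps}, which applies to rowo \DABP{s}: I replace $\B$ by the abstracted $\B'$ whose fresh instances range only over the finitely many input records built from the extended active domain $\adomExt$, preserving the (in)stability of $Q$. For each such record, and for each existing instance at its current place, the producible facts are fixed by which transitions are enabled (a static per-instance test of $E_t$ against $\D$ and the record) and by reachability of their source places in the net restricted to enabled transitions. Because the net is part of the fixed process model, I would precompute its path structure at construction time and emit one rule per relevant path, each checking in its body that all transitions along that path are enabled; this renders enabled-reachability \emph{nonrecursive}. From these, for every pair of an instance record and a reachable transition, I derive the facts that transition would write by evaluating the primed body of $W_t$.

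The database $\D_\I$ then encodes $\D$, the input records and current places of the existing instances, and the seed relation for $\adomExt$, so that varying only $\I$ merely changes the facts fed to a fixed program, yielding the promised $\ACz$ instance complexity. The remaining step, and the main obstacle, is the \quotes{complication} flagged in the text: the maximal extension of $\D$ is not a single explicit relation but a union of per-instance maximal extensions, so instability must be folded into the test query. I would write a rule for $\instable$ whose body demands, for a candidate new answer of $Q$, that every query atom be either present in $\D$ or produced by some instance, while forcing the answer to be genuinely new through a negated call to $Q$ over $\D$; the existential choices of producers become existential rule variables, so ordinary entailment $\models$ captures \quotes{some realizable extension}. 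The delicate point is realizability: fresh records may be replicated arbitrarily, hence their productions are freely combinable, whereas each of the finitely many existing instances executes a single run, so two query atoms assigned to the same existing instance must lie on a common path. Since $Q$ has boundedly many atoms, I would encode this consistency constraint inside the fixed test query, ranging over instance identifiers and paths that live in the data; this accounts for the $\PiPTWO$ query and combined complexity while keeping the program nonrecursive and independent of the instance configuration. The correctness argument—that every program-derivable extension with a new answer is realised by an actual execution, and conversely—will turn precisely on this separation between replicable fresh productions and single-run existing-instance productions, which is where I expect the proof to require the most care.
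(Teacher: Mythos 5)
Your proposal is correct and follows essentially the same route as the paper's proof: the rowo invariants yielding per-instance independence and single-traversal sufficiency, the abstraction principle (Proposition~\ref{prop-abstractions:of:timestamps}) reducing fresh instances to finitely many records over $\adomExt$, a nonrecursive path-based computation of what each existing or fresh instance can produce, and a test program that assembles a candidate new answer while forcing atoms attributed to the same existing instance onto one common path and letting fresh productions combine freely --- which is exactly the paper's $\Path$, $\Exec^j$ and $B^i_Q$ machinery. The only real divergence is that you enumerate enabled paths as one rule each (potentially exponentially many in $|\P|$), whereas the paper keeps the program polynomial via step-indexed relations (e.g.\ $\In^i_p$ and $R^i(o;\bar t;\bar s)$ for $i\le m^2$) with paths stored as argument tuples; this difference affects neither the theorem's claim nor the $\ACz$ data/instance bounds, which only require the program to be fixed once $\P$ and $Q$ are fixed.
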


From the theorem it follows that data and instance complexities are in $\ACz$, except for instance complexity in fresh variants, for
which it is constant.

\paragraph*{Process, Query and Combined Complexity}
Since CQ evaluation can be encoded into an execution condition,
this gives us $\coNP$-hardness of stability in process complexity.
We also show that it is in $\coNP$.
First we note that due to the absence of recursion, one can check in $\NP$ whether a set of 
atoms is produced by a process instance.

\begin{proposition}
 \label{prop-rowo:generability:complexity}
Let $\B$ be a singleton rowo \DABP.
One can decide in $\NP$, whether for given facts $A_1,\ldots,A_m$, 
there is an execution in $\B$ that produces $A_1,\ldots,A_m$.
\end{proposition}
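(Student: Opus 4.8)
The plan is to exploit the two features of the rowo restriction highlighted just before the statement: that enabledness is static and that one traversal per transition suffices. Since $\B$ is a singleton (hence closed, with a single instance) there is exactly one process instance, carrying a \emph{fixed} input record $\In(\tc,\tau')$, and it never reads data it has written: both execution conditions $E_t$ and writing-rule bodies $B_t$ range over the reading schema $\Sigma_r$, whose relations are never modified, whereas the heads write into $\Sigma_w$. Consequently, whether a transition $t$ is enabled for the instance, and which facts $W_t(\D\cup\set{\In(\tc,\tau')})$ it writes, depend only on the fixed database $\D$ and the fixed input record, not on the order of traversals or on anything produced so far. I would therefore fix once and for all the \emph{enabled graph} $G$, whose vertices are the places $P$ and whose edges are exactly the transitions enabled for this instance, and precompute for each such transition its (fixed) set of written facts.

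With $G$ fixed, an execution producing $A_1,\dots,A_m$ is just a walk in $G$ from $\start$ whose traversed transitions collectively write every $A_j$ that lies in $\Sigma_w$ (a target $A_j$ in $\Sigma_r$ is produced iff $A_j\in\D$, which is checked directly). By monotonicity, re-traversing a transition adds nothing, so only the \emph{set} of traversed transitions matters for the produced facts, and extra transitions are used solely to move between places, being automatically enabled since they belong to $G$. The \NP\ procedure then guesses a walk $\start=p_0,t_1,p_1,\dots,t_L,p_L$ in $G$, together with two kinds of certificates: for each $t_i$ a homomorphism witnessing that $E_{t_i}$ holds over $\D\cup\set{\In(\tc,\tau')}$, and for each target $A_j$ (in $\Sigma_w$) a transition $t_{i_j}$ on the walk plus a homomorphism witnessing $A_j\in W_{t_{i_j}}(\D\cup\set{\In(\tc,\tau')})$. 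Verification is polynomial: one checks that consecutive places are linked by the guessed transitions, that each guessed homomorphism satisfies the relevant conjunctive query against the fixed database by a per-atom test, and that every $A_j$ is produced.

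The step that needs care, and the main obstacle, is bounding the length of the guessed walk so that it forms a polynomial-size certificate. Here I would argue that a witnessing walk can always be taken of length at most $(m+1)\cdot\card{P}$: starting from any valid walk that produces the $A_j$, retain only the first occurrence of each of the at most $m$ transitions responsible for the targets, and replace every connecting sub-walk between two consecutive such transitions (and the initial segment from $\start$) by a shortest path in $G$, of length below $\card{P}$. This shortcutting preserves validity precisely because enabledness in $G$ is static, so every edge of a shortest path is itself an enabled transition and the irreversibility of leaving strongly connected components is respected automatically within an already-valid walk; it keeps all targets produced; and it yields a polynomially bounded walk. Combined with the standard fact that conjunctive-query evaluation is in \NP\ via a guessed homomorphism, this establishes membership in \NP.
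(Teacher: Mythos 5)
Your proposal is correct and takes essentially the same route as the paper: both are \NP guess-and-check arguments that guess a polynomial-length path through the net together with homomorphisms (assignments) witnessing each execution condition and the production of each target fact, relying on the rowo property that enabledness and the facts written by a transition are fixed by $\D$ and the input record. The only differences are matters of explicitness: you spell out the shortcutting argument that bounds the walk length by $(m+1)\cdot\card{P}$, where the paper instead appeals to its earlier observation that at most $m^2$ traversals suffice, and note that the single instance starts at its current place in the configuration, not necessarily at $\start$.
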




Next, suppose that
$\I$, $\D$ and $Q(\tpl v)\la B_1,\ldots,B_m$ are a fixed 
instance part, database and query.
Given a process model $\P$, 
we want to check that $Q$ is instable in $\B_\P=\tup{\P,\I,\D}$.
Making use of the abstraction principle for fresh constants, 
we can guess in polynomial time 
an instantiation $B'_1,\ldots,B'_n$ of the body of $Q$ that 
returns an answer not in $Q(\D)$.
Then we verify that $B'_1,\ldots,B'_n$ are produced by $\B_\P$.
Such a verification is possible in $\NP$ according to 
Proposition~\ref{prop-rowo:generability:complexity}.
We guess a partition of the set of facts $B'_1,\ldots,B'_n$,
guess one instance, possibly fresh, for each component set of the partition,
and verify that the component set is produced by the instance.
Since all verification steps were in $\NP$, the whole check is in $\NP$.


Query complexity is $\PiPTWO$-complete for the same reasons as in the general variant,
and one can show that this is also the upper-bound for the combined complexity.


\section{Related Work}
\label{ses:related:work}

Traditional approaches for business process modeling focus on the set of activities to be performed and the flow of their execution. 
These approaches are known as \emph{activity-centric}. 
A different perspective, mainly investigated in the context of databases, consists in identifying the set of data (entities) to be represented and describes processes in terms of their possible evolutions. 
These approaches are known as \emph{data-centric}.

%
In the context of activity-centric processes, Petri Nets (PNs) have been used for the representation, validation and verification of formal properties, such as absence of deadlock, boundedness and reachability
\cite{Aalst-Verification-97, Aalst-Patterns-03}.
In PNs and their variants, a token carries a limited amount of information, which can be represented by associating to the token a set of variables, like in colored PNs 
\cite{jensen-colouredPN-2009}. 
No database is considered in PNs.
%


Among data-centric approaches,
\emph{Transducers}~\cite{Abiteboul-Transducers-PODS-98,Spielmann}
were among the first formalisms ascribing a central role to the
data and how they are manipulated. 
%
These have been extended 
to \emph{data driven web systems}~\cite{Deutsch-WebData-PODS-2004}
to model
the interaction of a user with a web site, 
which are then extended in
\cite{ltl:verification:hull:vianu-icdt2009-decidable}.
These frameworks
express 
insertion and deletion rules using FO formulas. 
The authors verify properties
expressed as FO variants of LTL, CTL and CTL* temporal formulas.
The verification of these formulas results to be
undecidable in the general case. 
Decidability is obtained under certain restrictions
on the input, yielding to 
\EXPSPACE complexity for checking LTL formulas
and \coNEXPTIME and \EXPSPACE for 
CTL and CTL* resp., in the propositional case.

Data-Centric Dynamic Systems (DCDSs) 
\cite{Calvanese-DCDS-PODS-2013}
describe processes in terms of guarded FO rules that evolve the database.
%
%
The authors study the verification of temporal properties
expressed in variants of $\mu$-calculus (that subsumes CTL*-FO).
%
%
They identify several undecidable classes 
and isolate decidable variants 
by 
assuming a bound on the size of the database at each step or
 a bound on the number of constants at each run.
In these cases verification is
\EXPTIME-complete in data complexity.

Overall, both frameworks are more general than \DABP{s}, since deletions and updates of facts are also allowed. 
This is done by rebuilding the database after each execution step.
Further, our stability problem can be encoded as FO-CTL formula.
However, 
our decidability results for positive \DABP{s}
are not captured by the decidable fragments of those approaches.
In addition, the authors of the work above investigate the borders of decidability, while we focus on a simpler problem and study the sources of complexity.
%
%
Concerning the process representation, 
both approaches adopt a rule-based specification.
This makes the control flow more difficult to grasp,
in contrast to activity-centric approaches where 
the control flow has an explicit representation.

\emph{Artifact-centric} approaches
\cite{Hull-Artifact-Centric-Survey} 
use artifacts to model business relevant entities.
%
%
In 
\cite{artifact-formal-analysis,
artifct-static-analysis,
gerede-su-icsoc2007}
the authors investigate the verification of properties 
of artifact-based processes
such as reachability,  temporal constraints, and  
the existence of dead-end paths.
%
%
However, none of these approaches explicitly models an underlying database.
Also,
the authors focus on finding suitable restrictions 
to achieve decidability, without a fine-grained complexity analysis as in our case.

Approaches in \cite{Milo-Provenance} and \cite{MILO-BP-QL},
investigate the challenge of combining processes and data, however, 
focusing on the problem of data provenance and of querying the process structure.

In~\cite{Elkan-QueryUpdate-PODS-90,LevySagiv-updates-1993-VLDB} the authors
study the problem of determining whether a query over views is independent 
from a set of updates over the database. 
The authors do not consider a database instance nor a process.
Decidability in rowo \DABP{s} can be seen as a
special case of those. 

In summary, 
our approach to process modeling is closer to the activity-centric one
but we model manipulation of data like in the data-centric
approaches.
Also, having process instances and \DABP{s} restrictions 
gives finer granularity
compared to data-centric approaches.


\section{Discussion and Conclusion}
\label{sec:discussion:conlcusion}

\paragraph*{Discussion}
An interesting question is how complex stability becomes
if \DABP{s} are not monotonic, i.e.,   
if updates or deletions are allowed.
In particular, for positive \DABP{s} we can show the following.
In acyclic positive closed \DABP{s} updates and deletions
can be modeled using negation in the rules,
thus stability stays $\PSPACE$-complete.
For the cyclic positive closed variant, allowing updates or deletions is more powerful
than allowing negation, and stability 
jumps to $\EXPSPACE$-completeness.
For positive \DABP{s} with updates or deletions 
stability is undecidable. 

In case the initial database is not known,
our techniques can be still applied since an arbitrary database can be produced by fresh instances starting from an empty database.

\paragraph*{Contributions}
Reasoning about data and processes can be relevant in decision support
to understand how processes affect query answers.
(1) To model processes that manipulate data we adopt an explicit representation of the control flow as in standard BP languages (e.g., BPMN).
We specify how data is manipulated as annotations on top of the control flow.
(2) Our reasoning on stability can be offered as a reasoning service on top of the query answering that reports on the reliability of an answer.
Ideally, reasoning on stability should not bring a significant overhead on query answering in practical scenarios.
Existing work on processes and data~\cite{Calvanese-DCDS-PODS-2013}
shows that verification of general temporal properties is typically 
intractable already measured in the size of the data.
(3) In order to identify tractable cases and sources of complexity we investigated different variants of our problem, by considering negation in conditions, cyclic executions, read access to written data, presence of pending process instances, and the possibility to start fresh process instances.
%
(4) Our aim is to deploy reasoning on stability to existing query answering platforms such as SQL and ASP~\cite{DLV-Leone-2006}.
For this reason we established different encodings into suitable variants of Datalog, that are needed to capture the different characteristics of the problem. For each of them we showed that our encoding is optimal.
In contrast to existing approaches, which rely on model checking to verify properties, in our work we rely on established database query languages.



\paragraph*{Open Questions}
In our present framework we cannot yet
model process instances with activities that are running in parallel.
Currently, we are able to deal with it only in case instances do not interact (like in rowo).
Also, we do not know yet how to reason about expressive queries, such as conjunctive queries with negated atoms,
and first-order queries.
%
%
From an application point of view,
stability of aggregate queries and  aggregates in the process rules are relevant.
A further question is how to  quantify instability, 
that is, in case a query is not stable, how to compute the minimal and  maximal number of possible new answers.

\subparagraph*{Acknowledgements}
This work was partially supported by the research projects
MAGIC, funded by the province of Bozen-Bolzano, 
and 
CANDy and PARCIS, funded by the Free University 
of Bozen-Bolzano.

\newpage

 \bibliography{bib/ref} 

\newpage

\appendix

\def\thepart{}

\part{Appendices}
\parttoc 


\section{Example}
\label{section:appendix:example}

As an illustration of the concepts in our formalism,
we provide an example about student enrollment at a university.

\subsection{Scenario: Student Registration}
One year in November, the student office distributes a report
on the numbers of new student registrations for the offered programs.
When comparing the numbers with those of the previous years, 
the Master in Computer Science (\emph{mscCS}) shows a decrease, 
in contrast with the Master in Economics (\emph{mscEco}), 
which has registered a substantial increase.
An analysis task force at university level cannot identify a plausible cause.
Eventually, a secretary discovers 
that the reason is a complication in the registration process, 
which foresees two routes to registration: 
a regular one and a second one via international federated study programs 
to which some programs, like the \emph{mscCS}, are affiliated.
Due to different deadlines, regular registration has been concluded in November
while registration for students from federated programs has not. 
Since the \emph{mscCS} is affiliated to some federated programs, but the \emph{mscEco} is not,
the query asking for all \emph{mscEco} students was stable in November,
while the query for all \emph{mscCS} students was not and returned too 
low a number.

\subsection{\DABP Representation of the Scenario}

Table~\ref{tb:registration} shows
the student registration process $\Breg=\langle \Preg, \Creg \rangle$.
Part~\emph{(a)} reports the process net and
Part~\emph{(b)} the execution conditions and writing rules.

A process instance starts when a student submits an online request,
providing as input the student name $S$ and the course name $C$. 
Automatically, the system attaches a time stamp $\T$ to the request.
The application is then represented as an $\In$-record $\In(S,C, \T\,)$.

The procedure distinguishes between applications
to \emph{international} courses, 
which are part of programs involving universities from different countries
and where an international commission decides whom to admit, 
and to \emph{regular} programs, where the university itself evaluates the applications.
Accordingly, a process first checks for the type of application.
The transition \qact{is intl.~app.} can only be traversed,
if the execution condition $\In(S,C,\T\,), \studyplan(C,{\transf},P)$ succeeds,
which is the case when the course of the application
is stored in the relation $\studyplan$ and associated to a program with type $\text{\transf}$.
Subsequently, the process checks if the student has already been admitted \act{is admitted}.
If so, it pursues the upper branch of the net.
If not, it checks if the course is also open to regular
applications \act{isn't admitted}. 
Similarly, the execution condition on \qact{is reg.~app.} ensures that the course
is associated to a program of type $\text{\ord}$, but not of type  $\text{\transf}$.
Then, applications for regular courses follow the bottom branch.

Deadlines give rise to conditions on the application timestamp $\T$.
While applications are accepted starting from \first\ Sep, 
the deadline for regular courses is 3\first\ Oct,
and for international courses it is 3\first\ Dec.
Candidates who applied until 30$\nth$ Sep can pre-enroll, 
that is, register provisionally.
After that date, admitted candidates have to register directly.

Provisional registration gives students the possibility
\ei to enroll conditionally and complete an application not fully complete, and
\eii to confirm or withdraw the registration before being formally enrolled.
Modeling the completion of incomplete applications leads to cycles in the net,
while non-determinism, e.g.\ due to human intervention or interaction with other systems, 
is modeled by labeling the transitions emanating from a place 
(like \qact{acad.~check} or \qact{stud.~decis.}) with non-exclusive execution conditions.

Some transitions are labeled with a writing rule. 
When \qact{pre-enrol cond.} is traversed, 
the rule $\tlbl{Conditional}(S,C) \la \In(S,C,\T\,)$
records that the application is conditionally accepted
by writing a fact into the relation $\tlbl{Conditional}$.
This relation, on the other hand, is read by the execution condition of the transition \qact{complete app.} 


\newcommand{\fsize}{\scriptsize}

\begin{table*}[t!]

\begin{tabular}{ll}
%
%
\begin{minipage}{0.54\linewidth}
\mbox{}

\vspace{5ex}
%
%
\begin{minipage}{1\linewidth}
\begin{center}
\includegraphics[width=1.05\linewidth]{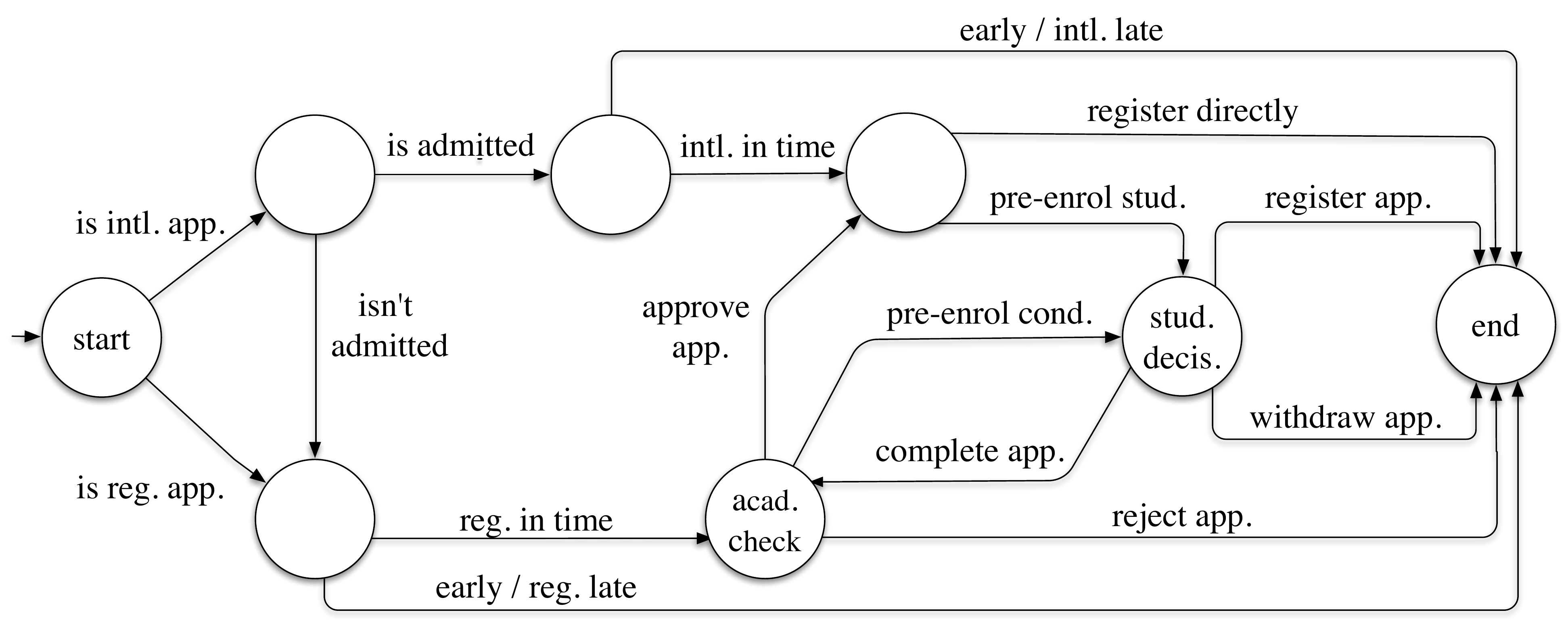}
{\small\emph{\textbf{(a)} Process Net in $\Preg$}}\\[1.5em]
\end{center}
\end{minipage}\\[7.8ex]
%
%
%
\begin{minipage}{0.66\linewidth}
\begin{scriptsize}
\begin{tabular}{@{}l@{\hspace{2ex}}l@{}}
\fsize  \textit{Transition} & 
\fsize  \textit{Execution Condition $(\LE{})$} \\ 
\midrule
\fsize 	is intl.~app. &
\fsize	$\In(S,C,\T\,), \studyplan(C,\text{\transf},P)$\\[0.5ex]
\fsize	is admitted &
\fsize	$\In(S,C,\T\,), \tlbl{AdmittedIntl}(S,C)$\\[0.5ex]
\fsize	isn't admitted &
\fsize  $\In(S,C,\T\,),$ \parbox[t]{10em}{$\neg \tlbl{AdmittedIntl}(S,C),$ \\
                                          $\tlbl{\studyplan}(C,\text{\ord},P)$}\\[3.3ex]
\fsize	is reg.~app. &
\fsize	$\In(S,C,\T\,),$ \parbox[t]{11em}{$\tlbl{\studyplan}(C,\text{\ord},P),$\\
                                          $\neg \tlbl{\studyplan}(C,\text{\transf},P)$}\\[3.3ex]
\fsize	reg.~in time &
\fsize	{$\In(S,C,\T\,), \mbox{\first\ Sep} \leq \tstamp \leq \mbox{3\first\ Oct}$} \\[0.5ex]
\fsize	intl.~in time  &
\fsize	{$\In(S,C,\T\,), \mbox{\first\ Sep} \leq \tstamp \leq  \mbox{3\first\ Dec}$} \\[0.5ex]
\fsize	early  &
\fsize	{$\In(S,C,\T\,), \tstamp < \mbox{\first\ Sep}$}\\[0.5ex]
\fsize	reg.~late  &
\fsize	{$\In(S,C,\T\,), \tstamp > \mbox{3\first\ Oct}$}\\[0.5ex]
\fsize	intl.~late  &
\fsize	{$\In(S,C,\T\,), \tstamp > \mbox{3\first\ Dec}$}\\[0.5ex]
\fsize	pre-enrol stud. &
\fsize	{$\In(S,C,\T\,), \tstamp \leq \mbox{\dateconditional}$}\\[0.5ex]
\fsize	pre-enrol cond. &
\fsize	{$\In(S,C,\T\,), \tstamp \leq \mbox{\dateconditional}$}\\[0.5ex]
\fsize	register directly &
\fsize	{$\In(S,C,\T\,), \tstamp > \mbox{\dateconditional}$}\\[0.5ex]
\fsize	complete app. &
\fsize	{$\In(S,C,\T\,), \tlbl{\conditional}(S,C)$}\\[0.5ex]
\fsize	register app. &
\fsize	{$\In(S,C,\T\,), \tlbl{\accepted}(S,C)$}\\[0.5ex]
\multicolumn{2}{@{}l}{\fsize withdraw app.~= approve app.~= reject app.$\colon \true$}\\[0.5ex]

\bottomrule
\end{tabular}
\end{scriptsize}
\end{minipage}
\end{minipage}
& 
%
%
\begin{minipage}{0.99\linewidth}
\begin{minipage}{0.40\linewidth}
\begin{center}
\begin{scriptsize}
\begin{tabular}{@{}lll@{}}
 \multicolumn{3}{c}
 {\fsize \studyplan} \\
\toprule 
 \fsize\course & \fsize\textit{type}  & \fsize\program \\
\midrule
 \fsize\emcl   & \fsize\emph{\transf} & \fsize\msccs\\
 \fsize\emcl   & \fsize\emph{\ord}    & \fsize\msccs\\
 \fsize db     & \fsize\emph{\ord}    & \fsize\msccs\\
 \fsize econ   & \fsize\emph{\ord}    & \fsize\msceco\\
\bottomrule
\end{tabular}
\end{scriptsize}
\end{center}
\end{minipage}
\\[2ex]
%
\begin{minipage}{0.5\linewidth}
\begin{scriptsize}
\begin{tabular}{@{}ll@{}}
 \multicolumn{2}{c}
 {\fsize\textit{AdmittedIntl}} \\
\toprule
 \fsize\textit{student} &  \fsize\course\\
\midrule
 \fsize bob  & \fsize\emcl \\
 \fsize mary & \fsize\emcl \\
\bottomrule    
\end{tabular}
\end{scriptsize}
\vspace{2.0ex}
\begin{scriptsize}
\begin{tabular}{@{}ll@{}}
 \multicolumn{2}{c}
     {\fsize \accepted} \\
\toprule
 \fsize \textit{student} &  \fsize \course \\
\midrule
 \fsize bob   & \fsize \emcl \\
 \fsize alice & \fsize \econ \\
\bottomrule
\end{tabular}
\end{scriptsize}
\end{minipage}
\\
%
\begin{minipage}{0.5\linewidth}
\begin{scriptsize}
\begin{tabular}{@{}ll@{}}
 \multicolumn{2}{c}
 {\fsize\conditional} \\
\toprule
 \fsize\textit{student} & \fsize\course \\
\midrule
 \fsize paul & \fsize\econ\\
\bottomrule
\vspace*{0.12ex}
\end{tabular}
\end{scriptsize}
\hspace{1.2em}
\begin{scriptsize}
\begin{tabular}{@{}ll@{}}
 \multicolumn{2}{c}
 {\fsize\registered} \\
\toprule
  \fsize\textit{student} &  \fsize\course \\
\midrule
  \fsize bob &   \fsize\emcl\\
  \fsize alice &  \fsize\econ \\
\bottomrule
\end{tabular}
\end{scriptsize}
\end{minipage}
\\[2ex]
\emph{\small\hspace*{2em} \textbf{(c)} Database Instance $\Dreg$}
\\[2ex]
%
\begin{minipage}{0.5\linewidth}
\begin{scriptsize}
\begin{tabular}{ccc}
 \multicolumn{3}{c}
 {\fsize\textit{Instances and Mappings}} \\
\toprule
 \fsize $\textit{id}$ & \fsize$\In$-record & \fsize place \\
\midrule
 \fsize$o_1$ & \fsize (bob, \emcl, $5\nth$ Sep) & \fsize\terminate\\ 
 \fsize$o_2$ & \fsize (alice, \econ, $20\nth$ Sep) & \fsize\terminate\\ 
 \fsize$o_3$ & \fsize (paul, \econ, $24\nth$ Sep) & \fsize\terminate\\ 
 \fsize$o_4$ & \fsize (john, db, $4\nth$ Nov) & \fsize\start\\ 
\bottomrule
\end{tabular}
\end{scriptsize}
\end{minipage}
\\[2ex]
\emph{\small\hspace*{2em} \textbf{(d)} Process Instances $\Ireg$}
\\[3ex]
%
\mbox{}\hspace{-3.0em}%
\begin{minipage}{0.25\linewidth}
\begin{scriptsize}
\begin{tabular}{@{}lr@{}}
\fsize\textit{Transition} & \fsize{\textit{Writing Rule $(\LW{})$}}\\ 
\midrule
\fsize	register directly &
	\fsize{$ \tlbl{\registered}(S,C) \la \In(S,C,\T\,) $}\\[0.5ex]
\fsize	pre-enrol stud.&
	\fsize{$\tlbl{\accepted}(S,C) \la \In(S,C,\T\,)$}\\[0.5ex]
\fsize	pre-enrol cond. &
	\fsize{$\tlbl{Conditional}(S,C) \la \In(S,C,\T\,)$}\\[0.5ex]
\fsize	register app. &
	\fsize{$\tlbl{\registered}(S,C) \la \In(S,C,\T\,)$}\\[0.5ex]
\bottomrule
\end{tabular}
\end{scriptsize}
\end{minipage}
\end{minipage}
\end{tabular}
\vspace{1.3ex}
\hspace*{10em} \emph{\small\textbf{(b)} Execution Conditions and Writing Rules in $\Preg$}

\vspace{1.3ex}

\caption{\DABP Representation of the Student Registration Process.}
\label{tb:registration}
\end{table*}


Table~\ref{tb:registration}\emph{(c)} shows 
a database instance $\Dreg$ for our running example.
Courses offered are stored in the relation \emph{StudyPlan}, 
together with their type (\transf\ or \ord) 
and the program they are associated with.
The remaining tables store information about the students.
Table~\ref{tb:registration}\emph{(d)} 
reports the running process instances $\Ireg$ in the form of a relation. 

\subsection{Stability of the Example Queries}

Consider the queries 
\begin{align*}
& Q_\text{eco}(S) \leftarrow \textit{Registered}(S,C),
	\studyplan(C, T, \textit{mscEco}), \\
& Q_\text{cs}(S) \leftarrow \textit{Registered}(S,C), 
	\studyplan(C, T, \textit{mscCS}),
\end{align*}
which ask for the students registered for the master in CS, 
and the master in Economics, respectively.
We analyze their stability over different periods,
specified in Table~\ref{table:stability-intervals}.
For each period from $\tau_1$ to  $\tau_2$, 
we ask if the query is stable until $\tau_2$ in a variant of $\Breg$
where 
\ei the current date is in the interval and 
\eii there are no running applications with a start date later than~$\tau_2$
(and also no data in the tables about the students having submitted one).

\begin{table}[!ht]
	\begin{center}
		\begin{tabular}{@{} c c c c c @{}}
		\toprule
		\emph{Period} 		   & 
		 {$<$ \first\ Sep} &
		 {\first\ Sep--3\first\ Oct} &
		 {\first\ Nov--3\first\ Dec} &
		 {$>$ 3\first\ Dec} \\ \midrule 
		 $Q_\text{eco}$ 		   & 
		 stable 			   &
		 instable 		   	   &
		 stable				   &
		 stable				   \\ 
		 $Q_\text{cs}$ 		   & 
		 stable 			   &
		 instable 		  	   &
		 instable 	                   &
		 stable				   \\ \bottomrule
		\end{tabular}
	\end{center}
\caption{Stability of the queries $Q_\text{eco}$ and $Q_\text{cs}$ for different intervals.}
      \label{table:stability-intervals}
      \vspace*{-0.5em}
\end{table}%

During the period before \first\ Sep,
neither program allows registrations to proceed
and thus both queries are stable until this date.
For the period {\first\ Sep--3\first\ Nov},
the programs allow for new registrations and 
both queries are instable.
If the current time is within the period \first\ Nov--3\first\ Dec
and there are no pending applications,
$Q_\text{eco}$ is stable 
because the program \emph{mscEco} is not affiliated with any international (\transf) course and 
the deadline for the regular programs has passed. 
However, \emph{mscCS} has an affiliated course to which student \emph{Mary} is admitted.
She is not registered yet and potentially could submit an application 
before the 3\first\ Dec, which would be accepted.
Thus, $Q_\text{cs}$ is not stable for this period. 
If all the admitted students had already been registered,
the query would be stable, 
since no new registration would be possible. 
The query would also be stable in the case 
the process is closed for new instances to start
(e.g., because the limit on registered students has been reached).
In this case, only running instances would be allowed to finish their execution. 
Thus, candidate \emph{Mary} would not be able to register even though she is admitted. 
If the current time is after $3$\first\ Dec, both queries are stable regardless 
whether the process is closed or not because all the registration deadlines have expired.

\subsection{Variants of the Example Process}

%
%
The model of $\Breg$ is general, since the
relations \emph{Pre-enrolled} and \emph{Conditional} are both read
and written;
the rules are normal, though only with negation on database relations
that are not updated;
the net is clearly cyclic.
We can imagine that at the beginning
of the registration period the process starts with a fresh configuration
(i.e., no running applications).
The case of arbitrary configurations includes situations that 
arise as exceptions in the registration process and
cannot evolve from a fresh configuration.
For instance, a regular application received after the deadline for a valid
reason may be placed by a secretary at a certain place in the process 
that it would not be able to reach from the \start\ place.
Our example is not closed. 
If after the last deadline (3\first\ \decd)
the web form for submitting new applications will be no
more available the process will run under closed executions.

Note that, in our running example, negation in the conditions appears only
on the database relations that are not updated by the process.
For this case we can still apply the encoding from this section and obtain 
a semipositive Datalog program.
%


\section{Closed \DABP{s}}
\label{subsubsec:normal:cyclic:arbitrary:extended}




\subsection{Proof of Theorem~\ref{theorem:sms}}

The encoding program consists of the following rules:
(a)~\emph{Ordering rules} that generate a linear order of  
size  $m k r c^a$ (see Section~\ref{sec-positive:closed}) that we use to enumerate all executions steps;
(b)~\emph{Selection rules} that for each execution step non-deterministically 
selects one instance and transition meaning that the selected instances 
traverses the selected transition at that step;
(c)~\emph{Control rules} that discard cases where guesses execution sequence do not correspond to a valid execution in the process;
(d)~\emph{Generation rules} that generate facts produced by a valid execution;
(e)~\emph{Testing rules} that test if any of the guessed executions yield a new query answer.





\paragraph*{Generating Exponentially Big Linear Order}



Assume we are given an \DABP $\BPID$  
possible with cycles and negation in the rules.
As we discussed, 
to check stability in $\B$ cyclic \DABP{s} it is sufficient to consider
executions that have up to
$
	m k r c^a 
$
executions steps,
where $	m, k, r, c$ and $a$ are parameters of $\B$ as defined 
in Section~\ref{sec-positive:closed}.
%
%

%
%
%
%
%
%
%
We introduce a Datalog program that
generates a liner order of size $m k r c^a$ 
starting from a much smaller order (exponentially smaller).
To define a small order we introduce 
a set of constants, called \emph{digits},
$
\Dig_\B =\set{d_1,\dots,d_l}
$
of size $l$ an we establish one linear order $<$ on $\Dig_\B$: 
$
d_1<d_2<\dots<d_l.
$
Assume that $\Dig_\B^g$ is the Cartesian power of $\Dig_\B$ of size $g$.
That is, each tuple $\tk$ from $\Dig_\B^g$ is of the form
$
 \tk = \tup{d_{i_1},\dots,d_{i_g}}, 
$
for $d_{i_1},\dots,d_{i_g} \in \Dig_\B$.
We define $<^g$ as the lexicographical order 
on the tuples from $\Dig_\B^g$.
Here $l$ and $g$ are selected such that 
\ei
$ l= kc$ and thus depends on $\P, \I$ and $D$, and 
\eii
$g > m+r+a$ 
where $m$, $r$, and $a$ are the parameters that depend only on $\P$.
Then, it is not hard to check that it holds
\[
l^g \ge m k r c^a.
\]
In other words, 
linear order on $<^g$ is sufficient to enumerate all executions steps.

Adopting the idea in \cite{Gottlob-Eiter-LP-Journal-2001},
we define a positive Datalog program 
that generates $<^g$.
In particular, we want generate a relation $\Succ$ that stores
the immediate successor in the order.
The order $<^g$ is generated based on the orders $<^i$
on $\Dig_\B^i$ for $i<g$. 
For that and to count the execution steps 
we introduce the the following relations.

\textbf{Notation}
We introduce:
\ei	
$\Digit$ -- a unary relation such that $\Digit(d)$ is true iff $d \in \Dig_\B$;
\ei
$\First^i$ -- 
		 an auxiliary relation of arity $i$ such that 
		$\First^i(\td)$ is true iff
  		$\td$ is the first element of the linear order $<^i$;
\eii 
$\Last^i$ -- 
	 an auxiliary relation of arity $i$ such that 
		$\Last^i(\td)$ is true iff
  		$\td$ is the last element of the linear order $<^i$;
\eii 
 $\Key$ -- a $g$-ary relation such that $\Key(\tk)$ is true iff $\tk$ is a tuple from $\Dig_\B^g$ that
  corresponds to some execution step, that is any tuple from $\Dig_\B^g$ except for the first one in the order $<^g$;
 \eiii
 	$\Succ^i$ -- a relation of arity $2i$ such that
 	$Succ^i(\td_1,\td_2)$  is true 
 	iff $\td_2$ is the immediate successor of $\td_1$ in the order $<^i$.

\myparagraph{Ordering Rules}
To generate $\Succ^i$, $\First^i$, and $\Last^i$ we introduce the following rules:
\begin{align*}
	& \Succ^{i+1}(Z,\tX;Z,\tY) \la \Digit(Z), \Succ^i(\tX;\tY) \\
	& \Succ^{i+1}(Z_1,\tX;Z_2,\tY) \la \Succ^1(Z_1;Z_2), \First^i(\tX), \Last^i(\tY) \\
	&\First^{i+1}(X_1,\tX) \la \First^1(X_1), \First^i(\tX) \\
	&\Last^{i+1}(Y_1,\tY) \la \Last^1(Y_1), \Last^i(\tY).
\end{align*}

\noindent
Then, we populate relation $\Key$ with the following rule:
\begin{align*}
	\Key(K_1,\dots,K_g) \la \Digit(K_1),\dots,\Digit(K_g), \neg \First(K_1,\dots,K_g)
\end{align*}

\noindent
In the following we use $\Succ$ for $\Succ^g$,
$\First$ for $\First^g$, and $\Last$ for $\Last^g$.

We denote the above program as
$
\PiSucc_\P \cup \D_\B
$.
Here,  $\PiSucc_\P$ is a program that is polynomial in the size of $\P$, and 
$\D_\B$  is a database instance that contains
facts for relations $\Digit$, $\First^1$, $\Last^1$ and 
$\Succ^1$, and thus it is that is polynomial in the size of $\B$.
Then it holds:
%
\begin{lemma} 
\label{lemma:succ:order}
Let $\tk$, $\tk_1$ and $\tk_2$ be tuples of size $g$, then:
\begin{itemize}
	\item[\ei] $\PiSucc_\P \cup \D_\B \models \Succ(\tk_1,\tk_2)$ 
	\ iff \
	$\tk_2$ is the successor of $\tk_1$. 
	\item[\ei] 
	$\PiSucc_\P \cup \D_\B \models \Key(\tk)$ 
	\ iff \
	$\tk \in \Dig^g$;
\end{itemize}
\end{lemma}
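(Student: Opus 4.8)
The plan is to prove both parts together by induction on the arity $i$, first establishing the intended meaning of the three generated relations $\Succ^i$, $\First^i$, $\Last^i$ over $\Dig_\B^i$, and only then reading off parts \ei and \eii at $i=g$. Concretely, I would show by simultaneous induction on $i$ that $\PiSucc_\P \cup \D_\B \models \First^i(\td)$ iff $\td$ is the $<^i$-minimum $(d_1,\dots,d_1)$, that $\Last^i(\td)$ holds iff $\td$ is the $<^i$-maximum $(d_l,\dots,d_l)$, and that $\Succ^i(\td_1;\td_2)$ holds iff $\td_2$ is the immediate $<^i$-successor of $\td_1$. Since $\PiSucc_\P \cup \D_\B$ is a positive program (the only negation, $\neg\First$, occurs in the $\Key$-rule, which lies in a strictly higher stratum), its semantics is the least fixpoint, so each claim decomposes into \emph{soundness} (every derived fact has the stated property) and \emph{completeness} (every tuple with the property is derived).

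The base case $i=1$ is read directly off $\D_\B$, which by construction contains exactly $\Digit(d_j)$, $\First^1(d_1)$, $\Last^1(d_l)$, and $\Succ^1(d_j;d_{j+1})$ for $1\le j<l$, encoding the fixed order $d_1<\dots<d_l$. For the inductive step, the rules for $\First^{i+1}$ and $\Last^{i+1}$ are immediate: prepending the smallest (resp.\ largest) digit to the $<^i$-minimum (resp.\ maximum) yields the $<^{i+1}$-minimum (resp.\ maximum). The substantive part is $\Succ^{i+1}$, which I would verify by splitting pairs of $<^{i+1}$-consecutive tuples into two cases. Either the two tuples share their leading digit, in which case their tails are $<^i$-consecutive and the first $\Succ^{i+1}$ rule applies via the inductive hypothesis on $\Succ^i$; or they differ in the leading digit, which in lexicographic order happens exactly when the lower tuple carries the $<^i$-maximal tail and the upper one the $<^i$-minimal tail while the leading digit advances by one $\Succ^1$ step, which is precisely the pattern matched by the second $\Succ^{i+1}$ rule together with $\First^i$, $\Last^i$, and the inductive hypothesis. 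Conversely, every fact these two rules produce matches one of the two patterns, which gives soundness.

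With the induction in place, part \ei is just the instance $i=g$ of the $\Succ$-claim. For part \eii, the forward direction is immediate: any derivation of $\Key(\tk)$ must satisfy the body literals $\Digit(K_1),\dots,\Digit(K_g)$, and since $\Digit$ holds only of elements of $\Dig_\B$ (base case), each component of $\tk$ is a digit, so $\tk\in\Dig_\B^g$. For the converse, given $\tk\in\Dig_\B^g$ all $\Digit(K_j)$ are derivable by completeness of the base facts, so the body of the $\Key$-rule reduces to its guard $\neg\First(\tk)$; invoking the established correctness of $\First^g$, this guard is the sole remaining condition on the step keys, and I would use it together with the completeness of $\Digit$ to conclude $\Key(\tk)$ for the tuples of $\Dig_\B^g$ as claimed, with $\First^g$ marking the reserved initial tuple.

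The main obstacle I anticipate is the $\Succ^{i+1}$ inductive step, specifically making the leading-digit case analysis airtight: I must argue that two $<^{i+1}$-consecutive tuples differ in the leading digit if and only if the lower one carries the $<^i$-maximal tail and the upper one the $<^i$-minimal tail, so that the two $\Succ^{i+1}$ rules together are both sound and complete. A secondary but delicate point in part \eii is the treatment of the $<^g$-minimum under the $\neg\First$ guard, where the correctness of $\First^g$ must be applied precisely; establishing the least-fixpoint completeness of the positive ordering program (so that no successor pair is missed) and invoking the stratification of $\neg\First$ to justify evaluating $\First^g$ before the $\Key$-rule are routine but should be stated explicitly.
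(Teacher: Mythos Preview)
The paper does not actually prove this lemma: it is stated without proof, with only the informal gloss ``In other words, the program $\PiSucc_\P \cup \D_\B$ generates the linear order\ldots'' following it. The construction is adapted from~\cite{Gottlob-Eiter-LP-Journal-2001} and treated as folklore. Your plan---simultaneous induction on $i$ establishing the intended semantics of $\First^i$, $\Last^i$, $\Succ^i$, with the two-case analysis on the leading digit for $\Succ^{i+1}$---is exactly the standard verification of this encoding and is correct.

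Two small remarks. First, the second $\Succ^{i+1}$ rule as printed in the paper has $\First^i(\tX)$ and $\Last^i(\tY)$ swapped; your description (``the lower tuple carries the $<^i$-maximal tail and the upper one the $<^i$-minimal tail'') is the correct one, and you should either silently use the corrected rule or flag the typo. Second, for part~\eii note that the $\Key$ rule carries the guard $\neg\First(\tk)$, so strictly speaking $\Key(\tk)$ holds iff $\tk\in\Dig_\B^g$ \emph{and} $\tk$ is not the $<^g$-minimum; the paper itself says this a few lines above the lemma (``any tuple from $\Dig_\B^g$ except for the first one''), so the lemma statement is slightly loose. You already gesture at this with ``$\First^g$ marking the reserved initial tuple,'' but it is cleaner to state the exclusion explicitly rather than to try to argue the iff as written.
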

In other words, the program $\PiSucc_\P \cup \D_\B$ generates the linear order in $\PTIME$ in the size of data and instances, and in $\EXPTIME$ in the size of process.

\paragraph*{Encoding Stability into Datalog with Negation}

In the following we define a Datalog program with negation 
that, based on the linear order from above, 
produces all maximal extended databases.
Each maximal extended database is going to be encoded 
as one of the SMs of the program.
The program adapts \emph{guess and check} 
methodology from answer-set programming that 
organizes rules in guessing rules that generate SM candidates, 
and checking rules that discards bad candidates.
%

\textbf{Notation}
To encode guessing of executions 
we introduce relations $\Moved$ and $\NotMoved$ of size $g+1$
such that $\Moved(\tk,o)$ means that instance $o$ traverses at step $\tk$, 
and  $\NotMoved(\tk,o)$ means the opposite.
Here, $\NotMoved$ is needed for technical reasons.
Similarly, for transitions we introduce relations $\Trans$ and $\NotTrans$ such that
$\Trans(\tk,t)$ means that at step $\tk$ transition $t$ is traversed; and
$\NotTrans(\tk,t)$ means the opposite. 
Further,
we introduce relation $\Completed$  of size $g$
that we use to keep track of the steps that are completed.
That is, $\Completed(\tk)$ is true if step $\tk$ is completed
and steps that precede $\tk$ are also completed.
Then, to store the positions of each instance after 
some execution step we use relation $\Place$. 
E.g., if after $\tk$-th step instance $o$ is at place $p$ 
then $\Place(\tk,o,p)$ is true.
%
To store facts that are produced up to a certain step 
we introduce prime version relation $R'$ for each $R$ in $\Sigma_{\B}$.
Then $R'(\tk,\ts)$ is true iff $R(\ts)$ is produced up to step $\tk$. 

First we define \emph{guessing rules}:
\begin{align*}
 \InGuess(\tK;O) & \la \Key(\tK), \Inz(O,\blank), \neg \NonInGuess(\tK;O), \\
 \NonInGuess(\tK;O) & \la \Key(\tK), \Inz(O,\blank), \neg \InGuess(\tK;O), \\
 \bot & \la \Key(\tK), \InGuess(\tK;O_1), \InGuess(\tK;O_2), O_1 \neq O_2, \\
 \bot & \la \Inz(O,\blank), \Key(\tK), \neg \Moved(\tK,O).
\end{align*}
Intuitively, the first two
rules enforce each SM to partition instances into 
$\Moved$ and $\NotMoved$ for each step $\tk$,
and the last two ensures that at most one and at least one instance is selected.
%

%
We define the same kind of rules 
for $\Trans$ and $\NotTrans$.

Once an instance and a transition have been selected 
for one execution step $\tk$,
we need to ensure that the instance can actually traverse the transition.
Relation $\Completed$ keeps track of that for each step $\tk$ by checking 
if
\ei the selected instance $o$ satisfies the execution condition of the selected transition $t$;
\eii the instance $o$ is at place $q$ from which $t$ originates; and
\eiii if all previous execution steps were already completed.
This is achieved using the \emph{checking rules}: 
\begin{align*}
	\Done(\tK_2)  \la\ & \InGuess(\tK_2;O), \TGuess(\tK_2;t), \Succ(\tK_1,\tK_2),\\ 
	& \Done(\tK_1), E_t(\tK_1;O), \Place(\tK_1;O;q).
\end{align*}
Condition $E_t(\tK_1;O)$ is similar to the positive acyclic case where the execution $\tomega$ is
replaced with the execution step $\tK_1$.
%
%

Similarly, we define generation rules that for $R'$ and rules that update 
position of instances store in $\Place$.
%
%
%
Let the above rules 
together with the program $\PiSucc_\P \cup \D_\B$
define the program $\Picl$ for closed \DABP{s},
and let $\D_\I$ be a database instance that contains 
$\Inz$ facts and $\D$. 
%

\begin{lemma}
Let $\tk$ be an execution step in $\B$, and let 
$R(\ts)$ be a fact.
The following is equivalent:
\begin{itemize}
    \item There is an execution of length $\tk$ in $\B$ that produces $R(\ts)$;
    \item $\Picl \cup \D_\I  \mdl R'(\tk;\ts)$.
\end{itemize}
\end{lemma}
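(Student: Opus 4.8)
The plan is to establish a stronger correspondence from which the stated equivalence is immediate: every stable model of $\Picl \cup \D_\I$ encodes a closed execution of $\B$ of length at most $m k r c^a$, and conversely every such execution arises from exactly one stable model, in such a way that $R'(\tk;\ts)$ lies in the model precisely when $R(\ts)$ is among the facts produced up to step $\tk$ of the corresponding execution. Granting this correspondence, the lemma follows, since by definition $\Picl \cup \D_\I \mdl R'(\tk;\ts)$ holds iff some stable model contains $R'(\tk;\ts)$, i.e.\ iff some execution of length $\tk$ produces $R(\ts)$. Throughout I rely on Lemma~\ref{lemma:succ:order}: the positive subprogram $\PiSucc_\P \cup \D_\B$ fixes $\Succ$, $\Key$, $\First$ and $\Last$ identically in every stable model, so these behave as a deterministic enumeration of the $m k r c^a$ candidate execution steps.

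First I would treat the model-to-execution direction. Given a stable model $M$, the guessing rules for $\Moved$ and $\NotMoved$ together with the integrity constraints force, for each step $\tk$ with $\Key(\tk)$, the choice of exactly one instance; the analogous rules fix exactly one transition per step via $\Trans$ and $\NotTrans$. Reading these choices off along $\Succ$ yields a candidate sequence of instance--transition pairs. The checking rule defining $\Done$ then certifies, step by step, that this sequence is a legal execution: $\Done(\tk)$ is derivable only when the predecessor step is completed, the guessed instance sits at the source place recorded in $\Place$, and it satisfies the (possibly negated) execution condition $E_t$ evaluated against the facts $R'$ produced so far. The generation rules and the $\Place$-update rules, gated on $\Done$, then make $R'$ and $\Place$ track exactly the database and the instance positions of this execution. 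Conversely, starting from an execution $\Upsilon$ of length $\tk$, I would build the intended interpretation by instantiating $\Moved$, $\Trans$ and their complements according to the transitions traversed in $\Upsilon$ (choosing the guesses on steps beyond $\tk$ arbitrarily but consistently) and then closing off $\Done$, $\Place$ and the primed relations as forced by the remaining rules.

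The main obstacle is verifying the stable-model condition (in the sense of the Gelfond--Lifschitz reduct), namely that the candidate interpretation coincides with the least model of its own reduct, since $\Picl$ uses unstratified negation. The key observation that tames this is that the only genuinely unstratified negation lives in the guessing layer: the even loop between $\Moved$ and $\NotMoved$, and likewise between $\Trans$ and $\NotTrans$, each contributes exactly one stable choice per step, so stable models are in bijection with choice functions assigning one instance and one transition to every step. Every other use of negation---in particular the negated atoms inside the conditions $E_t$, which become negated $R'$-atoms at the predecessor step---is stratified \emph{along} the order $\Succ$, because the values of $\Done$, $\Place$ and $R'$ at a step depend only on their values at its immediate predecessor, never cyclically. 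Hence, once the guesses are fixed, the remainder of the program is effectively positive, with a uniquely determined least model, so each choice function extends to at most one stable model.

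I would make this precise by induction along $\Succ$, showing at each step that $\Done(\tk) \in M$ iff the guessed prefix up to $\tk$ is a genuine execution, and that $R'(\tk;\ts) \in M$ iff $R(\ts)$ is produced after $\tk$ steps. The base case is handled by the initialization of $R'$ from $\D_\I$ and of $\Place$ from the initial configuration; the inductive step matches the checking and generation rules against the definition of an atomic traversal, using the forward dependency to read the premise atoms off the predecessor step. The delicate point of the induction is confirming that the reduct admits no spurious derivation of $\Done$ at an ill-formed step---precisely what the predecessor-completion premise of the $\Done$-rule and the integrity constraints exclude---so that the least model of the reduct is exactly the intended interpretation and nothing more, and symmetrically that nothing produced by $\Upsilon$ is omitted.
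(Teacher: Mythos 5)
Your proposal takes essentially the same route as the paper: the paper establishes this lemma through its construction itself---the even negation loops in the guessing rules fixing one instance--transition pair per step along the $\Succ$ order, the $\Done$ rules validating that the guessed prefix is a legal execution, and the generation rules making $R'$ and $\Place$ track produced facts and positions---which is precisely the stable-model/execution correspondence, locally stratified along the order, that you elaborate by induction. One minor overstatement: an execution of length $\tk$ shorter than the full order $m k r c^a$ corresponds to \emph{many} stable models rather than exactly one (the guesses beyond the last completed step are unconstrained by $\Done$), so your claimed bijection should be weakened to the many-to-one correspondence you in fact use later (``arbitrarily but consistently''); this does not affect the argument, since brave entailment needs only existence of a witnessing model.
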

%

Now we want to test query $Q$ for stability.
We collect new query answers with the rule:
$
Q'(\tX) \la R'_1(\tK;\tu_1),\dots , R'_n(\tK;\tu_n).
$
Let $\PiQtest$ be the test program containing $Q$, $Q'$ and the test rule as in the previous case. Then the following holds:

\[
   \text{ $Q$ is instable in $\PID$ 
    \ iff\ \ }
	\Picl \cup \D_\I \cup \PiQtest \mdl \instable.
	 \]




\subsection{Proof of Proposition~\ref{proposition:cy-cl-process-hd:extended}}
\label{sec:hardness-SMS}

In the following we prove Proposition~\ref{proposition:cy-cl-process-hd:extended} defined above.

In particular, we show how to encode
the brave reasoning under Stable Model Semantics (SMS)
for a given
Datalog program $\Pi \cup \D$
with negation 
into stability problem for 
normal cyclic singleton \DABP $\tup{\P_{\Pi,A},\I_0,\D}$
under closed semantics,
where program $\Pi$ is encoded in the process model 
and data part of the program $\D$ is encoded in the database of the process.
As usual, the test query is $\Qtest$.

\paragraph*{Standard notation for Datalog program with negation}
For Datalog programs under stable model semantics (SMS) 
we use the following notation.
A normal Datalog rule is a rule of the form
		\begin{align*}
			R(\tu) \la R_1(\tu_1),\dots,R_l(\tu_l), \neg R_{l+1}(\tu_{l+1}),\dots, \neg R_h(\tu_h).
		\end{align*}
We use $H$ to denote the head of the rule $R(\tu)$, and 
$A_1,\dots,A_l, \neg A_{l+1},\dots, \neg A_{h}$ 
to denote body atoms 
$R_1(\tu_1),\dots,R_l(\tu_l), \neg R_{l+1}(\tu_{l+1}),\dots, \neg R_h(\tu_h)$
Then we can write the rules $r$ as: 
\[
H  \la A_1,\dots,A_l,\neg A_{l+1},\dots, \neg A_{h}.
\]

We represent a fact $R(\tu)$ as a Datalog \emph{fact rule} $R(\tu) \la$.

A Datalog program with negation $\Pi$ is a finite 
		set of normal Datalog rules $\{r_1,\dots,r_k \}$.

\paragraph*{Grounding of a Datalog program}
		Let $r$ be a normal Datalog rule and $C$ a set of constants.
		The \emph{grounding} $\grn_C(r)$ of $r$ is a set of 
		rules without variables obtained by substituting the variables in $r$ with constants from $C$
		in all possible ways.
		In this way we can obtain several grounded rules from a non-grounded rule.
		The grounding $\grn(\Pi)$ for a program $\Pi$ is a program obtained by grounding rules 
		in $\Pi$ using the constants from $\Pi$.
		We note that program $\grn(\Pi)$ and $\Pi$ have the same semantic properties (they have
		the same SM, see later).
		Program $\grn(\Pi)$ is just an expanded version of $\Pi$ (it can be exponentially bigger than $\Pi$).

\paragraph*{Stable model semantics} 
Concerning stable model semantics we use
the following notation.
An \emph{interpretation} of a program represented as a set of facts.
Let $M$ be an interpretation. 
We define the \emph{reduct} of $\Pi$ for $M$ as the ground 
		positive program 
		\begin{align*}
			\Pi^M = \{ A\la A_1,\dots,A_l \mid\  &
			A \la A_1,\dots,A_l,\neg A_{l+1},\dots, \neg A_h \in \grn(\Pi), \\
												&
					 M \cap \set{A_{l+1},\dots, A_h} = \emptyset \} 
		\end{align*}
		 
		Since $\Pi^M$ is a positive ground program it has a unique Minimal Model (MM), in the inclusion sense	
		Then, 
\[
M \text{ is a \emph{stable model} (SM) of } \Pi  
\quad\text{iff}\quad 
M \text{ is the minimal model of } \Pi^M.
\]
Given a program $\Pi$ and a fact $A$ we say that
		\begin{align*}
			\Pi \mdl A
		\end{align*}
if there exists a SM $M$ of $\Pi$ such that $A \in M$.

For a given $\Pi$ and a fact $A$, deciding whether $\Pi \mdl A$ is $\NEXPTIME$-hard.

\newcommand{\BPDA}{\B_{\P,\D,A}}
\newcommand{\PiD}{\Pi \cup \D}

\paragraph*{Encoding of Brave Entailment into Stability Problem}

Given a program $\Pi \cup \D$ and a fact $A$ we construct an 
\DABP $\B_{\P,\D,A}\tup{\P_{\Pi,A},\I_0,\D}$ such that
for a test query $\Qtest \la \dummy$ the following holds:
\[
	\text{$\Pi \cup \D \mdl A$ iff $\Qtest$ is stable in $\tup{\P_{\Pi,A},\I_0,\D}$}.
\]
For convenience, in the following we use $\Pi$ to denote $\PiD$, unless 
otherwise is stated.

Intuitively, process $\BPDA$ is constructed such that the following holds. 
\begin{itemize}
\item 
The process generates all possible interpretations for 
$\Pi$ using the variables and constants from $\Pi$. 
That is, it generates all possible candidates
for SMs of $\Pi$.
\item 
For every such SM candidate $M$, 
the process checks if $M$ is a SM of $\Pi$ 
by:
\begin{itemize}
	\item[\emph{i)}] computing the MM of $\Pi^M$ denoted with $M'$;
	\item[\emph{ii)}] checking if $M' = M$.
\end{itemize}
\item
	If $M$ is a SM of $\Pi$ then the process checks for the given fact $A$ whether it holds that $A \in M$.
	If so, the process produces $\dummy$.
\end{itemize}

We organize $\B_{\Pi,A}$ in 6 subprocesses represented in Figure~\ref{fig:steps}.

\begin{figure}[!ht]
\begin{center}
\includegraphics[scale=0.35]{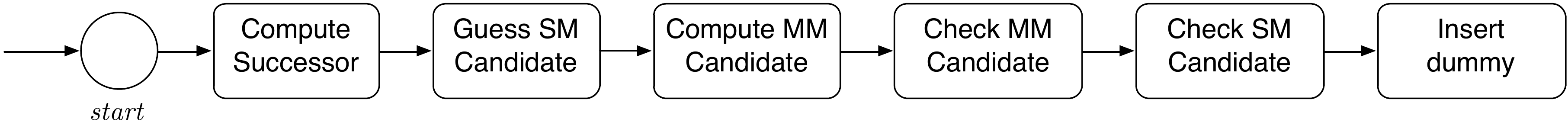}
\end{center}
\caption{Subprocesses composing the process net of ${\P_{\Pi,A}}$.}
\label{fig:steps}
\end{figure}

\bigskip
\noindent
The subprocesses are intuitively defined as follows:
\begin{description}
\item[Subp 1.] 
	\emph{(Compute successor relations)} 
	First we compute the successor relations
	$\Succ^i$ of sufficient size $i$, 
	that we need in the next steps.
	This we need for technical reasons.
\item[Subp 2.] 
	\emph{(Guess a SM candidate)} 
	At this step, the process produces a SM candidate
	by non-deterministically producing facts obtained 
	from relations and constants that appear in the program.
	Let $R$ be a relation in $\Pi$. 	
	Then, for each $R$-fact that can be obtained by taking 	
	the constants from $\Pi$, 
	a process does an execution step at 
	\emph{the choice place} 
	from which if an instance traverse one way
	the process produces this $R$-facts, 
	and if it traverses the other way
	then it does not.
	We denote with $M$ the guessed SM candidate.
\item[Subp 3.] 
	\emph{(Compute a MM candidate of the reduct)} 
	We want compute the MM of the reduct $\Pi^M$.
	To do so, we first compute a candidate $M'$ 
	for the MM by by non-deterministically applying the rules of 
	$\Pi^M$.
	Computing a candidate and the testing if the candidate 
	is the MM is our approach to find the MM. 
\item[Subp 4.] 
	\emph{(Check if $M'$ is the MM of the reduct)}
	At this step we check if $M'$ is indeed the minimal model
	of the reduct $\Pi^M$.
	If this is not the case, the process is not going to progress further.
\item[Subp 5.] 
	\emph{(Check if SM candidate is a SM)}
	If we are at this step then $M'$ is the MM of $\Pi^M$. 
	Now we check if $M' = M$. 
	If this is the case, then $M$ is a stable model of $\Pi$.
\item[Subp 6.] 
	\emph{(Insert $\dummy$)} 
	Finally, we check if $A \in M$. If this is the case then the process produces $\dummy$.
\end{description}

\paragraph*{Instance and data part.}
	We initialize the instance part $\I_0$
	by placing a single instance at the start place, 
	we set database to be the data part of the program $\D$.

\paragraph*{Process model.}
	In the following we construct the process model $\P_{\Pi,A}$.

\paragraph*{Subp 1: Computing successor relations.} 

In order to nondeterministically select which $R$-facts 
to produce for a relation $R$ in $\Pi$,
we introduce sufficiently big linear order that index all $R$-facts.
Since there are exponentially many $R$-facts we define the process
rules that compute the order starting from an order of a
polynomial size.
The rules that compute the exponentially big order
uses the same rules define as in Lemma~\ref{lemma:succ:order}.
Here, the difference is that we use constants from $\Pi$
as digits.

Let $C = \{b_1,\dots,b_c\}$ be the constants from $\Pi$. 
We define a linear order $<$ on $C$ such that
\[b_1 < b_2 < \dots <b_c.\]

Let $<^j$ be the lexicographical order  linear order on $C^j$, 
defined from $<$
for some $j>0$.

Further, let $n$ be the maximum between
\begin{itemize}
	\item the maximal arity of a relation in $\Pi$; and
	\item the largest number of variables in a rule in $\Pi$.
\end{itemize}
We want to compute the successor relation $\Succ^j$ that contains immediate
successors in the order $<^j$ for $j=1,\dots,n$

\textbf{Vocabulary and Symbols}
To encode the order as database relations we
introduce relations:
$\Const$ of size $1$ to store constants from $\Pi$;
$\Succ^j$ of size $2j$ to store 
immediate successors in the order $<^j$;
$\First^j$ and $\Last^j$ to store
the first and the last element of the order $<^j$.
That is,
\begin{itemize}
	\item 
	$\Const(b)$ -- is true iff $b$ is a constant from $\Pi$.
	\item 
	$\Succ^j(\tb,\tb')$ -- is true iff 
	$\tb$ is the immediate successor of $\tb'$
	in the order $<^j$;
	\item 
	$\First^j(\tb)$ -- is true iff 
	$\tb$ is the first element in the order $<^j$;
	\item 
	$\Last^j(\tb)$ --  is true 
	iff $\tb$ is the last element in the order $<^j$.
\end{itemize}

\textbf{Initialization} 
We initialize relations for the ordering as follows:
\begin{itemize}
	\item 
	$\Const(b)$ -- we intialize relation $\Const$ 
	with all the constants from $\Pi$;
	\item 
	$\Succ^1(b,b')$ -- we initialize relation $\Succ^1$ 
	saying that $b'$ is the successor of $b$;
	\item 
	$\First^j(b,\dots,b)$ -- is the initialization for relation $\First^j$ such that $b$ is the first element in  the order $<$;
	\item 
	$\Last^j(b,\dots,b)$ -- is the initialization for relation $\Last^j$ 
	such that $b$ is the last element in the order $<$.
\end{itemize}

\begin{figure}[!ht]
\begin{center}
\includegraphics[scale=0.4]{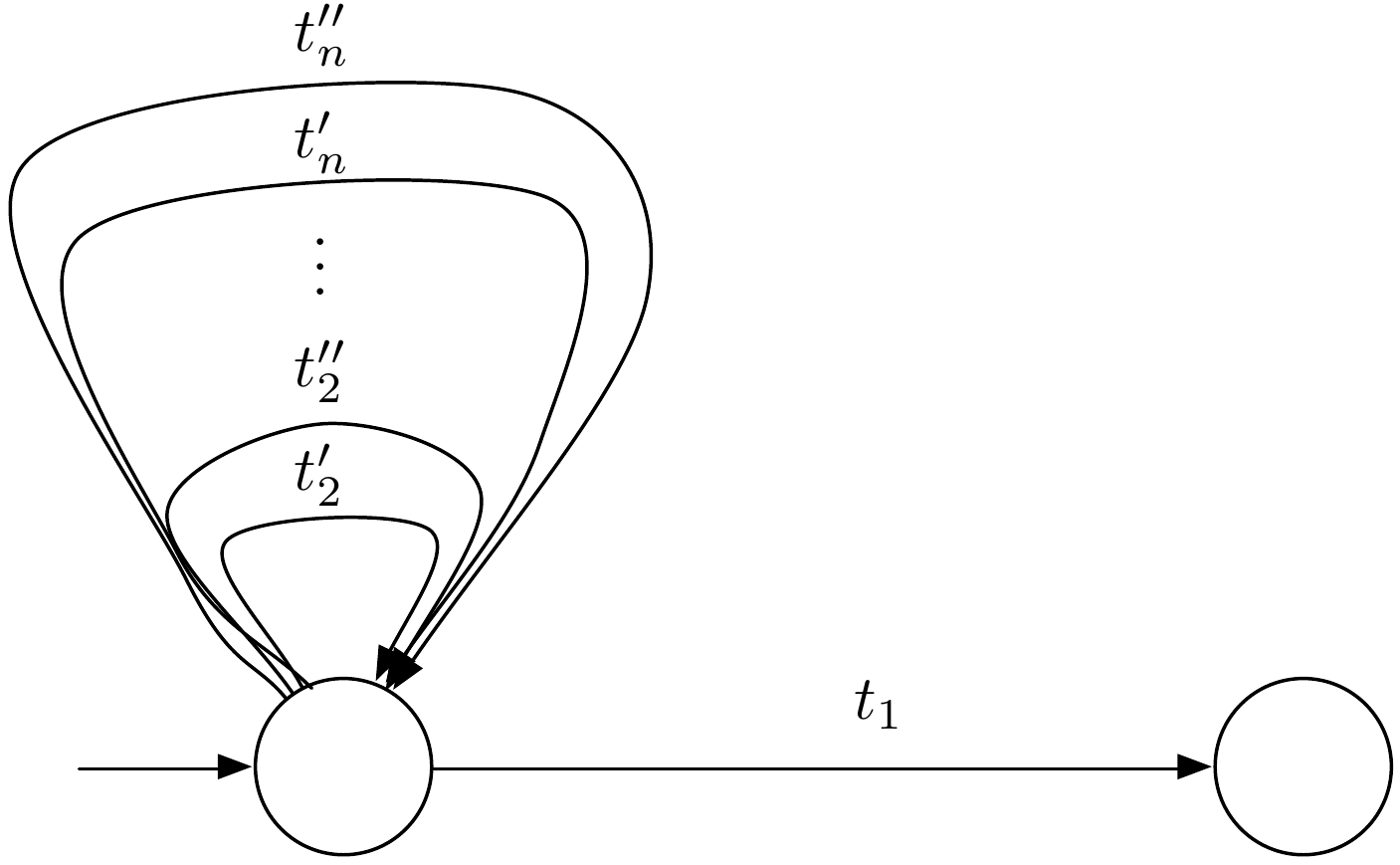}
\end{center}
\caption{Subprocess 1 computes successor relations $\Succ^i$ for $i = 1,\dots,n$}
\label{fig:subp1}
\end{figure}

\textbf{Encoding into the process}
We introduce $2n-1$ transitions 
$t_1,t'_2,t''_2,\dots,t'_n,t''_n$ (see Figure~\ref{fig:subp1})
such that 	$t'_j$ and $t''_j$ are used to generate $\Succ^j$. 
Then, we set the execution condition for these transitions to be always executable:
\begin{align*}
	E_{t'_j} = E_{t''_j} = \true.
\end{align*}

We use the writing rules to populate the relations 
$\Succ^j$ for $ 1 < j \leq n$ 
as follows:

\begin{align*}
	 & W_{t'_{j+1}}: \Succ^{j+1}(Z,\tX,Z,\tY) \la 
	 				 \Const(Z), \Succ^j(\tX,\tY);\\
	 & W_{t''_{j+1}}: \Succ^{j+1}(Z_1,\tX,Z_2,\tY) \la  
	 				\Succ^1(Z_1,Z_2), \First^j(\tX), \Last^j(\tY).
\end{align*}

Once all successor relations are generated transition $t_1$ can be executed:
\begin{align*}
	E_{t_1}: \Succ^n(\tX,\blank), Last^n(\tX).
\end{align*}

\paragraph*{Subp 2: Guessing a SM candidate.} 
Let $R_1,\dots,R_m$ be the relations in $\Pi$.
For every relation $R$ in $\Pi$ we create a subprocess $\textit{Guess-}R$
that non-deterministically guesses $R$-facts that belong to a SM candidate $M$.

Subprocess 2 is composed by connecting subprocess $\textit{Guess-}R$ 
for each relation $R$ as depicted in Figure~\ref{fig:subp2}.

\begin{figure}[!ht]
\begin{center}
\includegraphics[scale=0.4]{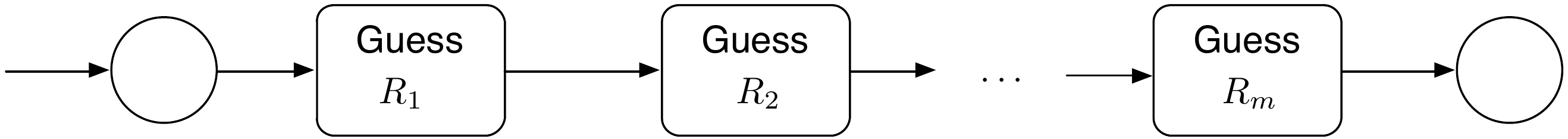}
\end{center}
\caption{Subprocess 2}
\label{fig:subp2}
\end{figure}

\textbf{Notation}
We assume the following notations:
$a$ arity of a relation $R$ in $\Pi$;
$m$ is the number of relations in $\Pi$;
$\done_R$ is a relation of arity $a$ such that 
$\done_R(\tu)$ is true after the subprocess $\textit{Guess-}R$ 
has guessed whether to include $R(\tu)$-fact in the SM candidate or not

\textbf{Encoding into the process}
		The subprocess $\textit{Guess-}R$  is defined as in Figure~\ref{fig:subp2-detail}

\begin{figure}[!ht]
\begin{center}
\includegraphics[scale=0.4]{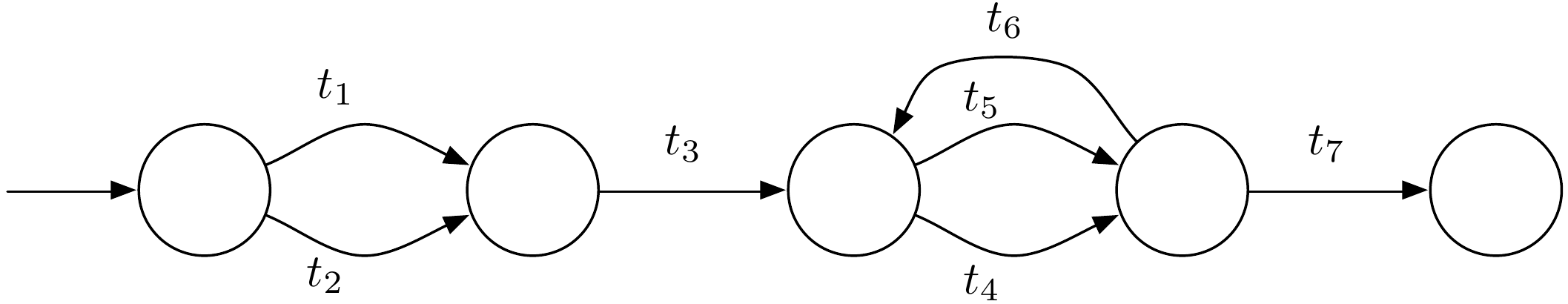}
\end{center}
\caption{Subprocess $\textit{Guess-}R$ }
\label{fig:subp2-detail}
\end{figure}

For convenience introduce condition  
$\CurrentR(\tX)$ that is true if 
the next $R(\tX)$-fact for which the process has to decide
whether to include it in the SM candidate or not.
The condition is defined with:
\begin{align*}
	\CurrentR(\tX): \Succ^a(\tX,\tY), \done_R(\tY), \neg \done_R(\tX).
\end{align*}

Transitions $t_1$ and $t_2$ are executed non-deterministically. Intuitively, they non-deterministically
decide whether the $R$-fact, obtained by grounding $R$ with constants from $\First^a$, 
belongs to the SM candidate ($t_1$) or not ($t_2$):
\begin{align*}
	& E_{t_1} = E_{t_2} : \true; \\
	& W_{t_1}: R(\tX) \la \First^a(\tX); \\
	& W_{t_2}: \true \la \true.
\end{align*}

Then, transition $t_3$ inserts that the guess for the first $R$-fact has been made by inserting $\done_R(\tx)$:
\begin{align*}
	& E_{t_3} : \true; \\
	& W_{t_3}: \done_R(\tX) \la \First^a(\tX).
\end{align*}

Transitions $t_4$ and $t_5$, similarly to transitions $t_1$ and $t_2$, non-deterministically guess whether 
the next $R(\tX)$-fact belongs to the SM candidate or not:
\begin{align*}
	& E_{t_4} = E_{t_5} : \true; \\ 
	& W_{t_4}: R(\tX) \la \CurrentR(\tX); \\
	& W_{t_5}: \true \la \true.
\end{align*}

Transition $t_6$, similarly to transition $t_3$, inserts fact $\done_R(\tX)$ after decision for $R(\tX)$-fact has been made:
\begin{align*}
	& E_{t_6} : \true; \\
	& W_{t_6}: \done_R(\tX) \la \CurrentR(\tX).
\end{align*}

When all guesses have been made, transition $t_7$ can be executed and 
the next subprocess will be executed:
\begin{align*}
	& E_{t_7} : \done_R(\tX), \Last^a(\tX); \\
	& W_{t_7}: \true \la \true.
\end{align*}

\paragraph*{Subp 3: Compute the minimal model of the reduct.} 

The subprocesses 3 and 4 compute the MM $M'$ of $\Pi^M$.
Intuitively, this done in the following way:
\begin{itemize}
	\item 
		Since $\Pi^M$ is a positive ground program the MM of $\Pi^M$
		is unique and it can be computed as the Least Fixed Point (LFP) on the rules in $\Pi^M$.
	\item 
		In subprocess 3, depicted in Figure~\ref{fig:subp3}, the process produces facts 
		that are in the LFP of $\Pi^M$.
		For every relation $R$ we introduce a relation $R'$ that stores facts produced by the 
		LFP 	computation.
	\item 
		In principle, subprocess 3 can produce all facts from the LFP if it executes
		a sufficient number of times. However, it can produce also only a part of the LFP 
		if it decides to traverse $t_{k+1}$.
	\item
		In other words, subprocess 3 non-deterministically decides how many facts from the LFP to produce.
	\item 
		In subprocess 4 we check if all facts from the LFP of $\Pi^M$ are indeed produced
		at subprocess 3.
\end{itemize}

\textbf{Vocabulary and Symbols}
\begin{itemize}
	\item $R'(\tu)$ -- holds iff $R(\tu)$ is in the LFP of $\Pi^M$ (i.e. it is in the MM
		of $\Pi^M$) and it is computed by subprocess 3.
\end{itemize}

\begin{figure}[!ht]
\begin{center}
\includegraphics[scale=0.4]{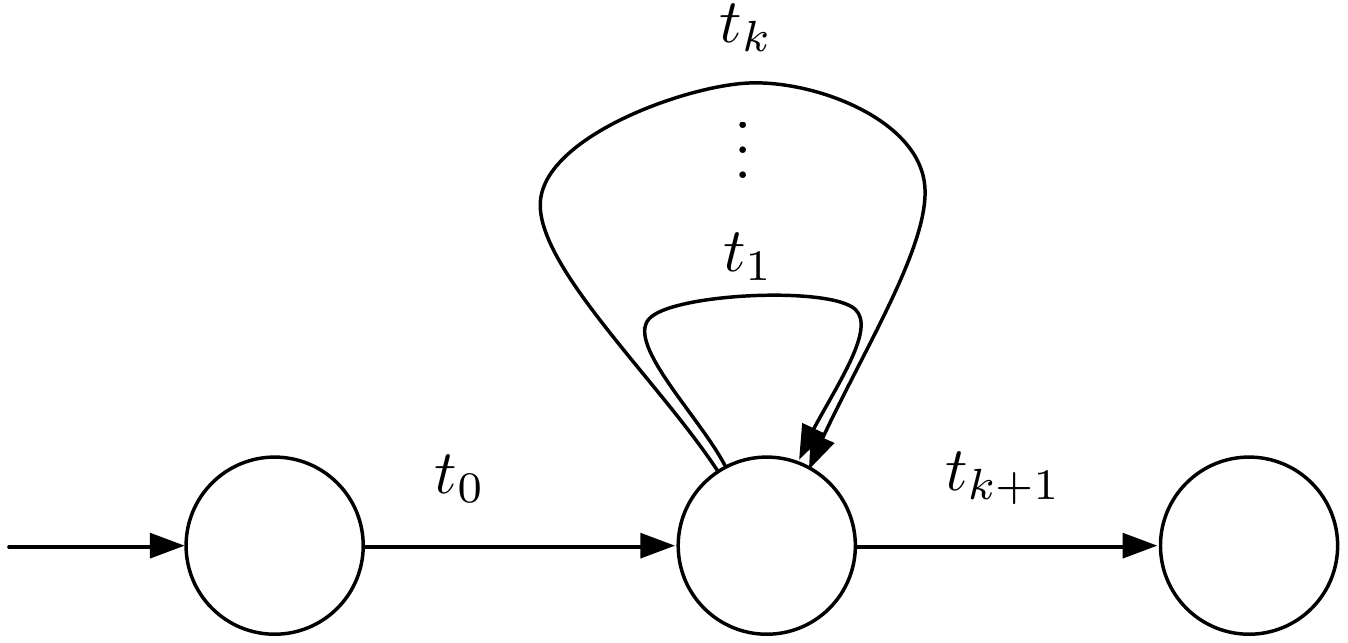}
\end{center}
\caption{Subprocess 3 computes the MM candidate of the reduct}
\label{fig:subp3}
\end{figure}

\textbf{Encoding into the process}
Let $\set{r_1,\dots,r_k}$ be the rules in $\Pi$.
For every rule $r_i$ of the form $H \la A_1,\dots,A_l ,\neg A_{l+1},\dots, \neg A_h$ we introduce transition $t_i$ as
depicted in Figure~\ref{fig:subp3} with execution condition:
\begin{align*}
	E_{t_i}: \true
\end{align*}
%
and writing rule as follows:
\begin{align*}
	W_{t_i}:  H' \la \ & A_1',\dots,A_l',   A_1,\dots,A_l ,\neg A_{l+1},\dots, \neg A_h.
\end{align*}




Here, atoms $H', A_1',\dots,A_l'$ are the same as 
$H, A_1,\dots,A_l$,
except that each relation name $R$ is renamed with $R'$.
Atoms $A_1,\dots,A_l ,\neg A_{l+1},\dots, \neg A_h$ evaluates over $M$ and they are true iff there exists a grounding substitution $\theta$ (a substitution that replaces variables with constants) such that the ground rule 
$\theta A \la \theta  A_1,\dots,\theta A_l$ is in the reduct $\Pi^M$.
For $l=0$, the fact $\theta H'$ is produced by the process since the rule $\theta H \la $ is in $\Pi^M$ as thus $H$ is in the LFP of $\Pi^M$.
For $l>0$, assume that $\theta  A_1',\dots,\theta A_l'$ are already produced by the process such that $\theta  A_1,\dots,\theta A_l$ 
are in the LFP of $\Pi^M$.
Then we have that $\theta H'$ is produced by the process iff 
$\theta H$ is in the LFP of $\Pi^M$.

\paragraph*{Subp 4: Check if the computed model is a minimal model of the reduct.} 
		After the execution of subprocess 3 we obtain a MM candidate $M'$ as a set of
		$R'$-facts produced by the process.
	
		In this step we check if $M'$ is indeed a MM of $\Pi^M$, because in the preceding step
		it may be that the process has generated a part of the LFP of $\Pi^M$.
	
		For the check we define the process as in Figure~\ref{fig:subp4}, where each transition
		$t_{r_i}$ checks if $M'$ contains all facts in the LFP that
		can be produced by the rule $r_i$.

\begin{figure}[!ht]
\begin{center}
\includegraphics[scale=0.4]{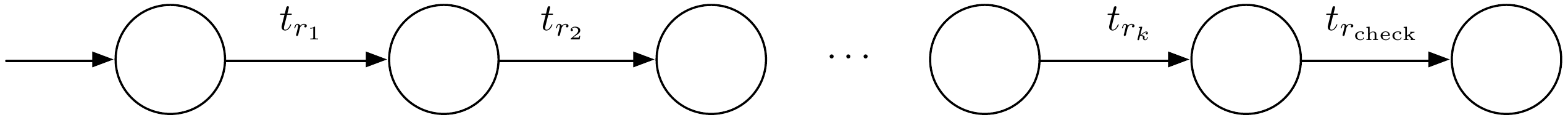}
\end{center}
\caption{Subprocess 4 checks if the MM candidate is the MM of the reduct}
\label{fig:subp4}
\end{figure}

\textbf{Notation}

We introduce unary predicate $\fail$ that is true iff $M'$ is not a MM.

\textbf{Encoding into the process}
	For every rule $r$ we introduce a transition $t_r$ with execution condition
		\begin{align*}
			E_{t_r}: \true,
		\end{align*}
		and with writing rule as follows:
		\begin{align*}
		W_{t_r}: \fail \la & A_1',\dots,A_l', \neg H', A_1,\dots,A_l ,\neg A_{l+1},\dots, \neg A_h
		\end{align*} 	
 	
Fact $\fail$ is produced by the process iff facts $\theta A_1',\dots,\theta A_l'$
are produced by the subprocess 3 while $\theta H'$ is not, for some substitution $\theta$. 
Obviously, this is true iff $M'$ is not the MM of the reduct.

	Last transition $t_{\textit{check}}$ is executable if none of the previous steps
		has generated the $\fail$ predicate:
		\begin{align*}
			E_{t_{\textit{check}}}: \neg \fail.
		\end{align*}

\paragraph*{Subp 5: Checking if SM candidate is a SM.} 
		If the process execution can reach subprocess 5 it means that $M'$ is
		indeed the MM of reduct $\Pi^M$.
		It remains to check if $M$ is a SM of $\Pi$, that is if $M'= M$.

		For this check we define the subprocess as in Figure~\ref{fig:subp5}.

		Transition $t'_i$ checks if there is a $R'_i$-fact for which there is
		no $R_i$-fact and transition $t''_i$ checks if there is a $R_i$-fact for which there is
		no $R'_i$-fact.

\begin{figure}[!ht]
\begin{center}
\includegraphics[scale=0.4]{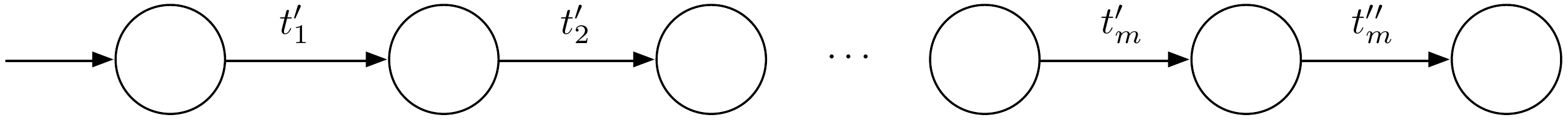}
\end{center}
\caption{Subprocess 5 checks if SM candidate is a SM}
\label{fig:subp5}
\end{figure}

\textbf{Notation}
We introduce unary predicate $\faili$ that holds if $M' \neq M$.

\textbf{Encoding into the process}
		Transition $t'_i$ is encoded as follows:
		\begin{align*}
			W_{t'_i}: \faili \la R'_i(X), \neg R_i(X).
		\end{align*}

		Transition $t''_i$ is encoded as follows:
		\begin{align*}
			W_{t''_i}: \faili \la R_i(X), \neg R'_i(X).
		\end{align*}

\paragraph*{Subp 6: Insert $\dummy$.} 
		After the execution of subprocess 5 if no $\faili$ facts were produced,
			then $M' = M$
	
		Subprocess 6 checks whether this is the case. If $M' = M$ and $A \in M$ the process inserts $\dummy$

\begin{figure}[!ht]
\begin{center}
\includegraphics[scale=0.4]{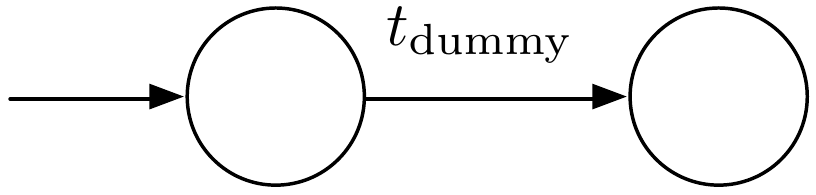}
\end{center}
\caption{Subprocess 6 inserts $\dummy$}
\label{fig:subp6}
\end{figure}

\textbf{Encoding into the process}
The subprocess is depicted in Figure~\ref{fig:subp6}
	Transition $t_{\dummy}$ checks if $M'= M$ with the execution condition:
	\begin{align*}
		E_{t_{\dummy}}: \neg \textit{fail}_1,\dots,\neg \textit{fail}_m.
	\end{align*}
	
		By traversing $t_{\dummy}$ if condition $A \in M$ then $\dummy$ is inserted with the
		following writing rule:
		\begin{align*}
			W_{t_{\dummy}}: \dummy \la A.
		\end{align*}

All together, we have that fact $A$ is produced by the process iff
there exists a SM of the program that contains $A$.
This concludes the proof.



\newpage
\section{Rowo \DABP{s}}

\subsection{Proof of Theorem~\ref{theorem:rowo}}

We first show encodings for fresh rowo \DABP{s}.
Then we show encodings for closed rowo  \DABP{s}, and 
finally we combine  these two encodings to 
obtain encodings for arbitrary rowo \DABP{s}.

\paragraph*{Rowo Fresh \DABP{s}}
\label{subsec:rowo:fresh:open}

First we analyze a fresh rowo $\B=\tup{\P,\D}$.
For this case
we adapt $\PiPposfr$
from Section~\ref{sec:DABP-Datalog-power-short}.

In rowo  \DABP{s},  we have that each instance needs not to traverse 
a transition more than once in order to produce the most that
the transition can produce.
It may need to traverse some transitions more than once 
to reach other transitions, but in total it is sufficient that
it makes at most $m^2$ traversals to reach all transitions,
where $m$ is the number of transitions.
I.e., it is sufficient to consider executions of a single instance of
maximal size $m^2$.
Therefore, 
we can eliminate recursion from traversal rules in positive fresh \DABP{s} by
creating a bounded derivation of maximal size $m^2$.

To this end, instead of $\In_p$ we introduce relations
$\In^i_p$ 
for each $i$ up to $m^2$
to record that a fresh instance can reach place $p$ in $i$ steps. 
That is,
$\In^i_p(\ts)$ is true iff a fresh instance with $\In(\ts)$-record
can reach place $p$ in $i$ steps.

Next we adapt the rules.

\myparagraph{Traversal Rules}
For each transition $t$ from a place $q$ to a place $p$
and for each $i$ up to $m^2$ we adapt a \emph{traversal rule} as follows:
\[
    \In^{i+1}_p(\ts) \la E_t^i(\ts)
\]
Note that, as a difference from the general case, here $E_t^i(\ts)$ denotes the execution  condition evaluated over the initial database 
(rather than on the extended database
as $E'_t(\ts)$ would denote), and where
$\In(\ts)$ condition is replaced with $\In^{i}_q(\ts)$.
The database relations are not changed.

\myparagraph{Generation Rules}
Similarly, for each transition $t$ from above we introduce the following 
\emph{generation rule}:
\[
		  R'(\tu) \la E_t^i(\ts) ,B_t^i(\ts',\tu), \ts=\ts'
\]
As pointed in the observations 
negation in $E_t$ and $B_t$ 
does not make reasoning more complex 
{since negation is on the base relations 
that are not updated by the process}. 

\myparagraph{Summary}
Let $\PiProwofr$ be the non-recursive Datalog program with stratified negation
that encodes the rowo process $\P$
obtained from $\PiPposfr$
substituting the traversal and generation rules with the rules above.
The rest of the program is the same as in the positive variant.

\begin{lemma}
Let $R'(\tu)$ be a fact defined over $\adomExt$, then the following is equivalent:
\begin{itemize}
  \item there is an execution in $\B$ that produces $R(\tu)$
  \item $\PiProwofr \cup \D \models R'(\tu)$
\end{itemize}
\end{lemma}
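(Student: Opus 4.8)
The plan is to prove the equivalence through an auxiliary characterisation of the reachability relations $\In^i_p$, exploiting the defining feature of rowo: since execution conditions and writing-rule bodies query only the read schema $\Sigma_r$, while the process writes only into the disjoint write schema $\Sigma_w$, the restriction of the database to $\Sigma_r$ stays equal to $\D|_{\Sigma_r}$ throughout every execution. Hence whether a transition $t$ is enabled for an instance with record $\In(\ts)$, and which facts it writes, depend only on $\D$ and $\ts$ — never on what has already been written. In particular no instance can influence another (the write part is never read), so for a fresh \DABP it suffices to analyse each fresh instance in isolation, and by the rowo observation a single instance reaches every place it can ever reach within $m^2$ traversals.

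First I would establish, by induction on $i$, the claim that
\[
  \PiProwofr \cup \D \models \In^i_p(\ts)
  \quad\text{iff}\quad
  \text{a fresh instance with record }\In(\ts)\text{ reaches }p\text{ by a walk of length }i.
\]
For $i=0$ the introduction rule populates $\In^0_\start$ with exactly the $\adomExt$-records, matching the fresh instances placed at $\start$. For the step, the traversal rule $\In^{i+1}_p(\ts)\la E^i_t(\ts)$ fires iff some transition $t$ from a place $q$ to $p$ has $\In^i_q(\ts)$ derivable and its condition $E^i_t(\ts)$ — evaluated over the unchanged base database $\D$ — satisfied; by read-invariance this is precisely enabledness of $t$ for that instance, and by the induction hypothesis $\In^i_q(\ts)$ corresponds to reaching $q$ in $i$ steps. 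Thus one derivation step mirrors exactly one traversal. Ranging $i$ up to $m^2$ loses nothing, since every reachable place is reached by a walk of length at most $m^2$.

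Given the claim, both directions of the lemma follow. The generation rule $R'(\tu)\la E^i_t(\ts),B^i_t(\ts',\tu),\ts=\ts'$ derives $R'(\tu)$ iff for some $i\le m^2$ and transition $t$ from a place $q$ with writing rule $R(\cdot)\la B_t(\cdot)$ we have $\In^i_q(\ts)$ derivable and $\tu\in B_t(\D\cup\{\In(\ts)\})$, i.e.\ (by the claim and read-invariance) iff an instance with record $\In(\ts)$ reaches $q$ and produces $R(\tu)$ by traversing $t$. Facts $R(\tu)$ already in $\D$ are produced by the empty execution and are captured separately by the copy rule $R'(\tX)\la R(\tX)$. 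Finally, since the introduction rule seeds only $\adomExt$-records while the lemma restricts to facts $\tu$ over $\adomExt$, the Abstraction Proposition~\ref{prop-abstractions:of:timestamps} guarantees that production of such a fact by an arbitrary fresh instance of $\B$ is equivalent to its production by an $\adomExt$-valued instance (as $\delta_{\B}$ fixes $\adomExt$), which is exactly what the program simulates.

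The main obstacle I expect is not the induction itself but cleanly isolating the two rowo-specific simplifications and making them rigorous: \emph{(i)} that the read part of the database is genuinely invariant, so conditions may be evaluated over $\D$ rather than over an accumulating extension — this is what licenses dropping the primed relations from the conditions and eliminating recursion; and \emph{(ii)} that a single instance within $m^2$ traversals, together with the abstraction to $\adomExt$, captures all producible $\adomExt$-facts despite the unbounded number of fresh instances that may start. Coordinating the $m^2$ length bound with the fresh-value abstraction is where the care is needed.
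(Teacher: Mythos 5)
Your proof is correct and follows essentially the same route as the paper, which states this lemma directly on the strength of its construction: read-invariance of the $\Sigma_r$-part (licensing evaluation of conditions over $\D$ and independence of instances), the $m^2$ bound on single-instance walks, and Proposition~\ref{prop-abstractions:of:timestamps} to reduce arbitrary fresh instances to $\adomExt$-valued ones. Your induction on $i$ relating derivations of $\In^i_p(\ts)$ to walks of length $i$ is exactly the missing formal detail the paper's argument implicitly relies on, so there is nothing to flag beyond minor bookkeeping (e.g.\ also carrying the timestamp condition $\tau \geq \tauB$ imposed on the outgoing transitions of $\start$ in $\P'$ through the induction).
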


Let $\PiQtest$ be the test
program based on a query $Q$ as defined for the general variant.

\begin{theorem} 
\label{rowo-th-instability}
The following is equivalent:
\begin{itemize}
	\item $Q$ is instable in $\B$;
	\item $\PiProwofr \cup \D \cup \PiQtest \models \instable$
\end{itemize}
\end{theorem}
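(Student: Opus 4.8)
The plan is to reduce instability of $Q$ to the single inequality $Q(\D)\neq Q(\D')$, where $\D'$ is the unique maximal extension of $\D$ producible by $\B$, and then to read off $Q(\D')$ from the least fixpoint of $\PiProwofr\cup\D$. First I would argue that in a fresh rowo $\B$ there is one maximal extension $\D'$ that is, moreover, itself produced by a single execution. Since $\B$ is rowo, the relations written by the process lie in $\Sigma_w$ and are never read, so the facts a fresh instance produces along a path depend only on $\D$, not on what other instances have written; and since arbitrarily many fresh instances may be started, one execution can start, for every path of every relevant fresh instance, a separate instance and let each produce its contribution. Hence the union $\D'$ of all facts producible by $\B$ is realized by a single execution, and for every execution producing $W$ we have $\D\cup W\subseteq\D'$.

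Because $Q$ is a conjunctive (hence monotone) query and $\D\subseteq\D'$, we have $Q(\D)\subseteq Q(\D')$. Thus $Q$ is instable in $\B$ iff some execution yields a new answer, iff $Q(\D)\subsetneq Q(\D')$. By Proposition~\ref{prop-abstractions:of:timestamps}(b) it suffices to consider $\delta_{\B}$-abstracted facts, i.e.\ facts over $\adomExt$: a new answer exists iff a new answer witnessed entirely by facts over $\adomExt$ exists. Consequently I may take $\D'$ to range only over $\adomExt$-facts without affecting the test $Q(\D)\neq Q(\D')$.

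Next I would invoke the preceding lemma, which states that for a fact over $\adomExt$, $R(\tu)$ is producible by $\B$ iff $\PiProwofr\cup\D\models R'(\tu)$. Reading this over all relations $R$ of $\P$ identifies the set of $R'$-facts in the least fixpoint of $\PiProwofr\cup\D$ with the $\adomExt$-restriction of $\D'$, via the renaming $R\mapsto R'$. Since $\PiQtest$ contains the single rule $Q'(\tX)\la R'_1(\tu_1),\dots,R'_n(\tu_n)$, evaluating $Q'$ over this fixpoint computes exactly $Q(\D')$, while the program's copy of $Q$ computes $Q(\D)$. The test rule $\instable\la Q'(\tX),\neg Q(\tX)$ therefore fires iff $Q(\D')\setminus Q(\D)\neq\emptyset$, which by monotonicity is equivalent to $Q(\D)\neq Q(\D')$. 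Chaining the equivalences yields $Q$ instable in $\B$ iff $\PiProwofr\cup\D\cup\PiQtest\models\instable$.

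The main obstacle is the first step: rigorously justifying that a single new answer — which requires all body atoms $R_1(\tu_1),\dots,R_n(\tu_n)$ to be satisfied simultaneously in one database — corresponds to every needed fact belonging to the single, simultaneously realizable extension $\D'$. This is exactly where freshness and the rowo restriction are essential: without the non-interaction guarantee one could not merge the separate single-instance productions into one execution, and without unbounded fresh instances one could not supply a distinct instance for each required fact. The remaining steps, namely the fixpoint identification through the preceding lemma and the reading of the test program, are routine.
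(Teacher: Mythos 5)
Your proposal is correct and takes essentially the same route as the paper: the paper likewise combines the abstraction principle (Proposition~\ref{prop-abstractions:of:timestamps}), the fact that unboundedly many non-interacting fresh instances yield a single maximal extension $\D'$ whose $\adomExt$-part is exactly the primed least fixpoint of $\PiProwofr \cup \D$ (the preceding encoding lemma), and the test program's rule $\instable \la Q'(\tX), \neg Q(\tX)$ together with monotonicity of conjunctive queries. The step you flag as the main obstacle---merging the independent per-instance productions into one execution (with fresh starts ordered by timestamp)---is precisely the non-interaction argument the paper states for the positive fresh variant and leaves implicit here, so your expansion fills in detail rather than diverging.
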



\paragraph*{Rowo Closed \DABP{s}}
\label{subsec:rowo:normal:arbitrary:closed}

We now consider a possibly cyclic rowo $\B=\tup{\P,\I,\D}$ under closed executions.
We adapt the encoding from the  acyclic closed variant
(which can be obtained from closed positive).
The main difference is that each instance is encoded 
independently of the others.
I.e., we encode an execution of a single instance as a tuple 
$\tomega$ of the form
\[
	\tomega=\tup{o; t_{h_1}, \dots, t_{h_i}}.
\]
{meaning that instance $o$ traverses first $t_{h_1}$ then $t_{h_2}$, and so on.}

{Similarly we adapt $R^i$'s and $\State^i$'s from the general case
such that:}
\begin{itemize}
	\item 
	$R^i(o;t_{h_1}, \dots, t_{h_i};\ts)$ 
		denotes that the instance $o$ after traversing
		$t_{h_1}, \dots, t_{h_i}$ produces $R(\ts)$; and
	\item 
	$\State^i(o;t_1,\dots,t_i;p)$  
		denotes that the instance $o$ 
		after traversing $t_1,\dots,t_i$  
		is located at place $p$.
\end{itemize}
Similarly to the previous variant, cycles can be dealt with
bounded derivations of maximal length $m^2$,
so $i$ ranges from $1,\dots,m^2$.
%
Then similarly to the closed variant, we use
$\In^0(o;\ts)$ to associate instance $o$ with the input record $\In(\ts)$.
In this way,  we obtain the facts that can be produced by each instance.
Then we introduce additional rules that combine facts 
produced by different instances.
Assume we are given a query 
$Q(\tX) \la R_1(\tu_1),\dots,R_n(\tu_n)$
that we want to check for stability.
To this end, we introduce the following relations.

\begin{itemize}
	\item 
	$\Path$  
	is a relation with arity $m^2+1$ that contains 
	legal paths of an instance. 
	$\Path(o;\ttr,\tepsilon)$ is true iff
	$\ttr$ is a legal path in $\P$ 
	for instance $o$.
	For technical reasons we introduce $\epsilon$ to denote an empty transition.
	Then, $\tepsilon$ is vector of $\epsilon$
	that we use to fill in remaining positions in $\Path$
	($|\tepsilon| = m-|\ttr|$).
	\item 
	$R'$ is an auxiliary relation of size $1+m^2+\arity(R)$
	 that we introduce for each 
	$R$ in $\B$ to store $R$-facts produced by an instance.
	That is, $R'(o;\ttr,\tepsilon;\ts)$ in true iff
	$R(\ts)$ is produced after $o$ traversed $\ttr$.
	\item 
	$\Exec^j$ 
	are relations of arity  $(m^2 \times j) + j$ 
	for every $j = 1,\dots,k$ 
	that combines legal paths for different $n$ instances 
	where $n$ is is the number of atoms in the query.
	Then, $\Exec^j(o_1,\ttr_1,\dots,o_j,\ttr_j)$ is true iff
	tuple $\ttr_l$ is a legal path for instance $o_l$ 
	and if $o_h = o_l$ then $\ttr_h = \ttr_l$.
	This relation we use to record all combinations of instances 
	that can contribute to create a new query answer,
	thus if two facts are produced by the same instance  
	($o_h = o_l$) then the facts have to be produced
	on the same legal path ($\ttr_h = \ttr_l$).
\end{itemize}
Again $i$ ranges from $1,\dots,m^2$.
Now we define a program that generates those relations.
%


\myparagraph{Initialization Rules}
First we adapt the \emph{initialization rules} for a single instance:
\begin{align*}
& \State^0(o;q)\la \true 
	\text{ iff $q$ is the starting place of instance $o$, and } \\
& R^0(O;\tY) \la \In'(O;\blank), R(\tY).
\end{align*}

\myparagraph{Traversal Rules}
Similarly, we adapt traversal rules to be for a single instance, as follows:
\begin{align*}
	\State^{i+1}(O;\tT,t;p) \la 
		\State^{i}(O;\tT;q), E_t(O)
\end{align*}
for each transition $t$ from place $q$ to place $p$
and $E_t(O)$ is the same as $E_t$ except that
each atom $\In(\ts)$ is replaced with $\In^0(O,\ts)$
and $\tT$ is a vector of different variables of size $i$.

\myparagraph{Generation Rules}
For a transition $t$ with writing rule $W_t \colon R(\tu) \la B_t(\tu)$,
the generation rules become:
\begin{align*}
	 R^{i+1}(O;\tT,t;\tu) 
		&\la 
	\State^{i+1}(O;\tT,t;p), B_t(O;\tT;\tu).
\end{align*}
Here $B_t(O;\tT;\tu)$ is obtained analogously to $E_t(O)$.

\myparagraph{Copy Rules}
Then we adapt the copy rules as follows:
\begin{align*}
	 R^{i+1}(O;\tT, t ; \tU) &\la R^i(O;\tT; \tU), \text{ and } 
	\\
	 R'(O;\tT,\tepsilon; \tU) &\la R^i(O;\tT; \tU)
\end{align*}
where the size of $\epsilon$-vector $|\tepsilon| = m^2 - i$.

\myparagraph{Summary}
We denote the above program as $\PiProwocl$. 
Then we have that the following holds:

\newcommand{\tplt}{\bar t}

\begin{lemma}
Let $o$ be an instance in $\B$,
a list of transitions $\tplt$ in $\B$ of size $i$,
and $R(\tu)$ a fact, 
then the following is equivalent:

\begin{itemize}
  \item  
  after $o$ traverses $\tplt$ the fact $R(\tu)$ is produced;
  \item 
  $\PiProwocl \cup \D \models R^i(o;\tplt;\tu)$.
\end{itemize}

\end{lemma}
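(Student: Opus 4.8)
The plan is to prove the equivalence by induction on the length $i$ of the traversal $\tplt$, but to strengthen the statement so that it simultaneously tracks the position of the instance. Concretely, I would prove jointly that for every $i \le m^2$, every instance $o$, every list $\tplt = t_{h_1},\dots,t_{h_i}$ and every place $p$: (a)~$o$ can legally traverse $\tplt$ and end at $p$ iff $\PiProwocl \cup \D \models \State^i(o;\tplt;p)$, and (b)~whenever $o$ legally traverses $\tplt$, the fact $R(\tu)$ is available to $o$ after this traversal iff $\PiProwocl \cup \D \models R^i(o;\tplt;\tu)$. The joint formulation is needed because the traversal and generation rules for step $i+1$ both refer back to $\State^i$ and $R^i$; proving the $R^i$ claim in isolation is not possible.

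Before the induction I would isolate the one structural fact that makes the whole encoding correct, namely the \emph{rowo independence property}. Since $\B$ is rowo, every execution condition $E_t$ and every writing-rule body $B_t$ ranges only over the reading schema $\Sigma_r$, whose relations are never written (heads lie in $\Sigma_w$). Hence the truth of $E_t(o)$ and of $B_t(o;\cdot;\tu)$ depends only on $\D$ restricted to $\Sigma_r$ together with $o$'s input record $\In^0(o;\blank)$, and is therefore invariant under the facts produced during any traversal. This is exactly what licenses evaluating $E_t$ and $B_t$ against the fixed database in the rules, and it is the reason a single instance may be analysed in isolation (written facts can never change enabledness). The base case $i=0$ is then immediate: $\State^0(o;q)$ holds exactly when $q$ is the start place of $o$, and $R^0(o;\tY)$ holds exactly for the facts $R(\tY)\in\D$ via the copy rule guarded by $\In^0$.

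For the inductive step, assume (a) and (b) for $i$. For $\State^{i+1}$: the only rule producing $\State^{i+1}(o;\tplt,t;p)$ is the traversal rule, which fires iff $\State^i(o;\tplt;q)$ (with $t$ going $q\to p$) and $E_t(o)$; by the induction hypothesis and rowo independence this is precisely the condition that $o$, after traversing $\tplt$, sits at the source $q$ of $t$ and may traverse $t$, i.e.\ that $\tplt,t$ is a legal traversal ending at $p$. For $R^{i+1}$: a fact $R^{i+1}(o;\tplt,t;\tu)$ is derivable only via the copy rule from $R^i(o;\tplt;\tu)$, or via the generation rule from $\State^{i+1}(o;\tplt,t;p)$ together with $B_t(o;\tplt;\tu)$. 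The first case corresponds, by the induction hypothesis, to facts already available after $\tplt$; the second corresponds, using the just-established $\State^{i+1}$ correspondence and rowo independence, to the facts written when $t$ is traversed. Since $\B$ only inserts, the facts available after $\tplt,t$ are exactly the union of these two sets, so both directions of the equivalence follow.

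I expect the main obstacle to be making ``the fact $R(\tu)$ is produced'' fully precise and matching it to the monotone accumulation modelled by the copy and generation rules: one must argue that no produced fact is ever lost (immediate from monotonicity) and, more delicately, that no spurious fact is introduced by evaluating $B_t$ over the fixed $\D$ rather than over the evolving database (this is exactly rowo independence, applied to the body query). A secondary point to address is the range of $i$: the correspondence is stated for $\tplt$ of size $i$, while the program only defines $R^i$ and $\State^i$ up to $i=m^2$. I would note that the lemma is to be read for $i\le m^2$, and that this bound discards no producible fact, since in a rowo process each transition need be traversed at most once to yield its facts and at most $m^2$ steps suffice to reach every transition.
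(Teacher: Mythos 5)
Your proof is correct. The paper itself states this lemma \emph{without} proof---it is presented as an immediate consequence of how $\PiProwocl$ is constructed---so what you have written is the missing formalization rather than an alternative to an existing argument, and your structural choices are the right ones. The joint induction carrying the $\State^i$-correspondence alongside the $R^i$-correspondence is unavoidable, since the traversal, generation and copy rules at level $i+1$ refer back to both families of relations. Making the \emph{rowo independence property} explicit is the crux: it is what justifies that the encoded conditions $E_t$ and $B_t$ are evaluated over the unprimed, unindexed relations of the fixed database $\D$ (unlike in the positive-closed and acyclic-closed encodings, where they must read the evolving relations $R^i$), and it is what makes the per-instance analysis sound in the presence of other instances. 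Finally, your conditioning of claim (b) on $\tplt$ being legally traversable is not pedantry but a genuine repair: the copy rule $R^{i+1}(O;\tT,t;\tU) \la R^i(O;\tT;\tU)$ fires for every transition $t$ irrespective of enabledness, so the program derives $R^i(o;\tplt;\tu)$ for every initial fact $R(\tu)\in\D$ even along paths $\tplt$ that $o$ cannot traverse, while the left-hand side of the lemma is then vacuous; read literally, the stated equivalence would fail on such paths. Your legal-path reading is the intended one, and it is also the only one used downstream, since the test program joins the $R'$-facts with $\Exec^n$, which is built from $\Path$ and hence from $\State^i$. The bound $i\le m^2$ you note likewise matches the paper's observation that $m^2$ traversals per instance suffice in rowo.
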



\myparagraph{Combining Rules}
Then we need to combine the atoms produced by the different instances, e.g.,
\[R^i_1(o_1;\tplt_1;\ts_1)\]
  	\[\vdots\]
\[R^i_n(o_n;\tplt_n;\ts_n).\]
and ensure
that atoms produced by one instance are all produced following one path.
To do this we need to ensure that 
if $o_i = o_j  \Rightarrow \tplt_{i} = \tplt_j$.
This is achieved with the \emph{combining rules}.

First we copy all legal 
paths of an instance from $\State^i$ into $\Path$ relation:
\begin{align*}
	\Path(O;\tT,\bar \epsilon) \la \State^i(O;\tT;\blank)
		\text{ where } |\epsilon| = m^2 - i.
\end{align*}

Then we initialize $\Exec^1$ with all legal paths of an instance.
\[
    \Exec^1(O;\tT) \la \Path(O;\tT).
\]
Then we combine different paths in the following way.
If instance $O_l$ executes $\tT_l$ and the same instance executes $\tT_{i+1}$ then 
$\tT_l$ and $\tT_{i+1}$ must be the same.
This is captured with the following rules:
\begin{align*}
			\Exec^{j+1}(O_1,& \tT_1,\dots,O_l,\tT_l,\dots,O_j,\tT_j,O_{j+1},\tT_{j+1}) \la \\
  			& \Exec^j(O_1,\tT_1,\dots,O_l,\tT_l,\dots,O_j,\tT_j),\\
  			& \Path(O_{j+1};\tT_{j+1}),\\
  			& O_l = O_{j+1}, \tT_l = \tT_{j+1}
\end{align*}
for every $l \in 1,\dots,i$.

If the instance $O_{j+1}$ is different from all the other instances 
$O_1,\dots,O_j$ then executing path of $O_{j+1}$ can be any legal path

\begin{align*}
			\Exec^{j+1}(O_1,& \tT_1,\dots,O_l,\tT_l,\dots,O_j,\tT_j,O_{j+1},\tT_{j+1}) \la \\
  			& \Exec^j(O_1,\tT_1,\dots,O_l,\tT_l,\dots,O_j,\tT_j),\\
  			& \Path(O_{j+1};\tT_{j+1}),\\
  			& \neg (O_1 = O_{j+1}), \neg (O_2 = O_{j+1}), \dots, \neg (O_j = O_{j+1}).
\end{align*}

\myparagraph{$Q'$-rule}
Then, $Q'$ collects what has been produced for relations $R_1,\dots,R_n$
for the give query $Q(\tX) \la R_1(\tu_1),\dots,R_n(\tu_n)$ with the rule
\begin{align*}
			Q'(\tX) \la & \Exec^n(O_1,\tW_1,\dots,O_n,\tW_n),\\
			& R'_1(O_1;\tW_1;\tu_1),\\
			& \dots\\
			& R'_n(O_n;\tW_n;\tu_n). 
\end{align*}

\myparagraph{Test Rule}
The \emph{test rule} is then as before:
\[
  	\instable \la  Q'(\tX), \neg Q(\tX).
\]

\myparagraph{Summary}
Let us denote with $\PiPIQtestrowo$ 
the testing program for $Q$ defined above.
The program is non-recursive Datalog with stratified negation.

Let $\D_\I$ be the database that encodes the instance part $\In$
and that contains database $\D$.
\begin{theorem}
		\label{th-rowo-closed}
The following are equivalent:
\begin{itemize}
	\item $Q$ is instable in $\B$ under closed executions;
	\item $\PiProwocl \cup \D_\I \cup \PiPIQtestrowo \models \instable$.
\end{itemize}
\end{theorem}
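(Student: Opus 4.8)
The plan is to prove both directions of the equivalence by relating the single-instance encoding, already validated by the preceding lemma on $\PiProwocl$, to genuine closed executions, crucially exploiting that in a rowo \DABP the execution conditions and the queries in writing rules read only from $\Sigma_r$, which is never written. Consequently, whether a transition is enabled for an instance $o$ depends solely on $\D$ restricted to $\Sigma_r$ and on the input record of $o$; it is independent of what the other instances do and of the order in which steps are interleaved. I would isolate this \emph{rowo independence} as the backbone of the argument.

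For the direction from instability to entailment, I would assume $Q$ is instable under closed executions, so some closed execution $\Upsilon$ yields a database $\D'$ with a new answer $\ta \in Q(\D')\setminus Q(\D)$. Fixing a witnessing instantiation $R_1(\ts_1),\dots,R_n(\ts_n)$ of the query body, each atom $R_j(\ts_j)$ either already belongs to $\D$ or is produced along the path $\tplt_j$ that some instance $o_j$ follows in $\Upsilon$. By the single-instance lemma, $\PiProwocl \cup \D$ derives each $R^{i_j}(o_j;\tplt_j;\ts_j)$, and hence the padded $R'$-atoms. Since every instance follows a unique trajectory in $\Upsilon$, whenever $o_i=o_j$ the recorded paths coincide, so the consistency condition built into the $\Exec^n$ rules is met and $\Exec^n(o_1,\tplt_1,\dots,o_n,\tplt_n)$ is derived. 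The $Q'$-rule then yields $Q'(\ta)$, and because $\ta\notin Q(\D)$ the test rule fires $\instable$.

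For the converse, I would assume $\PiProwocl \cup \D_\I \cup \PiPIQtestrowo \models \instable$, so $Q'(\ta)$ holds for some $\ta\notin Q(\D)$, witnessed by an $\Exec^n$ tuple with instances $o_1,\dots,o_n$, paths $\tplt_1,\dots,\tplt_n$, and the corresponding $R'$-atoms. The combining rules guarantee $o_i=o_j \Rightarrow \tplt_i=\tplt_j$, so each distinct instance is assigned a single consistent path. I would then assemble one closed execution by letting each distinct instance traverse its assigned path in turn, concatenating these per-instance runs. Invoking rowo independence, every traversal stays enabled regardless of the concatenation order, so the result is a legitimate closed execution; by the single-instance lemma it produces all the $R_j(\ts_j)$, hence the new answer $\ta$, proving $Q$ instable.

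The main obstacle is this composition step in the converse: arguing that independently recorded single-instance paths can be merged into one valid closed execution. This is exactly where the rowo restriction is indispensable, since without read-after-write independence an instance's production could enable or disable transitions for another, the interleaving order would matter, and the per-instance analysis would no longer be sound. I would therefore make the independence claim explicit as a small lemma: the facts an instance can produce along a given path are invariant under the presence and activity of the other instances, because all conditions evaluate only over the fixed read-only part $\Sigma_r$ of the database.
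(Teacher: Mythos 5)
Your proof is correct and takes essentially the same route as the paper: there the theorem is presented as a direct consequence of the encoding construction, namely the single-instance lemma for $\PiProwocl$, the combining rules ($\Path$, $\Exec^n$) that force one consistent path per instance, and the test rule, with exactly the rowo-independence property you isolate (conditions and rule bodies read only the never-written schema $\Sigma_r$ plus the instance's own input record) justifying both the per-instance decomposition and the recomposition of per-instance runs into a single closed execution. The only point to fold into your appeal to the single-instance lemma is that it covers paths of length at most $m^2$, so the path an instance actually follows in $\Upsilon$ must first be replaced by one of length at most $m^2$ producing the same required facts, which is precisely the paper's preceding observation that $m^2$ traversals suffice in the rowo setting.
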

%


\paragraph{Rowo \DABP{s}}
\label{subsec:rowo:normal:arbitrary:open}
{
For arbitrary rowo, similarly to the positive
variants, the encoding is obtained combining the encodings for the fresh
and the closed variants.
 
To combine what comes from the instances in the process and the new ones it is enough to add rules that will combine 
the program for closed ($\PiPQrowocl$) and 
fresh($\PiPQrowofr$).

To this end, for a given query 
$Q(\tX) \la R_1(\tu_1),\dots,R_n(\tu_n)$, 
we introduce  relations: 
\begin{itemize}
  \item 
  $B^i_Q$ of arity 
  $\arity(R_1)+\dots+\arity(R_n)$ 
  that contains on the first $i$ arguments what comes from a mixture of
  existing and new process instances while the others come only from existing process instances, for $i=1,\dots,n$.
\end{itemize}

\paragraph{Encoding into Non-Recursive Datalog}

Let $\B =\tup{\P,\I,\D}$ be a rowo  \DABP.

Now we define rules that compute relations introduced above.

First we consider what is produced by the running instances 
\begin{align*}
	B^0_Q(\tY_1,\dots,\tY_n) \la & \Exec^n(O_1,\tW_1,\dots,O_n,\tW_n),\\
			& R'_1(O_1;\tW_1;\tu_1),\\
			& \dots\\
			& R'_n(O_n;\tW_n;\tu_n). 
\end{align*}
	
Then, for the $i$-th atom we both consider the case in which it was produced by a new instance (1)
and the case it was produced by the instances already in the process (2). These cases are added to the combinations
obtained for the atoms from $1$ to $i-1$. We do this for every $i = 1,\dots,n$.
\begin{align}
	B^{i}_Q(\dots,& \tY_{i-1},\tY_{i},\tY_{i+1},\dots) \la\\ \nonumber
		& B^{i-1}_Q(\dots,\tY_{i-1},\blank,\tY_{i+1},\dots), R'_i(\tY_i) \\
	B^{i}_Q(\dots,& \tY_{i-1},\tY_{i},\tY_{i+1},\dots) \la\\ \nonumber
		& B^{i-1}_Q(\dots,\tY_{i-1},\tY_{i},\tY_{i+1},\dots).
\end{align}

Then, we add the $Q'$-\emph{rule} to collect what has been produced
by the process for relations $R_1(\tu_1),\dots,R_n(\tu_n)$ as follows:
\begin{align*}
Q'(\tX) \la B^n_Q(\tY_1, \dots, \tY_n).
\end{align*}

The above rules extend the testing program 
$\PiPIQtestrowo$  for normal cyclic arbitrary closed.
We denote the new testing program with 
$\PiPIQtestrowoOpen$.
\begin{itemize}
	\item $Q$ is instable in $\B$;
	\item $\PiProwocl \cup \PiProwofr \cup
		 \D_\I  \cup \PiPIQtestrowoOpen \models \instable$.
\end{itemize}

Then we set $\PiPQrowo = \PiProwocl \cup \PiProwofr \cup \PiPIQtestrowoOpen$,
and the claim follows from there.

\subsection{Proof of Proposition~\ref{prop-rowo:generability:complexity}}
\begin{proof}
   Assume the instance $o$ is at place $p$ 
   and it has an input record $I(\bar s)=M_S(o)$.
   To show the claim it is sufficient 
   to guess a closed execution $\Upsilon$ 
   consisting of the traversals by $o$,  
   and then verify whether atoms $A_1,\ldots,A_n$ can
   be produced by such execution.
   In the case of singleton rowo \DABP{s} under closed executions,
   a closed execution is uniquely determined by 
   a path in $\P$. 
   Thus, we guess a path $t_1,\ldots,t_m$ in $\P$  
   that starts in $p$.
   This guess is polynomial in the size of $\P$.
   For all transitions on the path, we further guess
    assignments 
   $\alpha_1,\ldots,\alpha_m$ for the execution conditions
   $E_{t_1},\ldots,E_{t_m}$. 
   Then for each writing rule $W_{t_i}$ of the transition $t_i$
   we guess up to $n$ assignments 
   $\beta^l_{t_i}$ for $1 \le l \le n$, 
   because a rule may need to produce more than 
   one fact, but no more than the size of the query.
   In principle, only a subset of
   $W_{t_1},$ \ldots, $W_{t_m}$ may be
   needed to produce atoms  $A_1,\ldots,A_n$.
   Wlog we can guess the assignments for all.
   Now we verify. 
   Firstly, we verify whether
   the path can be traversed by the instance.
   This is, if for every execution condition $E_{t_i}$
   the ground query $\alpha_i E_{t_i}$  evaluates to 
   $\true$  in $\D\cup\set{\In(\bar s)}$.
   Secondly, we verify  whether $A_1,\ldots,A_n$
   are produced on the path by checking if 
   for every $A_i$ there exist a writing rule
   $W_{t_j} \col A_{t_j} \la B_{t_j}$ 
   and assignment  $\beta^l_{t_j}$ such that 
   the ground query $\beta^l_{t_j} B_{t_j}$ 
   evaluates to $\true$ in $\D\cup\set{\In(\bar s)}$ and 
   $A_i$ is equal with the head of the writing rule
    $\beta^l_{t_j} A_{t_j}$.
   Since all guesses and checks are polynomial
   in the size of $\B$ the claim
   follows directly.
\end{proof}

\end{document}